\documentclass[preprint,3p,12pt]{elsarticle}
\usepackage{lineno}
\modulolinenumbers[5]

\journal{Journal of the Franklin Institute}


\usepackage{lmodern}
\usepackage[utf8]{inputenc}
\usepackage[american]{babel}
\addto\captionsamerican{} 

\usepackage{amsthm}
\usepackage{amsmath}
\usepackage{amssymb}
\usepackage{amsfonts}
\usepackage{bm}
\usepackage{hhline}
\usepackage{csquotes}
\usepackage{color}
\usepackage{import}
\usepackage{algorithm}
\usepackage{algpseudocode}
\usepackage{dsfont}
\usepackage{multirow}
\usepackage{tensor}
\usepackage{subfig}

\usepackage{graphicx}
\usepackage{microtype}
\usepackage{dsfont}

\makeatletter
\usepackage[colorlinks=true,linkcolor=red]{hyperref}
\makeatother

\biboptions{numbers,sort&compress}


\newtheorem{theorem}{Theorem}
\newtheorem{proposition}{Proposition}

\allowdisplaybreaks

\newcommand{\textred}[1]{#1}









\bibliographystyle{elsarticle-num}


\newcommand{\ten}[1]{\ensuremath{{\cdot}10^{#1}}}


\newcommand{\frI}{\ensuremath{\mathcal{I}}}
\newcommand{\frL}{\ensuremath{\mathcal{L}}}
\newcommand{\frB}{\ensuremath{\mathcal{B}}}
\newcommand{\frC}[1]{\ensuremath{{\mathcal{C}_{#1}}}}
\newcommand{\frA}[1]{\ensuremath{{\mathcal{A}_{#1}}}}
\newcommand{\RIB}{\ensuremath{\bm{R}^\mathcal{I}_\mathcal{B}}}
\newcommand{\RIL}{\ensuremath{\bm{R}^\mathcal{I}_\mathcal{L}}}
\newcommand{\RLA}[1]{\ensuremath{\bm{R}^\mathcal{L}_{\mathcal{A}_{#1}}}}
\newcommand{\RAB}[1]{\ensuremath{\bm{R}^{\mathcal{A}_{#1}}_\mathcal{B}}}
\newcommand{\RLB}{\ensuremath{\bm{R}^\mathcal{L}_\mathcal{B}}}
\newcommand{\RBC}[1]{\ensuremath{\bm{R}^\mathcal{B}_{\mathcal{C}_{#1}}}}
\newcommand{\RBA}[1]{\ensuremath{\bm{R}^\mathcal{B}_{\mathcal{A}_{#1}}}}
\newcommand{\RAA}[2]{\ensuremath{\bm{R}^{\mathcal{A}_{#1}}_{\mathcal{A}_{#2}}}}
\newcommand{\RAC}[2]{\ensuremath{\bm{R}^{\mathcal{A}_{#1}}_{\mathcal{C}_{#2}}}}

\newcommand{\dBC}[1]{\ensuremath{\bm{d}^\mathcal{B}_{\mathcal{C}_{#1}}}}
\newcommand{\dLA}[1]{\ensuremath{\bm{d}^\mathcal{L}_{\mathcal{A}_{#1}}}}
\newcommand{\dBA}[1]{\ensuremath{\bm{d}^\mathcal{B}_{\mathcal{A}_{#1}}}}
\newcommand{\dAB}[1]{\ensuremath{\bm{d}^{\mathcal{A}_{#1}}_{\mathcal{B}}}}
\newcommand{\dAA}[2]{\ensuremath{\bm{d}^{\mathcal{A}_{#1}}_{\mathcal{A}_{#2}}}}
\newcommand{\dAC}[2]{\ensuremath{\bm{d}^{\mathcal{A}_{#1}}_{\mathcal{C}_{#2}}}}

\newcommand{\pLL}{\ensuremath{\bm{p}^\mathcal{L}_\mathcal{L}}}
\newcommand{\pii}[1]{\ensuremath{\bm{p}^{\mathcal{C}_{#1}}_{\mathcal{C}_{#1}}}}
\newcommand{\pIL}{\ensuremath{\bm{p}^\mathcal{I}_\mathcal{L}}}
\newcommand{\pIi}[1]{\ensuremath{\bm{p}^\mathcal{I}_{\mathcal{C}_{#1}}}}
\newcommand{\dpIL}{\ensuremath{\dot{\bm{p}}^\mathcal{I}_\mathcal{L}}}
\newcommand{\dpIi}[1]{\ensuremath{\dot{\bm{p}}^\mathcal{I}_{\mathcal{C}_{#1}}}}

\newcommand{\oIi}[1]{\ensuremath{\bm{o}^\mathcal{I}_{\mathcal{C}_#1}}}
\newcommand{\doIL}{\ensuremath{\dot{\bm{o}}^\mathcal{I}_\mathcal{L}}}
\newcommand{\doIi}[1]{\ensuremath{\dot{\bm{o}}^\mathcal{I}_{\mathcal{C}_#1}}}

\newcommand{\dxi}{\ensuremath{\dot{\bm{\xi}}}}
\newcommand{\deta}{\ensuremath{\dot{\bm{\eta}}}}
\newcommand{\dgamma}{\ensuremath{\dot{\bm{\gamma}}}}
\newcommand{\gone}{\ensuremath{{\gamma_1}}}

\newcommand{\gtwo}{\ensuremath{{\gamma_2}}}

\newcommand{\aR}{\ensuremath{{\alpha_\text{R}}}}
\newcommand{\daR}{\ensuremath{\dot{\alpha}_\text{R}}}
\newcommand{\aL}{\ensuremath{{\alpha_\text{L}}}}
\newcommand{\daL}{\ensuremath{\dot{\alpha}_\text{L}}}

\newcommand{\wLIL}{\ensuremath{\bm{\omega}^\mathcal{L}_{\mathcal{I}\mathcal{L}}}}

\newcommand{\wALA}[1]{\ensuremath{\bm{\omega}^{\mathcal{A}_{#1}}_{\mathcal{L} \mathcal{A}_{#1}}}}
\newcommand{\wBAB}[1]{\ensuremath{\bm{\omega}^\mathcal{B}_{\mathcal{A}_{#1}\mathcal{B}}}}
\newcommand{\wBLB}{\ensuremath{\bm{\omega}^\mathcal{B}_{\mathcal{L}\mathcal{B}}}}
\newcommand{\wBIB}{\ensuremath{\bm{\omega}^\mathcal{B}_{\mathcal{I}\mathcal{B}}}}
\newcommand{\wABA}[1]{\ensuremath{\bm{\omega}^{\mathcal{A}_{#1}}_{\mathcal{B} \mathcal{A}_{#1}}}}
\newcommand{\wiAi}[1]{\ensuremath{\bm{\omega}^{\mathcal{C}_{#1}}_{\mathcal{A}_{#1} \mathcal{C}_{#1}}}}
\newcommand{\wiBi}[1]{\ensuremath{\bm{\omega}^{\mathcal{C}_{#1}}_{\mathcal{B} \mathcal{C}_{#1}}}}

\newcommand{\mL}{\ensuremath{m_\mathcal{L}}}
\newcommand{\mi}[1]{\ensuremath{m_{#1}}}
\newcommand{\IL}{\ensuremath{\bm{I}_\mathcal{L}}}
\newcommand{\Ii}[1]{\ensuremath{\bm{I}_{#1}}}
\newcommand{\Ji}[1]{\ensuremath{\bm{J}_{#1}}}
\newcommand{\Ei}[1]{\ensuremath{\bm{E}_{#1}}}
\newcommand{\Di}[1]{\ensuremath{\bm{D}_{#1}}}

\newcommand{\half}{\ensuremath{\frac{1}{2}}}

\newcommand{\dm}{\ensuremath{\bm{d}_m}}

\newcommand{\Kin}{\ensuremath{\mathcal{K}}}
\newcommand{\KinL}{\ensuremath{\mathcal{K}_\mathcal{L}}}

\newcommand{\Pot}{\ensuremath{\mathcal{U}}}
\newcommand{\PotL}{\ensuremath{\mathcal{U}_\mathcal{L}}}
\newcommand{\Poti}[1]{\ensuremath{\mathcal{U}_{#1}}}

\newcommand{\Weta}{\ensuremath{\bm{W}_{\bm{\eta}}}}
\newcommand{\ay}{\ensuremath{\bm{a}_y}}
\newcommand{\az}{\ensuremath{\bm{a}_z}}

\newcommand{\eye}[1]{\ensuremath{\mathds{I}_{#1\times#1}}}

\newcommand{\zeros}[2]{\ensuremath{\bm{0}_{#1\times#2}}}
\newcommand{\ones}[2]{\ensuremath{\bm{1}_{#1\times#2}}}

\newcommand{\ktaub}{\ensuremath{\frac{k_\tau}{b}}}
\newcommand{\lambdaR}{\ensuremath{\lambda}_\text{R}}
\newcommand{\lambdaL}{\ensuremath{\lambda}_\text{L}}

\newcommand{\LambdaR}{\ensuremath{\bm{\Lambda}_\text{R}}}
\newcommand{\LambdaL}{\ensuremath{\bm{\Lambda}_\text{L}}}
\newcommand{\GammaR}{\ensuremath{\bm{\Gamma}_\text{R}}}
\newcommand{\GammaL}{\ensuremath{\bm{\Gamma}_\text{L}}}

\newcommand{\Jac}{\ensuremath{\mathcal{\bm{J}}}}
\newcommand{\GJac}{\ensuremath{\mathcal{\bm{W}}}}

\newcommand{\gforce}{\ensuremath{\bm{\vartheta}}}

\newcommand{\fR}{\ensuremath{\bm{f}_\text{R}}}
\newcommand{\fL}{\ensuremath{\bm{f}_\text{L}}}
\newcommand{\tauaR}{\ensuremath{\bm{\tau}_{\aR}}}
\newcommand{\tauaL}{\ensuremath{\bm{\tau}_{\aL}}}

\newcommand{\infR}{\ensuremath{f_\text{R}}}
\newcommand{\infL}{\ensuremath{f_\text{L}}}
\newcommand{\intauaR}{\ensuremath{\tau_{\aR}}}
\newcommand{\intauaL}{\ensuremath{\tau_{\aL}}}

\newcommand{\setreal}{\ensuremath{\mathbb{R}}}
\newcommand{\setrealmat}[2]{\ensuremath{\mathbb{R}^{#1\times#2}}}
\newcommand{\setrealvec}[1]{\ensuremath{\mathbb{R}^{#1}}}



\newcommand{\zdomain}{\ensuremath{\varsigma}}
\newcommand{\trace}[1]{\ensuremath{\text{tr}\left\{{#1}\right\}}}

\newcommand{\Achi}{\ensuremath{\bm{A}_{\bm{\chi}}}}
\newcommand{\Bchi}{\ensuremath{\bm{B}_{\bm{\chi}}}}
\newcommand{\Fchi}{\ensuremath{\bm{F}_{\bm{\chi}}}}



\newcommand{\intval}[1]{\ensuremath{{\tensor*[_{\llcorner}]{#1}{^{\urcorner}}}}}
\newcommand{\lbound}[1]{\ensuremath{\underline{#1}}}
\newcommand{\ubound}[1]{\ensuremath{\overline{#1}}}
\newcommand{\midpoint}[1]{\ensuremath{\text{mid}({#1})}}

\newcommand{\diam}[1]{\ensuremath{\text{diam}({#1})}}

\newcommand{\setintval}{\ensuremath{\mathbb{I} \mathbb{R}}}

\newcommand{\ubox}{\ensuremath{\mathbb{B}}}
\newcommand{\zonotope}{\ensuremath{\mathbb{Z}}}
\newcommand{\zonotopefamily}{\ensuremath{{\zonotope_\intval{}}}}
\newcommand{\strip}{\ensuremath{\mathbb{S}}}

\newcommand{\iextension}[2]{\ensuremath{\square_\text{#1} \left\{ #2 \right\} }}
\newcommand{\zinclusion}[1]{\ensuremath{\diamond \{ {#1} \}}}

\newcommand{\xiB}{\ensuremath{{\bm{\xi}_\frB}}}
\newcommand{\phiB}{\ensuremath{{\phi_\frB}}}
\newcommand{\thetaB}{\ensuremath{{\theta_\frB}}}
\newcommand{\psiB}{\ensuremath{{\psi_\frB}}}
\newcommand{\etaB}{\ensuremath{{\bm{\eta}_\frB}}}

\newcommand{\Anu}{\ensuremath{\bm{A}_{\bm{\nu}}}}
\newcommand{\Bnu}{\ensuremath{\bm{B}_{\bm{\nu}}}}

\newcommand{\Hnu}{\ensuremath{\bm{H}_{\bm{\nu}}}}

\newcommand{\expect}[1]{\ensuremath{\mathcal{E}\left[#1\right]}}

\newcommand{\setI}{\ensuremath{\mathbb{I}}}



\newcommand{\deps}{\ensuremath{\varepsilon}}
\newcommand{\real}[1]{\ensuremath{\text{Re}(#1)}}
\newcommand{\imag}[1]{\ensuremath{\text{Im}(#1)}}



\newcommand{\kwix}{\ensuremath{\bm{\kappa}^\frI_{\text{W},x}}}
\newcommand{\kwiy}{\ensuremath{\bm{\kappa}^\frI_{\text{W},y}}}

\pdfsuppresswarningpagegroup=1

\begin{document}

\begin{frontmatter}


\title{Suspended Load Path Tracking Control Using a Tilt-rotor UAV Based on Zonotopic State Estimation\tnoteref{mytitlenote}}
\tnotetext[mytitlenote]{{\copyright} 2018. This manuscript version is made available under the CC-BY-NC-ND 4.0 license \url{https://creativecommons.org/licenses/by-nc-nd/4.0/}. This work was supported by the Brazilian agencies CNPq, CAPES, and FAPEMIG. The authors would like to acknowledge the InSAC - National Institute of Science and Technology for Cooperative Autonomous Systems Applied to Security and Environment.}


\author[myfirstaddress]{Brenner S. Rego\corref{mycorrespondingauthor}}
\cortext[mycorrespondingauthor]{Corresponding author}
\ead{brennersr7@ufmg.br}

\author[myfirstaddress,mysecondaddress]{Guilherme V. Raffo}
\ead{raffo@ufmg.br}

\date{July 18, 2018}

\address[myfirstaddress]{Graduate Program in Electrical Engineering, Federal University of Minas Gerais, Belo Horizonte, MG 31270-901, Brazil}
\address[mysecondaddress]{Department of Electronics Engineering, Federal University of Minas Gerais, Belo Horizonte, MG 31270-901, Brazil}

\begin{abstract}
This work addresses the problem of path tracking control of a suspended load using a tilt-rotor UAV. The main challenge in controlling this kind of system arises from the dynamic behavior imposed by the load, which is usually coupled to the UAV by means of a rope, adding unactuated degrees of freedom to the whole system. Furthermore, to perform the load transportation it is often needed the knowledge of the load position to accomplish the task. Since available sensors are commonly embedded in the mobile platform, information on the load position may not be directly available. To solve this problem in this work, initially, the kinematics of the multi-body mechanical system are formulated from the load's perspective, from which a detailed dynamic model is derived using the Euler-Lagrange approach, yielding a highly coupled, nonlinear state-space representation of the system, affine in the inputs, with the load's position and orientation directly represented by state variables. A zonotopic state estimator is proposed to solve the problem of estimating the load position and orientation, which is formulated based on sensors located at the aircraft, with different sampling times, and unknown-but-bounded measurement noise. To solve the path tracking problem, a discrete-time mixed $\mathcal{H}_2/\mathcal{H}_\infty$ controller with pole-placement constraints is designed with guaranteed time-response properties and robust to unmodeled dynamics, parametric uncertainties, and external disturbances. Results from numerical experiments, performed in a platform based on the Gazebo simulator and on a Computer Aided Design (CAD) model of the system, are presented to corroborate the performance of the zonotopic state estimator along with the designed controller.
\end{abstract}


\begin{keyword}
Load transportation, Tilt-rotor UAV, Zonotopic state estimation, Path tracking control 
\end{keyword}

\end{frontmatter}


\section{Introduction} \label{sec:introduction}

The problem of slung load transportation arises in a variety of essential tasks, such as transportation of containers in harbors \cite{Ngo2012}, aerial delivery of supplies in search-and-rescue missions \cite{Bernard2011}, and landmine detection \cite{BisgaardThesis}. The suspended load is usually connected to the mobile platform by means of a rope, considerably changing its dynamic behavior and adding unactuated degrees of freedom to the whole system. Moreover, the rope is a non-rigid body and is not always taut, which increases the task challenge. Several studies can be found in the literature, concerning different modeling approaches and control strategies for load transportation using overhead cranes \cite{Wu2015}, robotic manipulators \cite{Chen2007}, and aerial vehicles \cite{Bisgaard2009b,Fusato2001}.

An important issue in slung load transportation is the recurrent necessity of knowing the load position to accomplish the task, \textred{mainly when precise positioning of the load is required}. Since available sensors are often embedded in the mobile platform, information on the load position may not be directly obtained. The problem of estimating the load position then arises, being commonly addressed through visual systems and Bayesian state estimators. The Kalman filter is employed in \cite{JainThesis} for state estimation of a quadrotor unmanned aerial vehicle (UAV) with suspended load, in which measurements are provided by external cameras and sensors embedded at the aircraft. \textred{Considering a helicopter with suspended load platform, \cite{Bisgaard2007a} designs a data fusion algorithm based on the unscented Kalman filter (UKF) to estimate the load's position and velocity with measurements from an \textred{inertia measurement unit} (IMU) and a vision system, both located at the helicopter}. In \cite{Bisgaard2007b}, algorithms based on the UKF are proposed for estimation of the full state vector of a helicopter with suspended load, with measurements provided by a Global Positioning System (GPS), a magnetometer, a camera, an IMU on the helicopter and another one on the load. Kalman filtering algorithms require knowledge on statistical properties of existing process and measurement disturbances, which may not be easily obtained. In view of the exposed, the present work pursues set-membership estimation approaches, which require knowledge only on bounds of existing disturbances. These techniques are based on the construction of sets that include, with guarantee, the system states consistent with available measurements \cite{Alamo2005a,Le2013}. This work extends the zonotopic state estimation strategy proposed in \cite{Alamo2005a} to receive measurements provided by sensors with different sampling times.

The versatility and autonomous operation of UAVs are useful advantages in aerial load transportation. The main control design objectives in the literature include path tracking of the UAV with load swing attenuation~\cite{Bisgaard2009b,Palunko2012b,Almeida2015b,Raffo2016,Santos2016b,Oktay2013,Liang2018,Sanchez2017}, obstacle avoidance~\cite{CourHarbo2009,Tang2015}, transportation by multiple aircrafts~\cite{Bernard2009,Lee2013}, and trajectory tracking of the suspended load~\cite{Palunko2013,Sreenath2013b,Pereira2016b}. \textred{This paper focuses on the latter, which is the appropriate goal in tasks requiring precise maneuvering of the load. In contrast to the swing attenuation problem, the knowledge on the load position is usually required for such purpose}. A model-free, open-loop approach based on trajectory generation by machine learning is proposed in \cite{Palunko2013} for path tracking of a suspended load using a quadrotor UAV. However, the lack of a feedback structure prevents compensation of external disturbances affecting the load. A nonlinear cascade control strategy is designed in \cite{Sreenath2013b}, based on model decoupling, for trajectory tracking of a suspended load using a quadrotor UAV. Nevertheless, compensation of unmodelled dynamics and external disturbances is not addressed in the proposed strategy, and convergence issues are well known for cascade control systems. Assuming the aircraft as a system actuated by total thrust and orientation, \cite{Pereira2016b} \textred{proposes} another nonlinear solution to the problem of suspended load path tracking using a quadrotor UAV. Nevertheless, such assumption is valid for a very limited repertory of mechanical systems, which do not comprise the \textred{convertible} UAV configuration addressed by this work.

Most of the unmanned aerial vehicles used in load transportation tasks are in helicopter and quadrotor configurations. These rotary-wing UAVs have vertical take-off and landing (VTOL) and hovering capabilities, and achieve high maneuverability in low velocities. However, due to their limited flight envelope, such UAVs are not appropriate for missions that require long distance traveling, such as deployment of supplies to risky zones. To overcome such constraint, \textred{researches} are looking into the design of small-scale \textred{convertible} aircrafts, being the tilt-rotor configuration among the most popular ones \cite{Amiri2011,Park2013,Cardoso2016}. Provided with both fixed and rotary wings, tilt-rotor UAVs achieve an enlarged flight envelope by switching between helicopter and airplane flight-modes \textred{through tilting of the thrusters}. However, such advantages come with several design and control challenges, since these aircrafts are complex, underactuated mechanical systems with highly coupled dynamics. Additionally, when these UAVs are connected to a payload through a rope, the dynamic behavior of the system varies due to the load’s swing, which can destabilize the whole system if it is not well attenuated. A model predictive control (MPC) strategy is designed in \cite{Santos2016b} for path tracking of a tilt-rotor UAV with suspended load, in which the aircraft tracked a desired trajectory, while the load remained stable. A cascade strategy composed of three levels of feedback linearization controllers is proposed in \cite{Almeida2015b}, for trajectory tracking of a tilt-rotor UAV with load swing attenuation. The problem of path tracking of a suspended load using a tilt-rotor UAV is solved in \cite{Santos2017}, in which a model predictive controller is designed, taking into account time-varying load's mass and rope's length, and estimating the load's position and orientation by means of an unscented Kalman filter. However, the state estimation is not guaranteed, and nothing can be said about the transient response of the closed-loop system. The present work addresses the problem of trajectory tracking of a suspended load using a tilt-rotor UAV as mobile platform, with guaranteed time-response properties, compensation of unmodelled dynamics and external disturbances, and state estimation of the load's position and orientation, based on the set-membership approach to perform the task.

This paper is an extended, consolidated version of the previous work presented in \cite{Rego2016c}. To solve the aforementioned challenges, this work develops the whole-body dynamic equations of a tilt-rotor UAV with suspended load, from the perspective of the load. The position and orientation of the latter are chosen as degrees of freedom of the system, yielding a nonlinear state-space representation with these variables among the system states. As shown in \cite{Rego2016c,Santos2017}, this choice allows state-feedback control strategies to steer the trajectory of the load with respect to an inertial reference frame. In contrast to previous works, a reduced number of assumptions is made with respect to the physical system. This work designs a discrete-time state-feedback mixed $\mathcal{H}_2/\mathcal{H}_\infty$ control strategy with an enlarged domain of attraction for path tracking of the suspended load with disturbance rejection and guaranteed time-response properties, taking into account the desired accelerations of the load in the control design through an uncertain linear parameter-varying framework. In addition, this work proposes a zonotopic state estimation strategy to estimate the load's position and orientation when available measurements are provided by sensors with different sampling times. To demonstrate and compare the performance of the proposed state estimator, a Kalman filter is also designed. The performance of the proposed strategies are demonstrated through numerical experiments, performed in a platform based on the Gazebo simulator and on a Computer Aided Design (CAD) model of the system. The contributions of this work can be summarized as: (i) a detailed modeling from the load's point of view that comprises the dynamic coupling of the load and the tilt-rotor UAV, with few assumptions on the system, leading to an input-affine state-space representation with the load's position and orientation as state variables; (ii) a set-membership state estimation strategy based on zonotopes to provide the load's position and orientation, formulated for measurements with different sampling times and unknown-but-bounded disturbances; (iii) a single-loop state-feedback control strategy for trajectory tracking of the suspended load, robust to unmodeled dynamics, parametric uncertainties and external disturbances, with enlarged domain of attraction; and (iv) formulation of pole placement constraints in discrete-time for overshoot requirements.

This paper is organized as follows: the dynamic equations of the tilt-rotor UAV with suspended load are developed in Section \ref{sec:modeling}, from the perspective of the load; Section \ref{sec:estimation} proposes the zonotopic state estimation strategy to provide the entire state vector, formulated for sensors with different sampling times, and also the derivation of a Kalman filter is presented for comparison purposes; Section \ref{sec:control} presents the design of the state-feedback mixed $\mathcal{H}_2/\mathcal{H}_\infty$ control strategy with constraints in pole placement for path tracking of the suspended load, with feedback from estimated states; Section \ref{sec:results} presents results from numerical experiments to demonstrate and compare the performance of the zonotopic state estimator along with the designed controller; and Section \ref{sec:conclusions} concludes the work.

\section{System modeling from the perspective of the load} \label{sec:modeling}

This section develops the equations of motion of the tilt-rotor UAV with suspended load, formulated from the perspective of the load. The system is regarded as a multi-body mechanical system, and its dynamic equations are obtained through the Euler-Lagrange formulation. The dynamic coupling between the aircraft and the load is taken into account naturally. By choosing the latter's position and orientation as degrees of freedom, nonlinear state-space equations are obtained with these coordinates represented by state variables. The aircraft's position and orientation are described only with respect to the load.

\subsection{System description} \label{sec:modeling_system}

The tilt-rotor UAV with suspended load is regarded as a multi-body mechanical system composed of four rigid bodies: (i) the aircraft's main body, composed of Acrylonitrile Butadiene Styrene (ABS) structure, landing skids, batteries, instrumentation and electronics; (ii) the right thruster group, composed of the right thruster and a tilting mechanism (a revolute joint); (iii) the left thruster group, composed of the left thruster and a tilting mechanism; and (iv) the suspended load group, composed of the load and the rope. The actuators of the system are the aircraft's thrusters and servomotors. The Computer Aided Design (CAD) model of the tilt-rotor UAV with suspended load is shown in Figure \ref{fig:modeling_tiltrotor}. 
\begin{figure}[!htbp]
	\centering{
		\includegraphics[width = 0.35\textwidth]{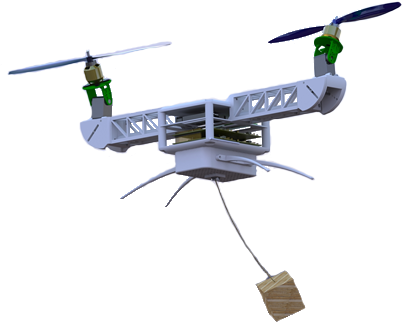}
		\caption{The tilt-rotor UAV with suspended load (CAD model).}\label{fig:modeling_tiltrotor}}
\end{figure}

For control purposes, the rope is assumed to be rigid and massless. Moreover, the aircraft's center of mass is displaced from its geometric center in order to improve pitch moment and to yield non-null equilibria for the angular positions of the tilting mechanisms and pitch angle. This mechanical feature improves the controllability of the aircraft in hover flight, yielding horizontal projections of the thrust forces without tilting the thrusters.

\subsection{Kinematics from the perspective of the load}

The approach presented in this paper consists in formulating the forward kinematics of the system considering the load as a free rigid body, while the tilt-rotor UAV as a multi-link system rigidly coupled to it. For such objective, six reference frames are defined, shown in Figure \ref{fig:modeling_kinematics}: (i) the inertial reference frame, $\frI$; (ii) the suspended load group center of mass frame, $\frL$; (iii) the aircraft's geometric center frame, $\frB$; (iv) the main body center of mass frame, $\frC{1}$; (v) the right thruster group center of mass frame, $\frC{2}$; and (vi) the left thruster group center of mass frame, $\frC{3}$. Three auxiliary frames are also defined: (i) a reference frame attached to the point of connection of the rope to the aircraft, $\frA{1}$; (ii) a reference frame attached to the tilting axis of the right servomotor, $\frA{2}$; and (iii) a reference frame attached to the tilting axis of the left servomotor, $\frA{3}$. 
\begin{figure}[ht]
	\centering{
		\def\svgwidth{0.6\textwidth}
		\import{Figures_02Modeling/}{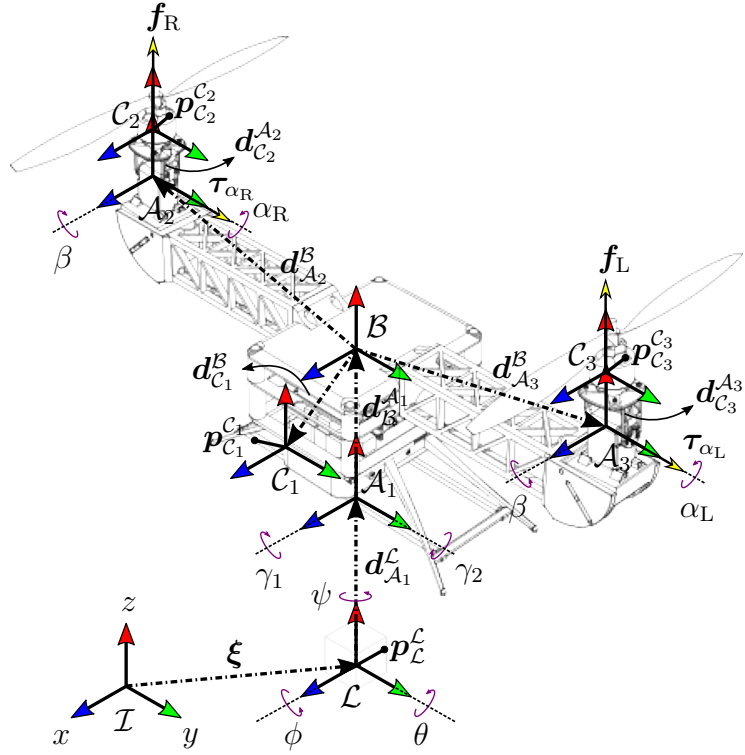}
		\caption{Kinematic definitions, input forces and torques.}\label{fig:modeling_kinematics}}
\end{figure}

The position of the load with respect to the inertial frame $\frI$ is denoted by $\bm{\xi} \triangleq [ x \;\, y \;\, z ]^T$. The displacement vector from $\frL$ to $\frA{1}$ corresponds to the rope, and is defined in $\frL$ by $\dLA{1} \triangleq [ 0 \;\, 0 \;\, l ]^T$, where $l$ is the rope's length. The displacement vectors from $\frA{1}$ to $\frB$, from $\frB$ to $\frC{1}$, from $\frB$ to $\frA{i}$, and from $\frA{i}$ to $\frC{i}$ are model parameters of the tilt-rotor UAV, denoted by $\dAB{1}$, $\dBC{1}$, $\dBA{i}$, $\dAC{i}{i}$, respectively, expressed in the respective previous frames, with $i \in \{2,3\}$.

The orientation of the load with respect to $\frI$ is parametrized by Euler angles, $\bm{\eta} \triangleq [ \phi \; \theta \; \psi ]^T$, using the $ZYX$ convention about local axes. The associated rotation matrix is defined by
\begin{equation}
	\RIL \triangleq \bm{R}_{z,\psi} \bm{R}_{y,\theta} \bm{R}_{x,\phi} =  \begin{bmatrix}
		\text{c}_\psi \text{c}_\theta & \text{c}_\psi \text{s}_\theta \text{s}_\phi - \text{s}_\psi \text{c}_\phi & \text{c}_\psi \text{s}_\theta \text{c}_\phi + \text{s}_\psi \text{s}_\phi \\
		\text{s}_\psi \text{c}_\theta & \text{s}_\psi \text{s}_\theta \text{s}_\phi + \text{c}_\psi \text{c}_\phi & \text{s}_\psi \text{s}_\theta \text{c}_\phi - \text{c}_\psi \text{s}_\phi \\
		-\text{s}_\theta & \text{c}_\theta \text{s}_\phi & \text{c}_\theta \text{c}_\phi
	\end{bmatrix} \text{.}
\end{equation}

Since the rope is assumed rigid, it cannot twist. Thus, the orientation of frame $\frA{1}$ with respect to $\frL$, corresponding to the orientation of the UAV with respect to the rope, is parametrized by two angles, $\bm{\gamma} \triangleq [ \gone \; \gtwo ]^T$, such that
\begin{equation}
	\RLA{1} \triangleq \bm{R}_{x,-\gone} \bm{R}_{y,-\gtwo} = \begin{bmatrix}
		\text{c}_\gtwo & 0 & -\text{s}_\gtwo \\
		\text{s}_\gone \text{s}_\gtwo & \text{c}_\gone & \text{s}_\gone \text{c}_\gtwo \\
		\text{c}_\gone \text{s}_\gtwo & -\text{s}_\gone & \text{c}_\gone \text{c}_\gtwo
	\end{bmatrix} \text{.}
\end{equation}

The orientations of the thrusters' groups with respect to $\frB$ are defined by
\begin{equation}
	\RBA{2} {\triangleq} \bm{R}_{x,-\beta} \bm{R}_{y,\aR} {=} \begin{bmatrix}
		\text{c}_\aR & 0 & \text{s}_\aR \\
		-\text{s}_\beta \text{s}_\aR & \text{c}_\beta & \text{s}_\beta \text{c}_\aR \\
		-\text{c}_\beta \text{s}_\aR & -\text{s}_\beta & \text{c}_\beta \text{c}_\aR
	\end{bmatrix} \! \text{,} ~
	\RBA{3} {\triangleq} \bm{R}_{x,\beta} \bm{R}_{y,\aL} {=} \begin{bmatrix}
		\text{c}_\aL & 0 & \text{s}_\aL \\
		\text{s}_\beta \text{s}_\aL  & \text{c}_\beta & - \text{s}_\beta \text{c}_\aL \\
		-\text{c}_\beta \text{s}_\aL & \text{s}_\beta &   \text{c}_\beta \text{c}_\aL
	\end{bmatrix} \text{,}
\end{equation}
where $\aR$ and $\aL$ are the tilting angles of the right and left servomotors, respectively, and $\beta$ is a fixed inclination angle of the thrusters towards the aircraft geometric center, designed to improve the aircraft controllability \citep{Raffo2011}. %
The reference frames $\frA{1}$, $\frB$, and $\frC{1}$ are parallel to each other and attached to the same rigid body, thus the relative orientation is null, i.e., $\RAB{1} = \RBC{1} = \eye{3}$\footnote{In this work, $\eye{n}$ is an identity matrix with dimension $n$, $\zeros{n}{m}$ denotes an $n$ by $m$ matrix of zeros, and $\ones{n}{m}$ is an $n$ by $m$ matrix of ones.}. Similarly, we have that $\RAC{2}{2} = \RAC{3}{3} = \eye{3}$. Then, $\RLB \triangleq \RLA{1} \RAB{1} = \RLA{1}$, and $\RBC{i} \triangleq \RBA{i} \RAC{i}{i} = \RBA{i}$, for $i \in \{2,3\}$.

Given that $\dot{\RIL} = \RIL S(\wLIL)$, where $\wLIL$ denotes the angular velocity of $\frL$ with respect to $\frI$, expressed in $\frL$, and $S(\cdot)$ denotes an operator that maps a vector to a skew-symmetric matrix \citep{Siciliano2009}, we have $\wLIL = \Weta \deta$. Similarly, the remainder angular velocities of the system are given by $\wALA{1} = \bm{Q} \dgamma$, $\wBAB{1} = \wiBi{1} = \wiAi{2} = \wiAi{3} = \zeros{3}{1}$, $\wABA{2} = \ay \daR$, and $\wABA{3} = \ay \daL$,  with
\begin{equation}
	\label{eq:WeQ}
	\Weta \triangleq \begin{bmatrix}
		1 & 0 & -\text{s}_\theta \\
		0 & \text{c}_\phi & \text{s}_\phi \text{c}_\theta \\
		0 & -\text{s}_\phi & \text{c}_\phi \text{c}_\theta
	\end{bmatrix}\text{,} \quad \quad \bm{Q} \triangleq \begin{bmatrix}
	-\text{c}_\gtwo & 0 \\
	0 & -1 \\
	\text{s}_\gtwo & 0
\end{bmatrix}\text{,}
\end{equation}
and $\ay \triangleq [ 0 \;\, 1 \;\, 0 ]^T$. Moreover, $\wBLB \triangleq \bm{\omega}^\frB_{\frL \frA{1}} + \wBAB{1} = (\RAB{1})^T \wALA{1}  + \wBAB{1} = \wALA{1}$, $\wiBi{2} \triangleq \bm{\omega}^\frC{2}_{\frB \frA{2}} + \wiAi{2} = (\RAC{2}{2})^T \wABA{2}  + \wiAi{2} = \wABA{2}$, and $\wiBi{3} \triangleq \bm{\omega}^\frC{3}_{\frB \frA{3}} + \wiAi{3} = (\RAC{3}{3})^T \wABA{3}  + \wiAi{3} = \wABA{3}$. 

The defined rigid transformations yield the forward kinematics of points that belong to each rigid body, given by
\begin{align}
	& \pIL = \bm{\xi} + \RIL \pLL \text{,} \label{eq:pIL}\\
	& \pIi{1} = \bm{\xi} + \RIL \dLA{1} + \RIL \RLB (\dAC{1}{1}) + \RIL \RLB \RBC{1} \pii{1} \text{,}\label{eq:pI1} \\
	& \pIi{i} = \bm{\xi} + \RIL \dLA{1} + \RIL \RLB (\dAA{1}{i}) + \RIL \RLB \RBC{i} (\dAC{i}{i} + \pii{i}) \text{,}\label{eq:pIi}
\end{align}
where $\dAC{1}{1} \triangleq \dAB{1} + \dBC{1}$ and $\dAA{1}{i} \triangleq \dAB{1} + \dBA{i}$, $\bm{p}_\frL$ is the position of a point that belongs to the suspended load body, and $\bm{p}_\frC{i}$ belongs to the rigid body with attached frame $\frC{i}$.

The generalized coordinates of the system are chosen according to the defined rigid transformations. \textred{Note that, since the load's position and orientation are defined with respect to $\frI$, such variables are independent of each other. Therefore, these} are included in the generalized coordinates vector, which is chosen as
\begin{equation} \label{eq:generalizedcoordinates}
	\bm{q} \triangleq \begin{bmatrix} \bm{\xi}^T & \bm{\eta}^T & \bm{\gamma}^T & \aR & \aL \end{bmatrix}^T \in \setreal^{10}\text{.}
\end{equation}

Due to the chosen perspective, the position and orientation of the aircraft with respect to $\frI$ are not degrees of freedom of the system, being not included in \eqref{eq:generalizedcoordinates}. Consequently, their time evolution will not be described explicitly by the obtained equations of motion.

\subsection{\textred{Equations of motion}}

In order to derive the equations of motion through the Euler-Lagrange formulation, on one hand the kinetic and potential energies of each body of the mechanical system must be obtained. These energies can be computed for the $j$-th rigid body through the volume integrals \citep{Siciliano2009}
\begin{gather}
	\Kin_j = \frac{1}{2} \int_{V_j} \rho_j (\dot{\bm{p}}_j^\frI)^T (\dot{\bm{p}}_j^\frI) \text{d}V_j  \text{,} \label{eq:kineticenergydefinition}\\ 
	\Pot_j = - \int_{V_j} \rho_j \hat{\bm{g}}^T \pIi{j} \text{d}V_j = - m_j \hat{\bm{g}}^T \oIi{j} \text{,}  \label{eq:potentialenergydefinition}
\end{gather}
respectively, where $\rho_j$ denotes its density, $V_j$ its volume, $m_j \triangleq \int_{V_j} \rho_j \text{d}V_j$ its mass, $\hat{\bm{g}} \triangleq [0 \; 0 \; -\hat{g}_z]^T$ corresponds to the gravitational acceleration vector expressed in $\frI$, and $\oIi{j}$ is the position vector obtained from the forward kinematics of the origin of $\frC{j}$. The quadratic terms $(\dpIL)^T \dpIL$, $(\dpIi{1})^T \dpIi{1}$, and $(\dpIi{i})^T \dpIi{i}$ are computed using the time derivatives of (\ref{eq:pIL})--(\ref{eq:pIi}), respectively. Moreover, by defining the inertia tensors $\IL = \int_{V_\frL} \rho_\frL S(\pLL)^TS(\pLL) \text{d}V_\frL$ and $\Ii{j} = \int_{V_j} \rho_j S(\pii{j})^T S(\pii{j}) \text{d}V_j$, taking into account the parallel axis theorem \citep{Shabana2010}, yields $\Ji{j} \triangleq \mi{j} S(\dAC{j}{j})^TS(\dAC{j}{j}) + \RAC{j}{j} \Ii{j} (\RAC{j}{j})^T$, $j \in \{1,2,3\}$, $\Ei{i} \triangleq \mi{i} S(\dAA{1}{i})^TS(\dAA{1}{i}) + \RAA{1}{i} \Ii{i} (\RAA{1}{i})^T$, $\Di{1} \triangleq \mi{1} S(\dLA{1})^TS(\dLA{1}) + \RLA{1} \Ji{1} (\RLA{1})^{T}$, and $\Di{i} \triangleq \mi{i} S(\dLA{1})^TS(\dLA{1}) + \RLA{1} \Ei{i} (\RLA{1})^{T}$, $i \in \{2,3\}$.

The total kinetic energy of the system is given by $\Kin = \KinL + \sum_{j=1}^3 \Kin_j$, in which the kinetic energy of each rigid body is obtained using \eqref{eq:kineticenergydefinition}. Then, by defining $\bm{\Phi}_i \triangleq \RBC{i} S(m_i \dAC{i}{i})^T (\RBC{i})^T$, $\tilde{\bm{\Phi}}_i \triangleq \RBC{i} S(m_i \dAC{i}{i})^T$, $\bm{\Theta}_i \triangleq S(\dAA{1}{i}) \bm{\Phi}_i$, and $\tilde{\bm{\Theta}}_i \triangleq S(\dAA{1}{i}) \tilde{\bm{\Phi}}_i$, and using several properties of skew-symmetric matrices \citep{Siciliano2009}, writing the total kinetic energy as $\Kin = \half \dot{\bm{q}}^T \bm{M}(\bm{q}) \dot{\bm{q}}$ leads to the inertia matrix $\bm{M}(\bm{q}) \in \setrealmat{10}{10}$,
\begin{equation} \label{eq:modeling_inertiamatrix}
	\bm{M}(\bm{q}) = \begin{bmatrix}
		(\mL + m) \eye{3}   & \bm{M}_{12}   & \bm{M}_{13}   & \RIL\RLB\tilde{\bm{\Phi}}_2\ay            & \RIL\RLB\tilde{\bm{\Phi}}_3\ay \\
		*                   & \bm{M}_{22}   & \bm{M}_{23}   & \bm{M}_{24}           & \bm{M}_{25} \\
		*                   & *             & \bm{M}_{33}    & \bm{Q}^T [\RBC{2} \Ji{2} + \tilde{\bm{\Theta}}_2] \ay   & \bm{Q}^T [\RBC{3} \Ji{3} + \tilde{\bm{\Theta}}_3] \ay \\
		*                   & *             & *                         & \ay^T \Ji{2} \ay              & 0 \\
		*                   & *             & *                         & *                             & \ay^T \Ji{3} \ay
	\end{bmatrix} \text{,}
\end{equation}
where $*$ denotes elements that are deduced by symmetry, and
\begin{align}
	 \bm{M}_{12} & = -m \RIL S(\dLA{1}) \Weta - \RIL \RLB S(\dm) (\RLB)^T \Weta + \RLB \RLB \bm{\Phi} (\RLB)^T \Weta \text{,}\\
	 \bm{M}_{13} & = -\RIL \RLB S(\dm) \bm{Q} + \RIL \RLB \bm{\Phi} \bm{Q} \text{,}\\
	 \bm{M}_{22} & = \Weta^T \left[ \IL + \bm{D} - S(\dLA{1}) \RLB S(\dm) (\RLB)^T - \RLB S(\dm) (\RLB)^T S(\dLA{1}) \right. \\
	 & \quad \left. + S(\dLA{1}) \RLB \bm{\Phi} (\RLB)^T - \RLB \bm{\Phi}^T (\RLB)^T S(\dLA{1}) + \RLB(\bm{\Theta} + \bm{\Theta}^T)(\RLB)^T \right] \Weta \text{,} \nonumber\\
	 \bm{M}_{23} & = \Weta^T \left[ - S(\dLA{1}) \RLB S(\dm) + \RLB (\Ji{1} + \bm{E}) + S(\dLA{1}) \RLB \bm{\Phi} + \RLB(\bm{\Theta} + \bm{\Theta}^T) \right] \bm{Q} \text{,}\\
	 \bm{M}_{24} & = \Weta^T \left[ \RLB\RBC{2}\Ji{2} + S(\dLA{1}) \RLB \tilde{\bm{\Phi}}_2 + \RLB \tilde{\bm{\Theta}}_2  \right] \ay \text{,} \\
	 \bm{M}_{25} & = \Weta^T \left[ \RLB\RBC{3}\Ji{3} + S(\dLA{1}) \RLB \tilde{\bm{\Phi}}_3 + \RLB \tilde{\bm{\Theta}}_3  \right] \ay \text{,}\\
	 \bm{M}_{33} & = \bm{Q}^T [\Ji{1} + \bm{E} + \bm{\Theta} + \bm{\Theta}^T] \bm{Q} \text{,}
\end{align}
with $m \triangleq \sum_{j=1}^{3} \mi{j}$, $\bm{E} \triangleq \sum_{i=2}^{3} \Ei{i}$, $\bm{D} \triangleq \sum_{j=1}^{3} \Di{j}$, $\bm{\Phi} \triangleq \sum_{i=2}^{3} \bm{\Phi}_i$, $\bm{\Theta} \triangleq \sum_{i=2}^{3} \bm{\Theta}_i$, and $\dm \triangleq m_1 \dAC{1}{1} + \sum_{i=2}^{3} \mi{i} \dAA{1}{i}$. Note from the inertia matrix that the dynamics of the four rigid bodies are coupled, allowing one to consider the existing interactions in the control design without the need of cascade control structures.

The Coriolis and centripetal forces matrix, $\bm{C}(\bm{q},\dot{\bm{q}}) \in \setrealmat{10}{10}$, is obtained through Christoffel symbols of the first kind \citep{Siciliano2009}. The element of its $k$-th row and $j$-th column is given by
\begin{equation} \label{eq:modeling_coriolismatrix}
	C_{kj} = \sum_{i=1}^{10} \dfrac{1}{2} \left( \dfrac{\partial M_{kj}}{\partial q_i} + \dfrac{\partial M_{ki}}{\partial q_j} - \dfrac{\partial M_{ij}}{\partial q_k} \right) \dot{q}_i \text{,}
\end{equation}
where $k,j \in \{1,2,\dots,10\}$, with $M$ being an element of the inertia matrix \eqref{eq:modeling_inertiamatrix}.

The forward kinematics of each body's center of mass is obtained using \eqref{eq:pIL}--\eqref{eq:pIi}, 
%
\textred{with the potential energies of the load and of each body of the aircraft then computed using \eqref{eq:potentialenergydefinition}}. The total potential energy of the system is given by $\Pot = \PotL + \sum_{j=1}^{3} \Poti{j}$, yielding 
\begin{equation}
	\Pot = - \hat{\bm{g}}^T \left[ (\mL + m) \bm{\xi} + m \RIL \dLA{1} + \RIL \RLB \dm + \sum_{i=2}^3 m_i \RIL\RLB\RBC{i} \dAC{i}{i}\right] \text{.}
\end{equation}

The gravitational force vector is then obtained through
\begin{equation} \label{eq:modeling_gravitationalforcesvector}
	\bm{g}(\bm{q}) = \frac{\partial \Pot}{\partial \bm{q}} \in \setreal^{10} \text{.}
\end{equation}



\textred{On the other hand, this work assumes that the system is also subject to generalized forces from the aircraft's actuators, viscous friction, and external disturbances affecting the load.} Therefore, let $\bm{f} \in \setreal^{3}$ and $\bm{\tau} \in \setreal^{3}$ denote non-conservative force and torque vectors, respectively, actuating on the mechanical system. Furthermore, let $\bm{p} \in \setreal^{3}$ denote the point of application of $\bm{f}$, and $\mathcal{F}$ be a reference frame rigidly attached to the body to which $\bm{\tau}$ is applied. \textred{The contributions of $\bm{f}$ and $\bm{\tau}$ to the generalized forces can be computed through \cite{Kane1985}}
\begin{align} 
	& \gforce_{\bm{f}} = (\Jac_{\bm{p}})^T \bm{f}^\frI \in \setreal^{n}\text{,} \label{eq:gforcebyforce}\\
	& \gforce_{\bm{\tau}} = (\GJac_{\mathcal{F}})^T \bm{\tau}^\frI \in \setreal^{n} \text{,} \label{eq:gforcebytorque}
\end{align}
where $\Jac_{\bm{p}} \triangleq \partial \dot{\bm{p}}^\frI / \partial \dot{\bm{q}} \in \setrealmat{3}{n}$, and $\GJac_{\mathcal{F}} \triangleq \partial \bm{\omega}^\frI_{\frI \mathcal{F}} / \partial \dot{\bm{q}} \in \setrealmat{3}{n}$.


\textred{The input forces and torques of the system are the right and left thrust forces, denoted by $\fR$ and $\fL$, and right and left servomotor torques, denoted by $\tauaR$ and $\tauaL$, respectively.} They are expressed in their respective frames by $\bm{f}^\frC{2}_\text{R} = \az \infR$, $\bm{f}^\frC{3}_\text{L} = \az \infL$, $\bm{\tau}^\frC{2}_\aR = \ay \intauaR$, and $\bm{\tau}^\frC{3}_\aL = \ay \intauaL$, where $\az \triangleq [ 0 \,\; 0 \,\; 1 ]^T$  (see Figure \ref{fig:modeling_kinematics}). \textred{Therefore}, in the inertial reference frame, we have 
\begin{align}
	& \bm{f}^\frI_\text{R} = \bm{R}^\frI_\frC{2} \bm{f}^\frC{2}_\text{R} = \RIL \RLB \RBC{2} \az \infR \text{,} \label{eq:fIR} \\
	& \bm{f}^\frI_\text{L} = \bm{R}^\frI_\frC{3} \bm{f}^\frC{3}_\text{L} = \RIL \RLB \RBC{3} \az \infL \text{,} \label{eq:fIL} \\
	& \bm{\tau}^\frI_\aR = \bm{R}^\frI_\frC{2} \bm{\tau}^\frC{2}_\aR = \RIL \RLB \RBC{2} \ay \intauaR \text{,} \label{eq:tauIaR} \\
	& \bm{\tau}^\frI_\aL = \bm{R}^\frI_\frC{3} \bm{\tau}^\frC{3}_\aL = \RIL \RLB \RBC{3} \ay \intauaL \text{.} \label{eq:tauIaL}
\end{align}

Besides, this work assumes that the thrust forces are applied to the centers of mass of the respective thrusters' groups, i.e, the origins of $\frC{2}$ and $\frC{3}$. To obtain the corresponding mappings to generalized forces, it is necessary to compute $\Jac_{\bm{o}_\frC{2}} = \partial \doIi{2}/\partial \dot{\bm{q}}$ and  $\Jac_{\bm{o}_\frC{3}} = \partial \doIi{3}/\partial \dot{\bm{q}}$. Firstly, making $\pii{2} = \bm{o}^\frC{2}_\frC{2} = \zeros{3}{1}$ in the time derivative of \eqref{eq:pIi} leads to
\begin{equation} 
	\begin{aligned}
	\doIi{2} & = \dxi {+} \left( \RIL S(\dLA{1})^T {+} \RIL \RLB S(\dAA{1}{2})^T (\RLB)^T {+} \RIL \RLB \RBC{2} S(\dAC{2}{2})^T (\RLB \RBC{2})^T \right) \Weta \deta \\
	& + \left( \RIL \RLB S(\dAA{1}{2})^T + \RIL \RLB \RBC{2} S(\dAC{2}{2})^T (\RBC{2})^T \right) \bm{Q} \dgamma + \RIL \RLB \RBC{2} S(\dAC{2}{2})^T \ay \daR \text{.} \end{aligned}
\end{equation}

Then, using \eqref{eq:gforcebyforce} and \eqref{eq:fIR} yields
\begin{align} \label{eq:gforcefR}
	\gforce_{\bm{f}_\text{R}} & = (\Jac_{\bm{o}_2})^T \bm{f}^\frI_\text{R} = \left(\partial \doIi{2}/\partial \dot{\bm{q}}\right)^T \bm{f}^\frI_\text{R} \nonumber\\
	& = \begin{bmatrix} \RIL \RLB \RBC{2} \az \\ \Weta^T S(\dLA{1}) \RLB \RBC{2} \az + \Weta^T \RLB S(\dAA{1}{2}) \RBC{2} \az + \Weta^T \RLB \RBC{2} S(\dAC{2}{2}) \az \\ \bm{Q}^T S(\dAA{1}{2}) \RBC{2} \az + \bm{Q}^T \RBC{2} S(\dAC{2}{2}) \az \\ \ay^T S(\dAC{2}{2}) \az \\ 0 \end{bmatrix} \infR \text{.}
\end{align}

Analogously, for the left thrust force, we have
\begin{equation} \label{eq:gforcefL}
	\gforce_{\bm{f}_\text{L}} = \begin{bmatrix} \RIL \RLB \RBC{3} \az \\ \Weta^T S(\dLA{1}) \RLB \RBC{3} \az + \Weta^T \RLB S(\dAA{1}{3}) \RBC{3} \az + \Weta^T \RLB \RBC{3} S(\dAC{3}{3}) \az \\ \bm{Q}^T S(\dAA{1}{3}) \RBC{3} \az + \bm{Q}^T \RBC{3} S(\dAC{3}{3}) \az \\ 0 \\ \ay^T S(\dAC{3}{3}) \az \end{bmatrix} \infR \text{.}
\end{equation}

The servomotor torques are applied to the respective thrusters' bodies, and opposite torques due to reaction are applied to the aircraft's main body. These torques are mapped to generalized forces through \eqref{eq:gforcebytorque}. From the addition of angular velocities \citep{Siciliano2009}, we have
\begin{align}
	& \bm{\omega}^\frI_{\frI \frB} = \bm{\omega}^\frI_{\frI \frL} + \bm{\omega}^\frI_{\frL \frB} = \RIL \wLIL {+} \RIL \RLB \wBLB
	= \begin{bmatrix} \zeros{3}{3} & \RIL \Weta & \RIL \RLB \bm{Q} & \zeros{3}{1} & \zeros{3}{1} \end{bmatrix}
	\dot{\bm{q}} \text{,}  \label{eq:GJacB} \\
	&\bm{\omega}^\frI_{\frI \frC{2}} = \bm{\omega}^\frI_{\frI \frL} + \bm{\omega}^\frI_{\frL \frB} + \bm{\omega}^\frI_{\frB \frC{2}} = \begin{bmatrix} \zeros{3}{3} & \RIL \Weta & \RIL \RLB \bm{Q} & \RIL \RLB \RBC{2} \ay & \zeros{3}{1} \end{bmatrix}
	\dot{\bm{q}} \text{,}
	\label{eq:GJac2} \\
	& \bm{\omega}^\frI_{\frI \frC{3}} = \bm{\omega}^\frI_{\frI \frL} + \bm{\omega}^\frI_{\frL \frB} + \bm{\omega}^\frI_{\frB \frC{3}} = \begin{bmatrix} \zeros{3}{3} & \RIL \Weta & \RIL \RLB \bm{Q} & \zeros{3}{1} & \RIL \RLB \RBC{3} \ay \end{bmatrix}
	\dot{\bm{q}} \label{eq:GJac3} \text{.}
\end{align}

\textred{Recalling that $\bm{\omega}^\frI_{\frI \frB} = \GJac_\frB \dot{\bm{q}}$, $\bm{\omega}^\frI_{\frI \frC{2}} = \GJac_\frC{2} \dot{\bm{q}}$ and $\bm{\omega}^\frI_{\frI \frC{3}} = \GJac_\frC{3} \dot{\bm{q}}$, leads to}
\begin{align}
	\gforce_{\bm{\tau}_\aR} & = (\GJac_\frC{2})^T \bm{\tau}^\frI_\aR + (\GJac_\frB)^T (-\bm{\tau}^\frI_\aR) \nonumber\\
	& = \left( \begin{bmatrix}
		\zeros{3}{3} \\ \Weta^T (\RIL)^T \\ \bm{Q}^T (\RIL \RLB)^T \\ \ay^T (\RIL \RLB \RBC{2})^T \\ \zeros{1}{3} \end{bmatrix} - \begin{bmatrix}
		\zeros{3}{3} \\ \Weta^T (\RIL)^T \\ \bm{Q}^T (\RIL \RLB)^T \\ \zeros{1}{3} \\ \zeros{1}{3} \end{bmatrix} \right) \RIL \RLB \RBC{2} \ay \intauaR 
	= \begin{bmatrix} \zeros{3}{1} \\ \zeros{3}{1} \\ \zeros{2}{1} \\ 1 \\ 0 \end{bmatrix} \intauaR \text{,} \label{eq:gforcetauaR}
\end{align}
\begin{align} \label{eq:gforcetauaL}
	\gforce_{\bm{\tau}_\aL} & = (\GJac_\frC{3})^T \bm{\tau}^\frI_\aL + (\GJac_\frB)^T (-\bm{\tau}^\frI_\aL) \nonumber\\ 
	& = \left( \begin{bmatrix}
		\zeros{3}{3} \\ \Weta^T (\RIL)^T \\ \bm{Q}^T (\RIL \RLB)^T \\ \zeros{1}{3} \\ \ay^T (\RIL \RLB \RBC{3})^T \end{bmatrix} - \begin{bmatrix}
		\zeros{3}{3} \\ \Weta^T (\RIL)^T \\ \bm{Q}^T (\RIL \RLB)^T \\ \zeros{1}{3} \\ \zeros{1}{3} \end{bmatrix} \right) \RIL \RLB \RBC{3} \ay \intauaL 
	= \begin{bmatrix} \zeros{3}{1} \\ \zeros{3}{1} \\ \zeros{2}{1} \\ 0 \\ 1 \end{bmatrix} \intauaL \text{.}
\end{align}

This work also takes into account drag torques generated by the propellers, which are reaction torques applied to the thrusters' bodies, due to the blades' acceleration and drag \citep{Castillo2005}. Assuming steady-state for the angular velocity of the blades, the drag torques are given in the thrusters' reference frames by
\begin{equation}
	\bm{\tau}^\frC{2}_\text{drag,R} = \lambdaR \ktaub \bm{f}^\frC{2}_\text{R}, \quad \bm{\tau}^\frC{3}_\text{drag,L} = \lambdaL \ktaub \bm{f}^\frC{3}_\text{L} \text{,}
\end{equation}
where $k_\tau$ and $b$ are parameters obtained experimentally, and $\lambdaR$ and $\lambdaL$ are given according to the direction of rotation of the corresponding propeller: if counter-clockwise, $1$; if clockwise, $-1$. In the inertial reference frame, we then have
\begin{align}
	& \bm{\tau}^\frI_\text{drag,R} = \bm{R}^\frI_\frC{2} \bm{\tau}^\frC{2}_\text{drag,R} = \lambdaR \ktaub \bm{R}^\frI_\frC{2} \bm{f}^\frC{2}_\text{R} = \lambdaR \ktaub \RIL \RLB \RBC{2} \az \infR \text{,} \label{eq:tauIdragR}\\
	& \bm{\tau}^\frI_\text{drag,L} = \bm{R}^\frI_\frC{3} \bm{\tau}^\frC{3}_\text{drag,L} = \lambdaL \ktaub \bm{R}^\frI_\frC{3} \bm{f}^\frC{3}_\text{L} = \lambdaL \ktaub \RIL \RLB \RBC{3} \az \infL \text{.} \label{eq:tauIdragL}
\end{align}

The drag torques are applied to the thrusters' bodies, then \textred{from} (\ref{eq:gforcebytorque}), (\ref{eq:GJac2}), (\ref{eq:GJac3}), (\ref{eq:tauIdragR}), and (\ref{eq:tauIdragL}), yields
\begin{equation} 
	\gforce_{\bm{\tau}_\text{drag,R}} = (\GJac_\frC{2})^T \bm{\tau}^\frI_\text{drag,R} = \lambdaR \ktaub \begin{bmatrix}
		\zeros{3}{1} \\ \Weta^T \RLB \RBC{2} \az \\ \bm{Q}^T \RBC{2} \az \\ 0 \\ 0 \end{bmatrix} \infR \text{,} \label{eq:gforcetaudragR}
\end{equation}
\begin{equation} \label{eq:gforcetaudragL}
	\gforce_{\bm{\tau}_\text{drag,L}} = (\GJac_\frC{3})^T \bm{\tau}^\frI_\text{drag,L} = \lambdaL \ktaub \begin{bmatrix}
		\zeros{3}{1} \\ \Weta^T \RLB \RBC{3} \az \\ \bm{Q}^T \RBC{3} \az \\ 0 \\ 0 \end{bmatrix} \infL \text{.}
\end{equation}

Finally, the complete mapping of the control inputs to generalized forces is obtained by summing up the contributions of the thrust forces, servomotor torques, and drag torques. Thus, from \eqref{eq:gforcefR}, \eqref{eq:gforcefL}, \eqref{eq:gforcetauaR}, \eqref{eq:gforcetauaL}, \eqref{eq:gforcetaudragR}, and \eqref{eq:gforcetaudragL}, 
\begin{align}
	\gforce_\text{in} & = \gforce_{\bm{f}_\text{R}} + \gforce_{\bm{f}_\text{L}} + \gforce_{\bm{\tau}_\aR} + \gforce_{\bm{\tau}_\aL} + \gforce_{\bm{\tau}_\text{drag,R}} + \gforce_{\bm{\tau}_\text{drag,L}} \nonumber\\
	& = \begin{bmatrix}
		\RIL \RLB \RBC{2} \az & \RIL \RLB \RBC{3} \az & \zeros{3}{1} & \zeros{3}{1} \\
		\Weta^T \LambdaR \az & \Weta^T \LambdaL \az & \zeros{3}{1} & \zeros{3}{1} \\
		\bm{Q}^T \GammaR \az & \bm{Q}^T \GammaL \az & \zeros{2}{1} & \zeros{2}{1} \\
		\ay^T S(\dAC{2}{2}) \az &                       0 & 1 & 0 \\
		                      0 & \ay^T S(\dAC{3}{3}) \az & 0 & 1	
	\end{bmatrix} \begin{bmatrix} \infR \\ \infL \\ \intauaR \\ \intauaL \end{bmatrix} \triangleq \bm{L}_\text{in}(\bm{q}) \bm{u} \label{eq:gforceinputs} \text{,}
\end{align}
where
\begin{align}
	& \LambdaR \triangleq S(\dLA{1}) \RLB \RBC{2} + \RLB S(\dAA{1}{2}) \RBC{2} + \RLB \RBC{2} S(\dAC{2}{2}) + \lambdaR \ktaub \RLB \RBC{2} \text{,} \\
	& \LambdaL \triangleq S(\dLA{1}) \RLB \RBC{3} + \RLB S(\dAA{1}{3}) \RBC{3} + \RLB \RBC{3} S(\dAC{3}{3}) + \lambdaL \ktaub \RLB \RBC{3} \text{,} \\
	& \GammaR \triangleq S(\dAA{1}{2}) \RBC{2} {+}\RBC{2} S(\dAC{2}{2}) {+} \lambdaR \ktaub \RBC{2} \text{,} \quad
	\GammaL \triangleq S(\dAA{1}{3}) \RBC{3} {+} \RBC{3} S(\dAC{3}{3}) {+} \lambdaL \ktaub \RBC{3} \text{.}
\end{align}

Although in this work no aerodynamic surfaces are considered for the tilt-rotor UAV (see Figure \ref{fig:modeling_tiltrotor}), the modeling approach presented here is general enough to describe the dynamics of any tilt-rotor carrying a suspended load, from the perspective of the latter. For such purpose, the aircraft must be regarded as a multi-body system with frame definitions similar to those shown in Figure \ref{fig:modeling_kinematics}. Moreover, if aerodynamics surfaces are considered (e.g. wings, horizontal and vertical stabilizers), the resulting lift and drag forces can be added to the model in a straightforward manner, by including the corresponding contributions in \eqref{eq:gforceinputs}, allowing the model to cope with both helicopter and airplane flight modes. 


Viscous friction is taken into account at the servomotors composing the tilting mechanisms, and also at the point of connection between the rope and the tilt-rotor UAV. The resulting friction torques are assumed to be mapped to generalized forces by
\begin{equation} \label{eq:gforcefr}
	\gforce_\text{fr} = - \bm{L}_\text{fr} \dot{\bm{q}} \text{,}
\end{equation}
where $\bm{L}_\text{fr} \triangleq \text{diag}( 0, 0, 0, 0, 0, 0, \mu_\gamma, \mu_\gamma, \mu_\alpha, \mu_\alpha)$ with $\mu_\gamma$ and $\mu_\alpha$ constant parameters.

External disturbances applied to the suspended load are also considered, which may represent wind gusts affecting the system. Defining these disturbances in the inertial reference frame as the force vector \textred{$\bm{d} \in \setreal^{3}$}, and assuming applied to the load's center of mass, it can be mapped to generalized forces through \eqref{eq:gforcebyforce}, yielding
\begin{equation} \label{eq:gforcedb}
	\gforce_\text{db} = (\partial \doIL / \partial \bm{\dot{q}})^T \bm{d} = \begin{bmatrix} \eye{3} & \zeros{3}{3} & \zeros{3}{2} & \zeros{3}{1} & \zeros{3}{1} \end{bmatrix}^T \bm{d} \triangleq \bm{L}_\text{db} \bm{d} \text{,}
\end{equation}
where $\doIL = \dxi$ is obtained by making $\pLL = \zeros{3}{1}$ in the time derivative of \eqref{eq:pIL}.


The equations of motion of the tilt-rotor UAV with suspended load can be written in the Euler-Lagrange formulation as \citep{Siciliano2009} 
\begin{equation} \label{eq:eulerlagrangeCANONICAL}
	\bm{M}(\bm{q})\ddot{\bm{q}} + \bm{C}(\bm{q},\dot{\bm{q}})\dot{\bm{q}} + \bm{g}(\bm{q}) = \gforce \text{,}
\end{equation}
where $\bm{M}(\bm{q})$, $\bm{C}(\bm{q},\dot{\bm{q}})$ and $\bm{g}(\bm{q})$ are given by \eqref{eq:modeling_inertiamatrix}, \eqref{eq:modeling_coriolismatrix}, and \eqref{eq:modeling_gravitationalforcesvector}, respectively, and  
$\gforce$ is the total generalized forces vector, obtained by $\gforce = \gforce_\text{in} + \gforce_\text{fr} + \gforce_\text{db} = \bm{L}_\text{in}(\bm{q}) \bm{u} - \bm{L}_\text{fr} \dot{\bm{q}} + \bm{L}_\text{db} \bm{d}$. 
%
%
\textred{By defining the state vector}
\begin{equation} \label{eq:modeling_systemstates}
	\bm{x} \triangleq \begin{bmatrix} \bm{q}^T & \dot{\bm{q}}^T \end{bmatrix}^T \in \setreal^{20}\text{,}
\end{equation}
\textred{the dynamic equations \eqref{eq:eulerlagrangeCANONICAL} can be written in the state-space representation}
\begin{equation} \label{eq:modeling_statespacenonlinear}
	\dot{\bm{x}} = \bm{\varphi}(\bm{x}, \bm{u}, \bm{d}) = \begin{bmatrix}
		\dot{\bm{q}} \\
		\bm{M}(\bm{q})^{-1} \left[ - (\bm{C}(\bm{q},\dot{\bm{q}})  + \bm{L}_\text{fr}) \dot{\bm{q}} - \bm{g}(\bm{q}) + \bm{L}_\text{in}(\bm{q}) \bm{u} + \bm{L}_\text{db}\bm{d} \right]
	\end{bmatrix} \text{,}
\end{equation}
which is nonlinear and highly coupled. Since the position and orientation of the load belong to the generalized coordinates \eqref{eq:generalizedcoordinates}, they are represented by the state variables \eqref{eq:modeling_systemstates}. Consequently, the load's behavior is described explicitly by \eqref{eq:modeling_statespacenonlinear}. On the other hand, the aircraft's position and orientation are described only with respect to the load, thus appearing in \eqref{eq:modeling_statespacenonlinear} only implicitly.

\section{State estimation} \label{sec:estimation}

The developed state-space representation \eqref{eq:modeling_statespacenonlinear} describes explicitly the dynamics of the load's position and orientation, which are represented by state variables. This fact allows state-feedback control strategies to directly steer the trajectory of the suspended load with respect to the inertial reference frame. However, in real applications, often the available sensors provide information only about the aircraft's position and orientation, which prevents to directly measure all the system states \eqref{eq:modeling_systemstates}.

This section proposes the design of a zonotopic state estimator (ZSE) to provide the state variables of the tilt-rotor UAV with suspended load. A realistic scenario is considered, in which the load's position and orientation with respect to the inertial reference frame are not measured. For comparison purposes, the derivation of a Kalman filter (KF) is also presented. The following sensors are assumed to be available: (i) a Global Positioning System (GPS), providing the position of the UAV with respect to the inertial reference frame $\frI$, along axes $x$ and $y$; (ii) a barometer, providing the position of the UAV with respect to $\frI$, along axis $z$; (iii) an Inertial Measurement Unit (IMU), providing the orientation and angular velocities of the UAV with respect to $\frI$, the latter expressed in the geometric center frame $\frB$; (iv) a camera, providing the load's position with respect to the point of connection of the rope, expressed in $\frA{1}$; and (v) embedded sensors at the servomotors, providing the tilting angles of the propellers and their time derivatives. The measured information is assumed to be corrupted with noise, and each sensor has its own sampling time.

\subsection{Measurement equation}

In order to design the state estimators, consider a measurement equation of the form $\bm{y}_k = \bm{\pi}(\bm{x}_k) + \bm{v}_k$,
where $\bm{y}_k$ is the measured vector at time instant $k$, $\bm{v}_k$ corresponds to measurement noise, and $\bm{\pi}(\bm{x}_k)$ is a nonlinear mapping of the system states to the measured variables. %

This work assumes that the GPS, barometer, and IMU are located at the geometric center of the aircraft, whilst the camera is located at the origin of $\frA{1}$. Let $\xiB \triangleq [x_\frB \; y_\frB \; z_\frB]^T$ denote the position of the tilt-rotor UAV with respect to $\frI$, as shown in Figure \ref{fig:estimation_measuredvariables}. Then, by forward kinematics, we have that
\begin{equation} \label{eq:estimation_gpsbarometerequation}
	\xiB(\bm{\xi},\bm{\eta}) = \bm{\xi} + \RIL \dLA{1} + \RIL \RLB \dAB{1} \text{.} 
\end{equation}
\begin{figure}[ht]
	\begin{footnotesize}
		\centering{
			\def\svgwidth{0.32\textwidth}
			\import{Figures_03Estimation/}{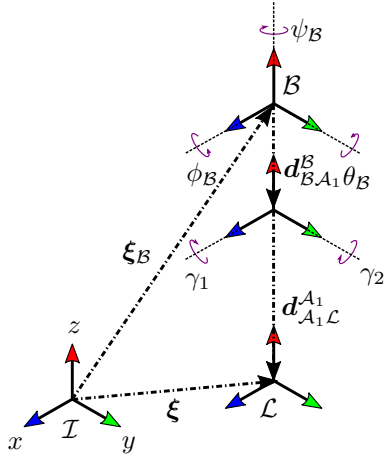}
			\caption{Measured position and orientation of the tilt-rotor UAV, and measured position of the load.}
			\label{fig:estimation_measuredvariables}}
	\end{footnotesize}
\end{figure}

The orientation of the aircraft with respect to $\frI$ is assumed to be parametrized by Euler angles, denoted by $\etaB \triangleq [\phiB \; \theta_\frB \; \psi_\frB]^T$, using the local roll-pitch-yaw convention. Therefore,
\begin{equation}
	\RIB \triangleq \bm{R}_{z,\psiB} \bm{R}_{y,\thetaB} \bm{R}_{x,\phiB} =  \begin{bmatrix}
		\text{c}_{\psiB} \text{c}_\thetaB & \text{c}_\psiB \text{s}_\thetaB \text{s}_\phiB - \text{s}_\psiB \text{c}_\phiB & \text{c}_\psiB \text{s}_\thetaB \text{c}_{\phiB} + \text{s}_\psiB \text{s}_\phiB \\
		\text{s}_\psiB \text{c}_\thetaB & \text{s}_\psiB \text{s}_\thetaB \text{s}_\phiB + \text{c}_\psiB \text{c}_\phiB & \text{s}_\psiB \text{s}_\thetaB \text{c}_\phiB - \text{c}_\psiB \text{s}_\phiB \\
		-\text{s}_\thetaB & \text{c}_\thetaB \text{s}_\phiB & \text{c}_\thetaB \text{c}_\phiB
	\end{bmatrix} \text{,}
\end{equation}
from which, %
%
since $\RIB = \RIL \RLB$, the following holds
\begin{gather}
	\phiB(\bm{\eta},\bm{\gamma}) = \arctan\left(\frac{{(\RIL \RLB)}_{32}}{{(\RIL \RLB)}_{33}}\right) \text{,} \label{eq:estimation_IMUetaBequation01}\\
	\thetaB(\bm{\eta},\bm{\gamma}) = \arcsin\left(-{(\RIL \RLB)}_{31}\right) \text{,} \label{eq:estimation_IMUetaBequation02}\\
	\psiB(\bm{\eta},\bm{\gamma}) = \arctan\left(\frac{{(\RIL \RLB)}_{21}}{{(\RIL \RLB)}_{11}}\right) \text{,} \label{eq:estimation_IMUetaBequation03}
\end{gather}
for $\thetaB \neq \pm \pi/2$, where $(\cdot)_{ij}$ denotes the element from the $i$-th line and $j$-th column.

The angular velocity provided by the IMU is given by
\begin{equation}
	\wBIB(\bm{\eta},\bm{\gamma},\deta,\dgamma) = \bm{\omega}^\frB_{\frI \frL} + \wBLB = (\RLB)^T \Weta \deta + \bm{Q} \dgamma \text{,} \label{eq:estimation_IMUwBIBequation}
\end{equation}
where $\Weta$ and $\bm{Q}$ were defined in \eqref{eq:WeQ}.

Let $\bm{d}^\frA{1}_{\frA{1}\frL}$ denote the measurement provided by the camera, which correspond to the displacement vector from $\frA{1}$ to $\frL$, expressed in $\frA{1}$ (see Figure \ref{fig:estimation_measuredvariables}). Therefore,
\begin{equation} \label{eq:estimation_cameraequation}
	\bm{d}^\frA{1}_{\frA{1}\frL}(\bm{\gamma}) = - \bm{d}^\frA{1}_{\frL\frA{1}} = - (\RLA{1})^T \bm{d}^\frL_{\frL\frA{1}} = - (\RLB)^T \bm{d}^\frL_{\frL\frA{1}}\triangleq - (\RLB)^T \bm{d}^\frL_{\frA{1}} \text{.}
\end{equation}

Gathering equations \eqref{eq:estimation_gpsbarometerequation} through \eqref{eq:estimation_cameraequation} along with the system states $\aR$, $\aL$, $\daR$ and $\daL$, measured by the sensors at the servomotors, and considering the measurement noise $\bm{v}_k$, yields the nonlinear measurement equation
\begin{equation} \label{eq:estimation_measurementequations}
	\bm{y}_k = \bm{\pi}(\bm{x}_k) + \bm{v}_k \triangleq \begin{bmatrix} \xiB(\bm{\xi},\bm{\eta}) \\ \phiB(\bm{\eta},\bm{\gamma}) \\ \thetaB(\bm{\eta},\bm{\gamma}) \\ \psiB(\bm{\eta},\bm{\gamma}) \\ \wBIB(\bm{\eta},\bm{\gamma},\deta,\dgamma) \\ \bm{d}^\frA{1}_{\frA{1}\frL}(\bm{\gamma}) \\ \aR \\ \aL \\ \daR \\ \daL \end{bmatrix} + \bm{v}_k = \begin{bmatrix}
		\bm{\xi} + \RIL \dLA{1} + \RIL \RLB \dAB{1} \\ \arctan\left({(\RIL \RLB)}_{32}/{(\RIL \RLB)}_{33}\right) \\ \arcsin\left(-{(\RIL \RLB)}_{31}\right) \\
		\arctan\left({(\RIL \RLB)}_{21}/{(\RIL \RLB)}_{11}\right) \\ (\RLB)^T \Weta \deta + \bm{Q} \dgamma \\ - (\RLB)^T \dLA{1} \\ \aR \\ \aL \\ \daR \\ \daL \end{bmatrix} + \bm{v}_k \text{.}
\end{equation}
Since the sensors have different sampling times, the dimension of $\bm{y}_k$ and $\bm{v}_k$ actually changes for each $k$. Let $\setI \triangleq \{1,2,\dots,16\}$,  $\setI_k$ denote the set of available measurements at time instant $k$, such that $\setI_k \subseteq \setI$. Moreover, let $\iota_k$ denote the number of elements of $\setI_k$, then $\bm{y}_k, \bm{v}_k \in \setreal^{\iota_k}$. Assuming that at least one measurement is available for each $k$, we have that $1 \leq \iota_k \leq 16$. Then, the time-switching nonlinear mapping is defined as
\begin{equation} \label{eq:estimation_switchedmapping}
	\bm{\pi}^{[k]}(\bm{x}_k) = \left[ \bm{\pi}(\bm{x}_k)(i) \right]_{i \in \setI_k} \text{,}
\end{equation}
where $\bm{\pi}(\bm{x}_k)(i)$ denotes the $i$-th line of $\bm{\pi}(\bm{x}_k)$, and the brackets denote vertical concatenation.

\subsection{Linearized dynamic equations} \label{sec:estimation_linearizedmodel}

Initially, the state-space equations \eqref{eq:modeling_statespacenonlinear} are linearized around a time-varying trajectory. Before proceeding, some facts must be pointed out. It is possible to verify that the inertia matrix \eqref{eq:modeling_inertiamatrix} and the mapping matrix \eqref{eq:gforceinputs} are not functions of $\bm{\xi}$, and that the Coriolis matrix obtained using \eqref{eq:modeling_coriolismatrix} is neither a function of $\bm{\xi}$ nor of $\dxi$. Moreover, assuming constant gravitational acceleration, the gravitational forces vector \eqref{eq:modeling_gravitationalforcesvector} is not a function of $\bm{\xi}$\footnote{In fact, since the gravitational acceleration is assumed constant, and effects due to aerodynamic forces and air friction are being neglected, it is expected that the system dynamics are independent of its position and velocity with respect to the inertial frame.}. Thus, by defining $\bm{\zeta} \triangleq [ \bm{\eta}^T \; \bm{\gamma}^T \; \aR \; \aL]^T \in \setreal^{7}$, we have that $\bm{M} \triangleq \bm{M}(\bm{\zeta})$, $\bm{C} \triangleq \bm{C}(\bm{\zeta}, \dot{\bm{\zeta}})$, $\bm{g} \triangleq \bm{g}(\bm{\zeta})$, $\bm{L}_\text{in} \triangleq \bm{L}_\text{in}(\bm{\zeta})$. Let $\bm{x}^\text{tr}(t)$ and $\bm{u}^\text{tr}(t)$ denote trajectory values for $\bm{x}$ and $\bm{u}$, respectively. This work assumes that the desired trajectory is feasible for a disturbance-free scenario, i.e., 
\begin{equation} \label{eq:estimation_feasibletrajectory}
	\dot{\bm{x}}^\text{tr} = \bm{\varphi}(\bm{x}^\text{tr}, \bm{u}^\text{tr}, \zeros{3}{1})\text{.}
\end{equation}

Linearizing the state-space equations \eqref{eq:modeling_statespacenonlinear} around these variables, through first-order expansion in Taylor series, yields the time-varying system
\begin{gather} \label{eq:estimation_statespacelinear}
	\delta \dot{\bm{x}} = \bm{A}_\text{c}(t) \delta \bm{x} + \bm{B}_\text{c}(t) \delta \bm{u} + \bm{F}_\text{c}(t) \bm{d} \text{,}
\end{gather}
with `c' denoting continuous-time, $\delta \bm{x} \triangleq \bm{x} - \bm{x}^\text{tr}$, $\delta \bm{u} \triangleq \bm{u} - \bm{u}^\text{tr}$, and
\begin{align}
& \bm{A}_\text{c}(t) = \left.\frac{\partial \bm{\varphi}(\bm{x}, \bm{u}, \bm{d})}{\partial \bm{x}} \right|\!\!{\begin{smallmatrix}\bm{x}=\bm{x}^{\text{tr}}(t)\\ \bm{u}=\bm{u}^{\text{tr}}(t)\\ \bm{d}=\bm{d}^{\text{tr}}(t)\end{smallmatrix}} = \left[\begin{array}{c|c} \zeros{10}{10} & \eye{10} \\ \hline \multicolumn{2}{c}{\bar{\bm{A}}_\text{c} (\bm{\zeta}^\text{tr}(t),\dot{\bm{\zeta}}^\text{tr}(t),\bm{u}^\text{tr}(t))} \end{array}\right] \in \setrealmat{20}{20} \text{,} \label{eq:control_linearizationAc} \\
& \bm{B}_\text{c}(t) = \left.\frac{\partial \bm{\varphi}(\bm{x}, \bm{u}, \bm{d})}{\partial \bm{u}} \right|\!\!{\begin{smallmatrix}\bm{x}=\bm{x}^{\text{tr}}(t)\\ \bm{u}=\bm{u}^{\text{tr}}(t)\\ \bm{d}=\bm{d}^{\text{tr}}(t)\end{smallmatrix}} = \begin{bmatrix} \zeros{10}{4} \\ \bm{M}(\bm{\zeta}^\text{tr}(t))^{-1} \bm{L}_\text{in}(\bm{\zeta}^\text{tr}(t)) \end{bmatrix} \in \setrealmat{20}{4} \text{,} \\
& \bm{F}_\text{c}(t) = \left.\frac{\partial \bm{\varphi}(\bm{x}, \bm{u}, \bm{d})}{\partial \bm{d}} \right|\!\!{\begin{smallmatrix}\bm{x}=\bm{x}^{\text{tr}}(t)\\ \bm{u}=\bm{u}^{\text{tr}}(t) \\ \bm{d}=\bm{d}^{\text{tr}}(t)\end{smallmatrix}} = \begin{bmatrix} \zeros{10}{3} \\ \bm{M}(\bm{\zeta}^\text{tr}(t))^{-1} \bm{L}_\text{db} \end{bmatrix} \in \setrealmat{20}{3}  \text{.}
\end{align}

The linearized state-space equations \eqref{eq:estimation_statespacelinear} are then evaluated around an equilibrium point, resulting in a time-invariant system, and discretized using the zero-order-holder (ZOH) method for a given sampling time $T_s$, yielding\footnote{In order to avoid misleading, $(\cdot)^\text{tr} \triangleq (\cdot)^\text{eq}$ and $\delta (\cdot) \triangleq \Delta (\cdot)$.} 
\begin{equation} \label{eq:estimation_statespacelineardiscrete}
	\Delta \bm{x}_{k} = \bm{A}_\text{d} \Delta \bm{x}_{k-1} + \bm{B}_\text{d} \Delta \bm{u}_{k-1} + \bm{F}_\text{d} \bm{d}_{k-1} + \bm{w}_{k-1} \text{,} \\
\end{equation}
with `d' denoting discrete-time, $\bm{A}_\text{d} \in \setrealmat{20}{20}$, $\bm{B}_\text{d} \in \setrealmat{20}{4}$ and $\bm{F}_\text{d} \in \setrealmat{20}{3}$, with $\bm{w} \in \setrealvec{20}$ corresponding to unmodeled dynamics associated with linearization (truncated terms of the Taylor series expansion).

To improve state estimation, the external disturbances are also estimated by augmenting the state vector, as
\begin{equation} \label{eq:estimation_predictionmodelF}
	\underbrace{\begin{bmatrix} \Delta \bm{x}_k \\ \bm{d}_k \end{bmatrix}}_{\bm{\nu}_{k}} = \underbrace{\begin{bmatrix} \bm{A}_\text{d} & \bm{F}_\text{d} \\ \zeros{3}{20} & \eye{3} \end{bmatrix}}_{\Anu} \underbrace{\begin{bmatrix} \Delta \bm{x}_{k-1} \\ \bm{d}_{k-1} \end{bmatrix}}_{\bm{\nu}_{k-1}} + \underbrace{\begin{bmatrix} \bm{B}_\text{d} \\ \zeros{3}{4} \end{bmatrix}}_{\Bnu} \Delta \bm{u}_{k-1} + \underbrace{\begin{bmatrix} \bm{w}_{k-1} \\ \tilde{\bm{d}}_{k-1} \end{bmatrix}}_{\bar{\bm{w}}_{k-1}} \text{,}
\end{equation}
where $\tilde{\bm{d}}_{k-1} \triangleq \bm{d}_{k} - \bm{d}_{k-1}$. Moreover, linearizing the measurement equation \eqref{eq:estimation_measurementequations} around an equilibrium point yields
\begin{equation}
\bm{y}_{k} = \bm{\pi}(\bm{x}^\text{eq}) + \bm{H}_\text{d} \Delta \bm{x}_{k} + \bm{v}_k, \quad \bm{H}_\text{d} \triangleq \left. \frac{\partial \bm{\pi}(\bm{x})}{\partial \bm{x}} \right|_{\bm{x} = \bm{x}^\text{eq}} \in \setrealmat{16}{20} \text{,}
\end{equation}
with $\bm{v}_k$ now including unmodeled dynamics due to linearization. The last equation can be rewritten as
\begin{equation}  \label{eq:estimation_predictionmodelH}
\bm{y}_k = \underbrace{\begin{bmatrix} \bm{H}_\text{d} & \zeros{16}{3} \end{bmatrix}}_{\Hnu} \underbrace{\begin{bmatrix} \Delta \bm{x}_k \\ \bm{d}_k \end{bmatrix}}_{\bm{\nu}_k} + \underbrace{\bm{\pi}(\bm{x}^\text{eq}) + \bm{v}_k}_{\bar{\bm{v}}_k} \text{.}
\end{equation}

Now, taking into account the different sampling times, yields
\begin{equation}  \label{eq:estimation_switchedpredictionmodelH}
	\bm{y}_k = \underbrace{\begin{bmatrix} \bm{H}_\text{d}^{[k]} & \zeros{\iota_k}{3} \end{bmatrix}}_{\Hnu^{[k]}} \underbrace{\begin{bmatrix} \Delta \bm{x}_k \\ \bm{d}_k \end{bmatrix}}_{\bm{\nu}_k} + \underbrace{\bm{\pi}^{[k]}(\bm{x}^\text{eq}) + \bm{v}_k}_{\bar{\bm{v}}_k}, \quad \bm{H}^{[k]}_\text{d} \triangleq \left. \frac{\partial \bm{\pi}^{[k]}(\bm{x})}{\partial \bm{x}} \right|_{\bm{x} = \bm{x}^\text{eq}} \in \setrealmat{\iota_k}{20} \text{,}
\end{equation}
where $\bm{\pi}^{[k]}(\bm{x})$ is given by \eqref{eq:estimation_switchedmapping}.

\subsection{Zonotopic state estimator}
 
This subsection proposes the design of the zonotopic state estimator for the tilt-rotor UAV with suspended load, which extends the state estimation algorithm of \cite{Alamo2005a}, based on zonotopes and strips, assuming measurements with different sampling times.

Let $\setintval$ denote the set of real compact intervals. Define $\intval{a} = [ \lbound{a},\: \ubound{a} ] \triangleq \{a : \lbound{a} \leq a \leq \ubound{a}, \; \lbound{a},\: \ubound{a} \in \mathbb{R}\} \in \setintval$. Then, interval arithmetic operations are defined as $\intval{a} \odot \intval{b} \triangleq \{ a \odot b : a \in \intval{a},\: b \in \intval{b} \}$, with $\odot$ denoting `$+$', `$-$', `$\cdot$' or `$/$'. Elementary functions of $\intval{a}$, such as $\sin(\intval{a})$, are defined through their ranges over $\intval{a}$. Furthermore, $\midpoint{\intval{a}} \triangleq (1/2)(\lbound{a}+\ubound{a})$ and $\diam{\intval{a}} \triangleq (\ubound{a}-\lbound{a})$. Interval vectors and matrices are denoted by $\intval{\bm{a}}$ and $\intval{\bm{A}}$, respectively, for which $\midpoint{\cdot}$ and $\diam{\cdot}$ are defined component-wise. Moreover, an interval extension of a real valued function $f$ is denoted by $\square\{f\}$~\cite{Moore2009}.

Let the unitary interval be defined as $\ubox \triangleq [-1,\:1]$. Then $\ubox^r$ denotes a $r$-dimensional unitary box. An affine transformation of $\ubox^r$, given by $\{ \bm{c} + \bm{G}\bm{b} : \bm{b} \in \ubox^r \} = \bm{c} \oplus \bm{G}\ubox^r$, defines a $r$-order zonotope, with center $\bm{c} \in \setrealvec{n}$ and generator matrix $\bm{G} \in \setrealmat{n}{r}$, where $\oplus$ denotes the Minkowski sum of sets. A family of zonotopes, generated by an affine transformation of $\ubox^r$ through an interval matrix, is denoted by $\zonotopefamily = \bm{c} \oplus \intval{\bm{G}} \ubox^r \triangleq \{ \bm{c} + \bm{G} \bm{b}: \bm{G} \in \intval{\bm{G}}, \bm{b} \in \ubox^r \}$. A zonotope inclusion is defined by $\zinclusion{\zonotopefamily} \triangleq \bm{c} \oplus [ \midpoint{\intval{\bm{G}}} \; \bm{H}] \ubox^{n+r}$, with $\bm{H}_{ii} \triangleq (1/2) \sum_{j=1}^{r} \diam{\intval{\bm{G}}}_{ij}$, such that $\zonotopefamily \subseteq \zinclusion{\zonotopefamily}$. A strip is defined by $\strip \triangleq \{ \bm{x} \in \setrealvec{n} : |\bm{\rho}^T \bm{x} - \gamma| \leq \sigma \}$, with $\gamma$, $\sigma \in \setreal$, $\bm{\rho} \in \setrealvec{n}$ \cite{Alamo2005a}.

Consider the nonlinear discrete-time system 
\begin{equation} \label{eq:estimation_ZSEalgnonlinearsystem}
	\begin{aligned}
		\bm{x}_{k} & = \bm{f}(\bm{x}_{k-1}, \bm{w}_{k-1}) \text{,}\\
		\bm{y}_{k} & = \bm{g}(\bm{x}_{k}, \bm{v}_{k}) \text{,}
	\end{aligned}
\end{equation}
where $\bm{x}_{k} \in \setrealvec{n_x}$ are the state variables, $\bm{y}_{k} \in \setreal^{n_y}$ are the measured outputs, $\bm{w}_{k} \in \setrealvec{n_w}$ corresponds to process disturbances and parametric uncertainties, and $\bm{v}_{k} \in \setrealvec{n_v}$ represents measurement noise. Assume that $\bm{w}_k$, $\bm{v}_k$, and $\bm{x}_0$ belong to known compact sets $\mathbb{W}$, $\mathbb{V}$, and $\mathbb{X}_0$, respectively. %
Given the compact set $\mathbb{X}_{k-1}$, such that $\bm{x}_{k-1} \in \mathbb{X}_{k-1}$, the \emph{uncertain trajectory} of the system \eqref{eq:estimation_ZSEalgnonlinearsystem}, denoted by $\bm{f}(\mathbb{X}_{k-1}, \mathbb{W})$, is defined as the set of values that the time-update equation $\bm{f}$ achieves for all $\bm{x}_{k-1} \in \mathbb{X}_{k-1}$ and $\bm{w} \in \mathbb{W}$. Moreover, given the measured output $\bm{y}_k$, the \emph{consistent state set} is defined as $\mathbb{X}_{y_k} \triangleq \{ \bm{x} \in \setrealvec{n_x} : \bm{y}_k \in \bm{g}(\bm{x}, \mathbb{V}) \}$, and the \emph{exact uncertain state set} is defined by $\mathbb{X}_k \triangleq \bm{f}(\mathbb{X}_{k-1}. \mathbb{W}) \cap \mathbb{X}_{y_k}$. 
%

According to \cite{Kuhn1998}, uncertain trajectories of discrete-time systems can be bounded by zonotopes with sub-exponential overestimation. Consider the following theorems, presented in \citep{Alamo2005a}, whose proofs are omitted.

\begin{theorem}[Generalization of K\"{u}hn's method] \label{the:estimation_uncertaintrajectory} 
	Given a function $\bm{f}(\bm{x},\bm{w})$ with $\bm{x} \in \mathbb{X} \subset \setreal^{n_x}$ and $\bm{w} \in \mathbb{W} \subset \setreal^{n_w}$, in which $\mathbb{X} \triangleq \bm{c}_x \oplus \bm{G}_x \ubox^{r_x}$ and $\mathbb{W} \triangleq \bm{c}_w \oplus \bm{G}_w \ubox^{r_w}$ are known zonotopes. Define a zonotope $\zonotope_q \triangleq \bm{c}_q \oplus \bm{G}_q \ubox^{r_q}$ such that $\bm{f}(\bm{c}_x, \mathbb{W}) \subseteq \zonotope_q$; an interval matrix $\intval{\bm{M}} \triangleq \square\{\nabla_{\bm{x}} \bm{f}(\mathbb{X}, \mathbb{W})\} \bm{G}_x$; and a zonotope $\zonotope_\Psi \triangleq \zonotope_q \oplus \zinclusion{\intval{\bm{M}} \ubox^{r_x}}$. Then $\bm{f}(\mathbb{X}, \mathbb{W}) \subseteq \zonotope_\Psi$.
\end{theorem}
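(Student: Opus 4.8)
The plan is to prove the set inclusion by fixing an arbitrary element of $\bm{f}(\mathbb{X}, \mathbb{W})$ and exhibiting it as a member of $\zonotope_\Psi$. First I would take any $\bm{x} \in \mathbb{X}$ and $\bm{w} \in \mathbb{W}$, and, since $\mathbb{X} = \bm{c}_x \oplus \bm{G}_x \ubox^{r_x}$, write $\bm{x} = \bm{c}_x + \bm{G}_x \bm{b}$ for some $\bm{b} \in \ubox^{r_x}$. The central tool is the mean value theorem, used to decompose $\bm{f}(\bm{x}, \bm{w})$ about the center $\bm{c}_x$ as
\begin{equation}
	\bm{f}(\bm{x}, \bm{w}) = \bm{f}(\bm{c}_x, \bm{w}) + \bm{N}\,(\bm{x} - \bm{c}_x) = \bm{f}(\bm{c}_x, \bm{w}) + \bm{N} \bm{G}_x \bm{b} \text{,}
\end{equation}
where $\bm{N}$ is a matrix whose $i$-th row is $\nabla_{\bm{x}} f_i$ evaluated at some intermediate point $\bm{\zeta}_i$ lying on the segment joining $\bm{c}_x$ and $\bm{x}$.

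Next I would bound the two summands separately. The first term satisfies $\bm{f}(\bm{c}_x, \bm{w}) \in \bm{f}(\bm{c}_x, \mathbb{W}) \subseteq \zonotope_q$ directly by the hypothesis on $\zonotope_q$. For the second term, the convexity of the zonotope $\mathbb{X}$ guarantees that every intermediate point $\bm{\zeta}_i$ remains inside $\mathbb{X}$, so each row of $\bm{N}$ is enclosed, entrywise, by the corresponding row of the interval extension $\square\{\nabla_{\bm{x}} \bm{f}(\mathbb{X}, \mathbb{W})\}$; hence $\bm{N} \in \square\{\nabla_{\bm{x}} \bm{f}(\mathbb{X}, \mathbb{W})\}$. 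It follows that $\bm{N} \bm{G}_x \in \intval{\bm{M}}$, and since $\bm{b} \in \ubox^{r_x}$ we obtain $\bm{N} \bm{G}_x \bm{b} \in \intval{\bm{M}} \ubox^{r_x}$, the zonotope family generated by the interval matrix $\intval{\bm{M}}$.

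Finally I would invoke the zonotope-inclusion property $\intval{\bm{M}} \ubox^{r_x} \subseteq \zinclusion{\intval{\bm{M}} \ubox^{r_x}}$ from the definition stated in the text, replacing the interval-matrix family by a single enclosing zonotope. Combining the two bounds through the Minkowski sum then yields $\bm{f}(\bm{x}, \bm{w}) \in \zonotope_q \oplus \zinclusion{\intval{\bm{M}} \ubox^{r_x}} = \zonotope_\Psi$, and since $\bm{x}$ and $\bm{w}$ were arbitrary, $\bm{f}(\mathbb{X}, \mathbb{W}) \subseteq \zonotope_\Psi$. The main obstacle is the careful handling of the vector-valued mean value theorem: there is no single intermediate point valid for all components of $\bm{f}$ simultaneously, so the theorem must be applied component-wise, and it is precisely the convexity of the zonotope that keeps all the distinct evaluation points $\bm{\zeta}_i$ inside $\mathbb{X}$, thereby justifying the row-wise enclosure of $\bm{N}$ by the interval Jacobian.
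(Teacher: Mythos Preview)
The paper does not supply its own proof of this theorem: it explicitly states that Theorems~\ref{the:estimation_uncertaintrajectory}--\ref{the:estimation_minimumsegments} are ``presented in \cite{Alamo2005a}, whose proofs are omitted.'' Hence there is nothing in the paper to compare your argument against directly. That said, your proposal is correct and is essentially the standard argument used in the cited source: a component-wise mean value expansion about $\bm{c}_x$, the convexity of $\mathbb{X}$ to keep the intermediate points inside the domain of the interval Jacobian enclosure, and the defining property of the zonotope inclusion $\zinclusion{\cdot}$ to pass from the interval-matrix family $\intval{\bm{M}}\ubox^{r_x}$ to a single zonotope. Your explicit remark that the vector mean value theorem must be applied row by row, with possibly distinct $\bm{\zeta}_i$, is exactly the subtlety that needs to be handled, and you handle it correctly.
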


\begin{theorem} \label{the:estimation_consistentstates} 
	Given the zonotope $\bar{\mathbb{X}}_k \subset \setreal^{n_x}$ and the $i$-th measured output $\bm{y}_k(i)$. Define $\bm{\rho} \in \setreal^{n_x}$, $s \in \setreal$ and $\sigma \in \setreal$, obtained through interval arithmetic, such that $\bm{\rho} = \midpoint{\square\{\nabla_{\bm{x}} \bm{g}(i) (\bar{\mathbb{X}}_k, \mathbb{V})\}}$ and $\bm{\rho}^T \bar{\mathbb{X}}_k - \bm{g}(i)(\bar{\mathbb{X}}_k, \mathbb{V}) \subseteq [s - \sigma,\: s + \sigma]$. Then, $\bar{\mathbb{X}}_k \cap \mathbb{X}_{y_k(i)} \subseteq \bar{\mathbb{X}}_k \cap \bar{\mathbb{X}}_{y_k(i)}$, where $\bar{\mathbb{X}}_{y_k(i)} \triangleq \{ \bm{x} \in \setreal^{n_x} : |\bm{\rho}^T \bm{x} - \bm{y}_k(i) - s| \leq \sigma \}$.	
\end{theorem}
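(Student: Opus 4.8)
The plan is to establish the set inclusion by a direct pointwise argument: I take an arbitrary state $\bm{x}$ belonging to the left-hand set $\bar{\mathbb{X}}_k \cap \mathbb{X}_{y_k(i)}$ and show that it also lies in the right-hand set $\bar{\mathbb{X}}_k \cap \bar{\mathbb{X}}_{y_k(i)}$. Since membership in $\bar{\mathbb{X}}_k$ is shared by both sides and therefore automatic, the whole argument collapses to verifying the single scalar inequality $|\bm{\rho}^T \bm{x} - \bm{y}_k(i) - s| \leq \sigma$ that defines the strip $\bar{\mathbb{X}}_{y_k(i)}$.

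First I would unfold the definition of the consistent state set. By construction, $\mathbb{X}_{y_k(i)} = \{ \bm{x} \in \setreal^{n_x} : \bm{y}_k(i) \in \bm{g}(i)(\bm{x}, \mathbb{V}) \}$, so for the chosen $\bm{x}$ there exists an admissible noise realization $\bm{v} \in \mathbb{V}$ satisfying the equality $\bm{y}_k(i) = \bm{g}(i)(\bm{x}, \bm{v})$. This is the pivotal step, since it converts the abstract consistency condition into a concrete scalar identity that can be substituted at the end.

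Next I would invoke the hypothesis on the interval enclosure. Because $\bm{x} \in \bar{\mathbb{X}}_k$ and $\bm{v} \in \mathbb{V}$, the real number $\bm{\rho}^T \bm{x} - \bm{g}(i)(\bm{x}, \bm{v})$ is one particular value of the expression whose exact range over $\bar{\mathbb{X}}_k \times \mathbb{V}$ is guaranteed to be contained in the interval extension $\square\{\cdot\}$; by assumption this extension is itself contained in $[s - \sigma,\: s + \sigma]$. Substituting the identity $\bm{g}(i)(\bm{x}, \bm{v}) = \bm{y}_k(i)$ then gives $\bm{\rho}^T \bm{x} - \bm{y}_k(i) \in [s - \sigma,\: s + \sigma]$, which rearranges at once into $|\bm{\rho}^T \bm{x} - \bm{y}_k(i) - s| \leq \sigma$, placing $\bm{x}$ in $\bar{\mathbb{X}}_{y_k(i)}$ and closing the inclusion.

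I expect the only delicate point to be the justification that the paired value $\bm{\rho}^T \bm{x} - \bm{g}(i)(\bm{x}, \bm{v})$, evaluated at the \emph{same} $\bm{x}$ in both the linear term and inside $\bm{g}(i)$, genuinely lies in the interval-arithmetic enclosure. This holds because $\square\{\cdot\}$ is an inclusion function: its output encloses the true range of the expression over all arguments drawn from the input sets, and in particular over the diagonal where the two occurrences of $\bm{x}$ coincide. I would also remark that the specific prescription $\bm{\rho} = \midpoint{\square\{\nabla_{\bm{x}} \bm{g}(i)(\bar{\mathbb{X}}_k, \mathbb{V})\}}$ is immaterial to the validity of the inclusion itself; any vector $\bm{\rho}$ admitting such an $(s,\sigma)$ pair would suffice. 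The midpoint choice serves only to make the strip half-width $\sigma$ small, thereby tightening the resulting state enclosure and limiting the conservatism of the estimator.
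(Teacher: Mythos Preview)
Your argument is correct. The paper itself does not supply a proof of this theorem; it is quoted from \cite{Alamo2005a} with the proof explicitly omitted, so there is no in-paper argument to compare against. What you have written is precisely the standard proof of this result: pick $\bm{x}\in\bar{\mathbb{X}}_k\cap\mathbb{X}_{y_k(i)}$, extract a witness $\bm{v}\in\mathbb{V}$ with $\bm{y}_k(i)=\bm{g}(i)(\bm{x},\bm{v})$, and use the interval enclosure hypothesis to bound $\bm{\rho}^T\bm{x}-\bm{y}_k(i)$. Your remark that the inclusion-function property handles the diagonal case (same $\bm{x}$ in both terms) is exactly the point the paper later exploits in its Proposition~1, and your observation that the midpoint choice of $\bm{\rho}$ is a tightness heuristic rather than a logical requirement is also accurate.
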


\begin{theorem} \label{the:estimation_intersection}
	Given a zonotope $\zonotope \triangleq \bm{c} \oplus \bm{G} \ubox^r \subset \setreal^{n}$, a strip $\strip \triangleq \{ \bm{x} \in \setreal^{n} : | \bm{\rho}^T \bm{x} - \gamma| \leq \sigma \}$ and a vector $\bm{\lambda} \in \setreal^{n}$. Define $\bm{c}_I(\bm{\lambda}) \triangleq \bm{c} + \bm{\lambda}(\gamma - \bm{\rho}^T \bm{c})$ and $\bm{G}_I(\bm{\lambda}) \triangleq [(\eye{n} - \bm{\lambda}\bm{\rho}^T)\bm{G} \;\, \sigma \bm{\lambda}]$. Then, $\zonotope \cap \strip \subseteq \zonotope_I(\bm{\lambda}) \triangleq \bm{c}_I(\bm{\lambda}) \oplus \bm{G}_I(\bm{\lambda}) \ubox^{r+1}$.
\end{theorem}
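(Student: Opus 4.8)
The plan is to prove the inclusion pointwise: I would take an arbitrary $\bm{x} \in \zonotope \cap \strip$ and exhibit an explicit pre-image in the unitary box $\ubox^{r+1}$ under the affine map defining $\zonotope_I(\bm{\lambda})$. Membership in the zonotope gives $\bm{x} = \bm{c} + \bm{G}\bm{b}$ for some $\bm{b} \in \ubox^r$, while membership in the strip, i.e. $|\bm{\rho}^T\bm{x} - \gamma| \leq \sigma$, is equivalent to the existence of a scalar $d \in [-1,\,1]$ with $\bm{\rho}^T\bm{x} - \gamma = \sigma d$. The goal is then to find $\bm{b}' \in \ubox^r$ and $d' \in [-1,\,1]$ such that $\bm{x} = \bm{c}_I(\bm{\lambda}) + \bm{G}_I(\bm{\lambda})\,[\,\bm{b}'^T \;\, d'\,]^T$, and I claim that the natural choice $\bm{b}' = \bm{b}$, $d' = d$ works, which would immediately yield $\bm{x} \in \zonotope_I(\bm{\lambda})$ since $[\,\bm{b}^T \;\, d\,]^T \in \ubox^{r+1}$.

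The key algebraic step is to eliminate $\gamma$ from the definition of the new center using the strip equation. I would write $\gamma = \bm{\rho}^T\bm{x} - \sigma d$, and since $\bm{x} - \bm{c} = \bm{G}\bm{b}$ this gives $\gamma - \bm{\rho}^T\bm{c} = \bm{\rho}^T(\bm{x} - \bm{c}) - \sigma d = \bm{\rho}^T\bm{G}\bm{b} - \sigma d$. Substituting this into $\bm{c}_I(\bm{\lambda}) = \bm{c} + \bm{\lambda}(\gamma - \bm{\rho}^T\bm{c})$ and subtracting from $\bm{x} = \bm{c} + \bm{G}\bm{b}$ produces $\bm{x} - \bm{c}_I(\bm{\lambda}) = \bm{G}\bm{b} - \bm{\lambda}(\bm{\rho}^T\bm{G}\bm{b} - \sigma d) = (\eye{n} - \bm{\lambda}\bm{\rho}^T)\bm{G}\bm{b} + \sigma\bm{\lambda}d$. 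The right-hand side is exactly $\bm{G}_I(\bm{\lambda})\,[\,\bm{b}^T \;\, d\,]^T$, which closes the argument and establishes the claimed set inclusion.

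The decomposition is essentially the entire content of the statement, so there is no genuine obstacle once it is set up; the only point requiring care is the bookkeeping that the slack scalar $d$ coming from the strip width is precisely what the appended generator column $\sigma\bm{\lambda}$ absorbs, while the matrix factor $(\eye{n} - \bm{\lambda}\bm{\rho}^T)$ arises from projecting the original generators along $\bm{\rho}$ weighted by the free parameter $\bm{\lambda}$. I would emphasize that $\bm{\lambda} \in \setrealvec{n}$ is entirely arbitrary, so the inclusion holds for every such $\bm{\lambda}$; this freedom is exactly what is later exploited to minimize the size of $\zonotope_I(\bm{\lambda})$ (e.g. its segments or a weighted trace of its generator Gram matrix), but it plays no role in the validity of the inclusion itself.
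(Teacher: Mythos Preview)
Your argument is correct and is precisely the standard pointwise verification of this inclusion: given $\bm{x}=\bm{c}+\bm{G}\bm{b}$ and $\bm{\rho}^T\bm{x}-\gamma=\sigma d$ with $\bm{b}\in\ubox^r$, $d\in\ubox$, the algebra you wrote shows $\bm{x}=\bm{c}_I(\bm{\lambda})+\bm{G}_I(\bm{\lambda})[\,\bm{b}^T\;d\,]^T$, and nothing more is needed. Note, however, that the paper does not supply its own proof of this theorem; it is quoted from \cite{Alamo2005a} with the explicit remark that the proofs are omitted, so there is nothing in the paper to compare your approach against beyond observing that your derivation matches the original source.
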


\begin{theorem} \label{the:estimation_minimumsegments} 
	Let $\zonotope_\text{I}(\bm{\lambda}) = \bm{c}_\text{I}(\bm{\lambda}) \oplus \bm{G}_\text{I}(\bm{\lambda}) \ubox^{r+1} \subset \setreal^{n}$, where $\bm{c}_\text{I}(\bm{\lambda}) \triangleq \bm{c} + \bm{\lambda}(\gamma - \bm{\rho}^T \bm{c})$ and $\bm{G}_\text{I}(\bm{\lambda}) \triangleq [(\eye{n} - \bm{\lambda}\bm{\rho}^T)\bm{G} \; \sigma \bm{\lambda}]$. Then, $\bm{\lambda} = (\bm{G} \bm{G}^T \bm{\rho})/(\bm{\rho}^T \bm{G} \bm{G}^T \bm{\rho} + \sigma^2)$ minimizes the Frobenius norm of $\bm{G}_\text{I}(\bm{\lambda})$.
\end{theorem}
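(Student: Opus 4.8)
The plan is to reduce this matrix minimization to an unconstrained convex quadratic program in the vector $\bm{\lambda}$, whose stationarity condition yields the claimed expression directly. First I would exploit the horizontal block structure of $\bm{G}_\text{I}(\bm{\lambda}) = [(\eye{n} - \bm{\lambda}\bm{\rho}^T)\bm{G} \;\, \sigma\bm{\lambda}]$: since the squared Frobenius norm of a concatenated matrix is the sum of the squared Frobenius norms of its blocks, and since minimizing $\|\bm{G}_\text{I}(\bm{\lambda})\|_F$ is equivalent to minimizing its square, I would write
\[
\|\bm{G}_\text{I}(\bm{\lambda})\|_F^2 = \|(\eye{n} - \bm{\lambda}\bm{\rho}^T)\bm{G}\|_F^2 + \sigma^2\, \bm{\lambda}^T\bm{\lambda}.
\]

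Next I would expand the first term via $\|\bm{A}\|_F^2 = \trace{\bm{A}\bm{A}^T}$. Introducing the symmetric positive semidefinite matrix $\bm{Q} \triangleq \bm{G}\bm{G}^T$ gives $\|(\eye{n} - \bm{\lambda}\bm{\rho}^T)\bm{G}\|_F^2 = \trace{(\eye{n} - \bm{\lambda}\bm{\rho}^T)\bm{Q}(\eye{n} - \bm{\rho}\bm{\lambda}^T)}$. Expanding the product and using the cyclic property of the trace together with the symmetry of $\bm{Q}$, the two linear cross terms coincide and combine into $-2\bm{\rho}^T\bm{Q}\bm{\lambda}$, while the remaining quadratic term factors as $(\bm{\rho}^T\bm{Q}\bm{\rho})(\bm{\lambda}^T\bm{\lambda})$. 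Collecting everything yields the explicit scalar objective
\[
\phi(\bm{\lambda}) \triangleq \|\bm{G}_\text{I}(\bm{\lambda})\|_F^2 = \trace{\bm{Q}} - 2\bm{\rho}^T\bm{Q}\bm{\lambda} + (\bm{\rho}^T\bm{Q}\bm{\rho} + \sigma^2)\,\bm{\lambda}^T\bm{\lambda}.
\]

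Then I would minimize $\phi$ over $\bm{\lambda} \in \setreal^{n}$. Setting the gradient to zero, $\nabla_{\bm{\lambda}}\phi = -2\bm{Q}\bm{\rho} + 2(\bm{\rho}^T\bm{Q}\bm{\rho} + \sigma^2)\bm{\lambda} = \bm{0}$, and solving gives $\bm{\lambda} = \bm{Q}\bm{\rho}/(\bm{\rho}^T\bm{Q}\bm{\rho} + \sigma^2) = (\bm{G}\bm{G}^T\bm{\rho})/(\bm{\rho}^T\bm{G}\bm{G}^T\bm{\rho} + \sigma^2)$, which is exactly the stated minimizer. To confirm this stationary point is a genuine minimum rather than a saddle, I would observe that the Hessian of $\phi$ equals $2(\bm{\rho}^T\bm{Q}\bm{\rho} + \sigma^2)\eye{n}$, a nonnegative scalar multiple of the identity; it is positive definite whenever $\sigma \neq 0$, so $\phi$ is (strictly) convex and the critical point is its unique global minimizer.

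The computation is essentially routine, and I anticipate no genuine obstacle: the one step demanding care is the trace expansion, where one must verify, using the symmetry of $\bm{Q}$, that $\trace{\bm{\lambda}\bm{\rho}^T\bm{Q}}$ and $\trace{\bm{Q}\bm{\rho}\bm{\lambda}^T}$ are the same scalar (so that they merge into the factor $-2$) and that $\trace{\bm{\lambda}\bm{\rho}^T\bm{Q}\bm{\rho}\bm{\lambda}^T}$ separates as $(\bm{\rho}^T\bm{Q}\bm{\rho})(\bm{\lambda}^T\bm{\lambda})$. Once $\phi$ is in the displayed quadratic form, convexity makes the first-order condition both necessary and sufficient, closing the argument.
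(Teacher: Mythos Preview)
Your argument is correct: the expansion of $\|\bm{G}_\text{I}(\bm{\lambda})\|_F^2$ into the convex quadratic $\phi(\bm{\lambda})$ is accurate, the first-order condition yields exactly the claimed $\bm{\lambda}$, and the Hessian check establishes global optimality. The paper itself does not prove this theorem; it is quoted from \cite{Alamo2005a} with the proof explicitly omitted, and your derivation is the natural one (and essentially the one given in that reference).
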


Assume that a previously estimated set $\hat{\mathbb{X}}_{k-1}$ is available. Then, a zonotope $\bar{\mathbb{X}}_{k}$ bounding the uncertain trajectory $\bm{f}(\hat{\mathbb{X}}_{k-1}, \mathbb{W})$ can be obtained through Theorem \ref{the:estimation_uncertaintrajectory}. This operation is called \emph{prediction step} \cite{Le2013}, as an analogy to the Kalman filter algorithm. %
Moreover, given the $i$-th measured output $\bm{y}_k(i)$ and the zonotope $\bar{\mathbb{X}}_{k}$, a strip $\bar{\mathbb{X}}_{y_k(i)}$ can be computed through Theorem \ref{the:estimation_consistentstates}, such that $\bar{\mathbb{X}}_k \cap \mathbb{X}_{y_k(i)} \subseteq \bar{\mathbb{X}}_k \cap \bar{\mathbb{X}}_{y_k(i)}$. Then, a zonotope bounding $\bar{\mathbb{X}}_k \cap \bar{\mathbb{X}}_{y_k(i)}$ is obtained through Theorem \ref{the:estimation_intersection}. These operations together are called \emph{update step}. The resulting zonotope is parametrized by a vector $\bm{\lambda} \in \setrealvec{n_x}$, which is chosen according to specific criteria. A choice that minimizes the Frobenius norm of its generator matrix is given by Theorem \ref{the:estimation_minimumsegments}. Furthermore, Algorithm \ref{alg:estimation_Zorderredduction}, proposed by \cite{Combastel2003}, can be used to prevent the complexity of $\hat{\mathbb{X}}_k$ from increasing indefinitely, by computing a lower-order zonotope bounding $\hat{\mathbb{X}}_k$.

\begin{algorithm}
	\caption{Zonotope order reduction algorithm}
	\label{alg:estimation_Zorderredduction}
	\begin{algorithmic}[1]
		\Procedure{order\_reduction}{$\hat{\mathbb{X}}_{k}, r_\text{max}$}
		\State $\bm{H} \gets $ columns of $\bm{G}_{\hat{x}_k}$ ordered in decreasing Euclidean norm
		\State $\bm{H}_T \gets $ first $r_\text{max}-n_x$ columns of $\bm{H}$
		\For {$i = 1,...,n_x$}
		\State $\bm{Q}_{ii} \gets \sum_{j=r_\text{max}-n_x+1}^{r_{\hat{x}_k}} |\bm{H}_{ij}|$
		\EndFor
		\State $\hat{\mathbb{X}}_k \gets \bm{c}_{\hat{x}_k} \oplus [\bm{H}_T \; \bm{Q}] \ubox^{r_\text{max}}$ 
		\State \textbf{return} $\hat{\mathbb{X}}_k$
		\EndProcedure
	\end{algorithmic}
\end{algorithm}

The ZSE is summarized in Algorithm \ref{alg:estimation_ZSEalg}. It can be applied to multi-output systems by performing the update step using each element of the measured output vector iteratively \citep{Le2013}. Moreover, the possibility of dealing with measurements individually allows one to handle situations in which sensors have different sampling times in a straightforward manner.

\begin{algorithm}[htb]
	\caption{Zonotopic state estimator algorithm}
	\label{alg:estimation_ZSEalg}
	\begin{algorithmic}[1]
		\State Compute the zonotope $\bar{\mathbb{X}}_k \supseteq \bm{f}(\hat{\mathbb{X}}_{k-1},\mathbb{W})$ through Theorem \ref{the:estimation_uncertaintrajectory}
		\State Compute the strip $\bar{\mathbb{X}}_{y_k}$ through Theorem \ref{the:estimation_consistentstates}
		\State Compute the zonotope $\hat{\mathbb{X}}_k (\bm{\lambda}) \supseteq \bar{\mathbb{X}}_k \cap \bar{\mathbb{X}}_{y_k(i)}$ through Theorem \ref{the:estimation_intersection}
	\end{algorithmic}
\end{algorithm}

Although the zonotopic state estimator is formulated for nonlinear systems, the computational effort of computing the prediction step for \eqref{eq:modeling_statespacenonlinear} is very high\footnote{Theorem \ref{the:estimation_uncertaintrajectory} requires online computation of interval extensions over the Jacobian of the time-update equation. However, due to limited computational resources, an analytical expression for $\bm{M}(\bm{q})^{-1}$ could not be obtained, hence neither for $\bm{\varphi}(\bm{x},\bm{u},\bm{d})$, and consequently for the associated Jacobian.}. Therefore, the linearized system \eqref{eq:estimation_statespacelineardiscrete} is used instead. Moreover, the whole predicted set $\bar{\mathbb{X}}_k$ appears at least twice in the computation of the parameters $s$ and $\sigma$ (Theorem \ref{the:estimation_consistentstates}), which is performed through interval arithmetic. Therefore, due to interval dependency \cite{Moore2009}, it may yield a very large strip such that the intersection is $\bar{\mathbb{X}}_k$ itself. The next proposition shows that this problem can be avoided if the measurement equation is linear.
\begin{proposition}
	Consider the predicted zonotope $\bar{\mathbb{X}}_k$ and the linear measurement equation $\bm{y}_k = \bm{g}(\bm{x}_k,\bm{v}_k) \triangleq \bm{H} \bm{x}_k + \bm{v}_k$, where $\bm{y}_k \in \setrealvec{n_y}$ are the measured outputs, $\bm{x}_k \in \setrealvec{n_x}$ are the system states, $\bm{v}_k \in \mathbb{V}$, $\mathbb{V} \triangleq \bm{c}_v \oplus \bm{G}_v \ubox^{r_v} \subset \setrealvec{n_y}$, corresponds to measurement noise, and $\bm{H} \in \setrealmat{n_y}{n_x}$. If interval extensions are performed at a proper step, the strip obtained through Theorem \ref{the:estimation_consistentstates} using the $i$-th measurement $\bm{y}_k(i)$ is invariant with respect to $\bar{\mathbb{X}}_k$.
\end{proposition}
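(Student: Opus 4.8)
The plan is to exploit that, for a linear measurement map, the Jacobian appearing in Theorem \ref{the:estimation_consistentstates} is constant and the quantity defining the strip collapses to an expression involving only the noise set. Writing $\bm{H}_{(i)}$ for the $i$-th row of $\bm{H}$, the $i$-th component of the measurement is $\bm{g}(i)(\bm{x}_k,\bm{v}_k) = \bm{H}_{(i)}\bm{x}_k + \bm{v}_k(i)$, so its gradient with respect to the state is $\nabla_{\bm{x}}\bm{g}(i) = \bm{H}_{(i)}^T$, a constant vector. Consequently $\square\{\nabla_{\bm{x}}\bm{g}(i)(\bar{\mathbb{X}}_k,\mathbb{V})\}$ is a degenerate, zero-width interval vector whose midpoint is exactly $\bm{H}_{(i)}^T$, and hence $\bm{\rho} = \bm{H}_{(i)}^T$ independently of $\bar{\mathbb{X}}_k$. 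This is the first point I would establish.

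Next I would substitute this $\bm{\rho}$ into the expression $\bm{\rho}^T\bm{x}_k - \bm{g}(i)(\bm{x}_k,\bm{v}_k)$ from which $s$ and $\sigma$ are extracted. Symbolically, $\bm{\rho}^T\bm{x}_k - \bm{g}(i)(\bm{x}_k,\bm{v}_k) = \bm{H}_{(i)}\bm{x}_k - \bm{H}_{(i)}\bm{x}_k - \bm{v}_k(i) = -\bm{v}_k(i)$, so the dependence on $\bm{x}_k$ cancels exactly. The clause ``interval extensions are performed at a proper step'' in the statement is precisely the instruction to carry out this cancellation analytically \emph{before} passing to interval arithmetic: evaluating $\square\{\bm{\rho}^T\bar{\mathbb{X}}_k\}$ and $\square\{\bm{g}(i)(\bar{\mathbb{X}}_k,\mathbb{V})\}$ separately and then subtracting would, by interval dependency, treat the two occurrences of $\bm{x}_k$ as independent, losing the cancellation and reintroducing the very overestimation the proposition aims to avoid.

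Having reduced the expression to $-\bm{v}_k(i)$, I would compute its range over $\mathbb{V} = \bm{c}_v \oplus \bm{G}_v\ubox^{r_v}$. Since $\bm{v}_k(i) = \bm{c}_v(i) + \sum_{j=1}^{r_v}(\bm{G}_v)_{ij} b_j$ for $\bm{b}\in\ubox^{r_v}$, the range of $-\bm{v}_k(i)$ is $[-\bm{c}_v(i) - \eta_i,\, -\bm{c}_v(i) + \eta_i]$ with $\eta_i \triangleq \sum_{j=1}^{r_v}|(\bm{G}_v)_{ij}|$. One may therefore take $s = -\bm{c}_v(i)$ and $\sigma = \eta_i$, both functions of $\mathbb{V}$ and the index $i$ alone. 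As $\bm{\rho}$, $s$, and $\sigma$ are thus all independent of $\bar{\mathbb{X}}_k$, the resulting strip $\bar{\mathbb{X}}_{y_k(i)} = \{\bm{x} : |\bm{\rho}^T\bm{x} - \bm{y}_k(i) - s| \leq \sigma\}$ is invariant with respect to $\bar{\mathbb{X}}_k$, which is the claim.

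I expect the only genuine subtlety to be justifying the ``proper step'': making explicit that the exact cancellation $\bm{\rho}^T\bm{x}_k - \bm{H}_{(i)}\bm{x}_k = 0$ must be performed symbolically, and that this is legitimate here because $\bm{\rho}^T$ equals the constant Jacobian row $\bm{H}_{(i)}$ \emph{identically} rather than only approximately. Everything else is a direct substitution followed by an elementary interval-range computation.
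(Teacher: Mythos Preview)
Your proposal is correct and follows essentially the same route as the paper: compute $\bm{\rho}=\bm{H}(i)^T$ from the constant Jacobian, simplify $\bm{\rho}^T\bm{x}_k-\bm{g}(i)(\bm{x}_k,\bm{v}_k)$ symbolically to $-\bm{v}_k(i)$ before taking interval extensions, and then read off $s=-\bm{c}_v(i)$ and $\sigma=\sum_{j}|(\bm{G}_v)_{ij}|$ from the zonotope $\mathbb{V}$. Your explicit discussion of why the cancellation must precede the interval evaluation is a helpful clarification of the ``proper step'' hypothesis.
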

\begin{proof}
	Let $\bm{y}_k(i) = \bm{g}(i)(\bm{x}_k,\bm{v}_k)$, where $(i)$ denotes $i$-th line. From Theorem \ref{the:estimation_consistentstates},
	\begin{align*}
	\bm{\rho} & = \midpoint{\iextension{}{\nabla_{\bm{x}} (\bm{g}(i)(\bm{x}_k,\bm{v}_k))}|^{\bm{x}_k = \bar{\mathbb{X}}_k}_{\bm{v}_k = \mathbb{V}}} = \midpoint{\iextension{}{\nabla_{\bm{x}} (\bm{H}(i) \bm{x}_k + \bm{v}_k(i))}|^{\bm{x}_k = \bar{\mathbb{X}}_k}_{\bm{v}_k = \mathbb{V}}} \\
	& = \midpoint{\iextension{}{\bm{H}(i)^T}|^{\bm{x}_k = \bar{\mathbb{X}}_k}_{\bm{v}_k = \mathbb{V}}} = \midpoint{\bm{H}(i)^T} = \bm{H}(i)^T \text{.}
	\end{align*}
	
	Define $\bm{h}(\bm{x}_k,\bm{v}_k) \triangleq \bm{\rho}^T\bm{x}_k - \bm{g}(i)(\bm{x}_k,\bm{v}_k)$. We have that $\bm{h}(\bm{x}_k,\bm{v}_k) = \bm{\rho}^T\bm{x}_k - (\bm{H}(i) \bm{x}_k + \bm{v}_k(i)) = \bm{H}(i)\bm{x}_k - \bm{H}(i) \bm{x}_k - \bm{v}_k(i) = -\bm{v}_k(i)$. Then, $\iextension{}{\bm{h}(\bm{x}_k,\bm{v}_k)}\!|^{\bm{x}_k = \bar{\mathbb{X}}_k}_{\bm{v}_k = \mathbb{V}}$ gives
	\begin{align*}
	\iextension{}{-\bm{v}_k(i)}|^{\bm{x}_k = \bar{\mathbb{X}}_k}_{\bm{v}_k = \mathbb{V}} & = -\left(\bm{c}_v \oplus \text{rs}\left(\bm{G}_v\right) \ubox^{r_v} \right)(i) = - ( \bm{c}_v(i) \oplus \sum_{j=1}^{r_{v}} |\bm{G}_{v}(i,j)| \ubox ) \\
	& = [- \bm{c}_v(i) - \sum_{j=1}^{r_{v}} |\bm{G}_{v}(i,j)|, ~ - \bm{c}_v(i) + \sum_{j=1}^{r_{v}} |\bm{G}_{v}(i,j)| ] \triangleq [s - \sigma,~ s + \sigma] \text{,}
	\end{align*}
	where $(i,j)$ denotes $i$-th line and $j$-th column, and $\text{rs}(\cdot)$ denotes row sum \cite{Kuhn1998}. Hence, $\bm{\rho} = \bm{H}(i)^T$, $s = - \bm{c}_v(i)$, and $\sigma = \sum_{j=1}^{r_{v}} |\bm{G}_{v}(i,j)|$, which do not depend on $\bar{\mathbb{X}}_k$.
\end{proof}


\textred{We now propose the zonotopic state estimation algorithm for the tilt-rotor UAV with suspended load, considering the time-update equation \eqref{eq:estimation_predictionmodelF} and measurement \eqref{eq:estimation_predictionmodelH}.}

\begin{proposition}
Let $\bm{\nu}_{k-1}$, $\bar{\bm{w}}$ and $\bar{\bm{v}}$ belong to zonotopes $\hat{\mathbb{X}}_{k-1} \triangleq \bm{c}_{\hat{x}_{k-1}} \oplus \bm{G}_{\hat{x}_{k-1}} \ubox^{r_{\hat{x}_{k-1}}}$, $\bar{\mathbb{W}} \triangleq \bm{c}_{\bar{w}} \oplus \bm{G}_{\bar{w}} \ubox^{r_{\bar{w}}}$ and $\bar{\mathbb{V}} \triangleq \bm{c}_{\bar{v}} \oplus \bm{G}_{\bar{v}} \ubox^{r_{\bar{v}}}$, respectively. Denote $\mathbb{I}_k$ as the set of available measurements at time instant $k$, which is given according to the sensors' sampling times. Applying Algorithm \ref{alg:estimation_ZSEalg} for equations \eqref{eq:estimation_predictionmodelF} and \eqref{eq:estimation_predictionmodelH}, being the update step performed iteratively for each measurement available at time instant $k$, leads to Algorithm \ref{alg:estimation_ZSEtiltrotor}. 
\end{proposition}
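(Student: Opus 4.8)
The plan is to specialize each of the three steps of Algorithm \ref{alg:estimation_ZSEalg} to the linear prediction model \eqref{eq:estimation_predictionmodelF} and the linear measurement model \eqref{eq:estimation_predictionmodelH}, exploiting linearity to collapse the generic interval-based constructions into closed-form zonotope updates, and then to iterate the update step over the instant-dependent index set $\mathbb{I}_k$ so as to absorb the different sampling times.

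First I would execute the prediction step (line 1 of Algorithm \ref{alg:estimation_ZSEalg}, which rests on Theorem \ref{the:estimation_uncertaintrajectory}). Reading \eqref{eq:estimation_predictionmodelF} as the time-update map $\bm{f}(\bm{\nu}_{k-1},\bar{\bm{w}}_{k-1}) = \Anu\bm{\nu}_{k-1} + \Bnu\Delta\bm{u}_{k-1} + \bar{\bm{w}}_{k-1}$, the decisive observation is that $\bm{f}$ is affine, so $\nabla_{\bm{\nu}}\bm{f} = \Anu$ is a constant matrix. Consequently the interval matrix $\intval{\bm{M}} = \iextension{}{\nabla_{\bm{\nu}}\bm{f}(\hat{\mathbb{X}}_{k-1},\bar{\mathbb{W}})}\,\bm{G}_{\hat{x}_{k-1}}$ degenerates to the point matrix $\Anu\bm{G}_{\hat{x}_{k-1}}$, the zonotope inclusion $\zinclusion{\intval{\bm{M}}\ubox^{r_{\hat{x}_{k-1}}}}$ contributes no overestimation, and the seed zonotope $\zonotope_q \supseteq \bm{f}(\bm{c}_{\hat{x}_{k-1}},\bar{\mathbb{W}})$ may be taken exactly as $(\Anu\bm{c}_{\hat{x}_{k-1}} + \Bnu\Delta\bm{u}_{k-1} + \bm{c}_{\bar{w}}) \oplus \bm{G}_{\bar{w}}\ubox^{r_{\bar{w}}}$. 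Collecting terms yields the predicted zonotope $\bar{\mathbb{X}}_k = \bm{c}_{\bar{x}_k} \oplus \bm{G}_{\bar{x}_k}\ubox^{r_{\bar{x}_k}}$ with center $\bm{c}_{\bar{x}_k} = \Anu\bm{c}_{\hat{x}_{k-1}} + \Bnu\Delta\bm{u}_{k-1} + \bm{c}_{\bar{w}}$ and generators $\bm{G}_{\bar{x}_k} = [\,\Anu\bm{G}_{\hat{x}_{k-1}} \;\; \bm{G}_{\bar{w}}\,]$.

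Next I would specialize the update step (lines 2--3 of Algorithm \ref{alg:estimation_ZSEalg}). Here I would invoke the preceding proposition, which shows that for a \emph{linear} measurement the strip parameters of Theorem \ref{the:estimation_consistentstates} are independent of $\bar{\mathbb{X}}_k$, thereby dodging the interval-dependency blow-up. Applied to the $i$-th row of \eqref{eq:estimation_predictionmodelH} this gives, in closed form, $\bm{\rho} = \Hnu(i)^T$, $s = -\bm{c}_{\bar{v}}(i)$, and $\sigma = \sum_{j} |\bm{G}_{\bar{v}}(i,j)|$, so the consistent-state strip $\bar{\mathbb{X}}_{y_k(i)}$ is fully explicit. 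Passing this strip to Theorem \ref{the:estimation_intersection} with the Frobenius-optimal gain $\bm{\lambda}$ supplied by Theorem \ref{the:estimation_minimumsegments} produces the refined center and generator matrix after measurement $i$ has been assimilated.

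It remains to assemble these ingredients into the iterative procedure of Algorithm \ref{alg:estimation_ZSEtiltrotor}. Since the update must be applied one output component at a time and only the rows indexed by $\mathbb{I}_k$ are present at instant $k$, I would replace $\Hnu$ by the row selection $\Hnu^{[k]}$ and loop the closed-form update over $i \in \mathbb{I}_k$, each pass sharpening the running zonotope; between consecutive instants Algorithm \ref{alg:estimation_Zorderredduction} caps the generator count at $r_\text{max}$. The step I expect to be the main obstacle is not any single calculation but the consistent bookkeeping of the augmented state $\bm{\nu} = [\Delta\bm{x}^T \;\; \bm{d}^T]^T$ through prediction and measurement, together with verifying that the time-switching selector $\Hnu^{[k]}$ routes exactly the available rows into the iterated update while leaving the closed-form strip parameters above intact.
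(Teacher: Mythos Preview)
Your proposal is correct and follows essentially the same route as the paper: the prediction step is obtained by specializing Theorem \ref{the:estimation_uncertaintrajectory} to the affine map \eqref{eq:estimation_predictionmodelF} (which the paper phrases as using the closed operations for linear image and Minkowski sum of zonotopes), and the update step is obtained by iterating Theorems \ref{the:estimation_consistentstates}--\ref{the:estimation_minimumsegments} over $i\in\mathbb{I}_k$ with the closed-form strip parameters supplied by the preceding proposition, followed by order reduction via Algorithm \ref{alg:estimation_Zorderredduction}. Your write-up is in fact more explicit than the paper's terse proof, but the underlying argument is the same.
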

\begin{proof}
	Step 1 of Algorithm \ref{alg:estimation_ZSEtiltrotor} is obtained from application of Theorem \ref{the:estimation_uncertaintrajectory} to \eqref{eq:estimation_predictionmodelF}, through the use of closed operations for linear image and Minkowski sum of zonotopes \cite{Kuhn1998}. Steps 3 to 11 are obtained from iterative use of Theorems \ref{the:estimation_consistentstates} through \ref{the:estimation_minimumsegments} for each measurement available based on \eqref{eq:estimation_predictionmodelH}, and step 12 correspond to the use of Algorithm \ref{alg:estimation_Zorderredduction} to limit the order of the estimated zonotope.
\end{proof}

\begin{algorithm}
	\caption{Zonotopic state estimator for the tilt-rotor UAV with suspended load}
	\label{alg:estimation_ZSEtiltrotor}
	\begin{algorithmic}[1]
		\Procedure{ZSE}{$\hat{\mathbb{X}}_{k-1},\Delta \bm{u}_{k-1}, \bm{y}_k, \bar{\mathbb{W}}, \bar{\mathbb{V}}, \mathbb{I}_k, r_\text{max}$}
		\State $\bar{\mathbb{X}}_k \gets (\bm{A}_{\bm{\nu}} \bm{c}_{\hat{x}_{k-1}} + \bm{B}_{\bm{\nu}} \Delta \bm{u}_{k-1} + \bm{c}_{\bar{w}}) \oplus [ \bm{G}_{\bar{w}} \;\; \bm{A}_{\bm{\nu}} \bm{G}_{\hat{x}_{k-1}}] \ubox^{r_{\bar{w}} + r_{\hat{x}_{k-1}}}$
		\State $\tilde{\mathbb{X}}_k \gets \bar{\mathbb{X}}_k$
		\ForAll {$ i \in \mathbb{I}_k$}
		\State $\bm{\rho} \gets \bm{H}_{\bm{\nu}}(i)^T$
		\State $s \gets - \bm{c}_{\bar{v}}(i)$
		\State $\sigma \gets \sum_{j=1}^{r_{\bar{v}}} |\bm{G}_{\bar{v}}(i,j)|$
		\State $\bm{\lambda} \gets (\bm{G}_{\tilde{x}_k} \bm{G}_{\tilde{x}_k}^T \bm{\rho})/(\bm{\rho}^T \bm{G}_{\tilde{x}_k} \bm{G}_{\tilde{x}_k}^T \bm{\rho} + \sigma^2)$
		\State $\tilde{\mathbb{X}}_k \gets (\bm{c}_{\tilde{x}_k} + \bm{\lambda}(\bm{y}_k(i) + s - \bm{\rho}^T \bm{c}_{\tilde{x}_k})) \oplus [(\eye{23} - \bm{\lambda}\bm{\rho}^T)\bm{G}_{\tilde{x}_k} \; \sigma \bm{\lambda}] \ubox^{r_{\tilde{x}_k}+1}$ 
		\EndFor
		\State $\hat{\mathbb{X}}_k \gets \tilde{\mathbb{X}}_k$
		\State $\hat{\mathbb{X}}_k \gets \text{order\_reduction}(\hat{\mathbb{X}}_k,r_\text{max})$
		\State \textbf{return} $\hat{\mathbb{X}}_k$
		\EndProcedure
	\end{algorithmic}
\end{algorithm}

The design of the zonotopic state estimator lies in the appropriate choice of the zonotopes $\bar{\mathbb{W}}$ and $\bar{\mathbb{V}}$. On the other hand, to obtain $\hat{\mathbb{X}}_0$ a zonotope $\bar{\mathbb{X}}_0$ containing the system's initial states must be known, to which an \emph{initial update step} is applied using initial measurements $\bm{y}_0$. As a drawback from using a linearized model for the zonotopic state estimation algorithm, the property $\bm{\nu}_k \in \hat{\mathbb{X}}_k$ is guaranteed only if the chosen zonotopes $\bar{\mathbb{W}}$ and $\bar{\mathbb{V}}$ contain all the unmodelled dynamics due to linearization. Moreover, by augmenting the state vector with the external disturbances, bounds must be assumed on their variations within the controller sampling time, instead of bounds on their magnitudes. The latter may result in reduced overestimation in cases with non-abrupt disturbances.

\subsection{\textred{Kalman filter}} \label{sec:estimation_LKF}


\textred{This section presents the derivation of a Kalman filter, for comparison purposes with the proposed zonotopic state estimation strategy.} To design a KF for the tilt-rotor UAV with suspended load based on \eqref{eq:estimation_predictionmodelF} and \eqref{eq:estimation_switchedpredictionmodelH}, $\bar{\bm{w}}$ is regarded as process noise. Moreover, $\bar{\bm{w}}$ and ${\bm{v}}$, are assumed to be white, mutually uncorrelated, zero-mean Gaussian distributions, with known constant covariance matrices denoted by $\bm{P}^{\bm{w}} \in \setrealmat{23}{23}$ and $\bm{P}^{\bm{v}[k]} \in \setrealmat{\iota_k}{\iota_k}$, respectively, where $\bm{P}^{\bm{v}[k]}$ is a diagonal matrix formed by all $\bm{P}^{\bm{v}}(i,i)$ such that $i \in \setI_k$, with $\bm{P}^{\bm{v}} \in \setrealmat{16}{16}$, and $(i,i)$ denoting $i$-th line and $i$-th column. %
Let $\hat{(\cdot)}$ denote estimated \textred{variables}, and $(\cdot)_{m|n}$ denote information at time instant $m$ given measurements up to instant $n$. Then, given a previous estimation $\hat{\bm{\nu}}_{k-1|k-1}$, \textred{the state vector $\hat{\bm{\nu}}_{k|k-1}$ is given by the \emph{prediction step} \cite{Simon2006}}
\begin{equation}
\hat{\bm{\nu}}_{k|k-1} = \bm{A}_{\bm{\nu}} \hat{\bm{\nu}}_{k-1|k-1} + \bm{B}_{\bm{\nu}} \Delta \bm{u}_{k-1} \text{,}
\end{equation}
whilst $\hat{\bm{\nu}}_{k|k}$ \textred{is given by the \emph{correction step}}, defined as
\begin{gather} \label{eq:estimation_LKFestimation}
\hat{\bm{\nu}}_{k|k} = \hat{\bm{\nu}}_{k|k-1} + \bm{N}_k (\bm{y}_k - (\Hnu^{[k]} \hat{\bm{\nu}}_{k|k-1} + \bm{\pi}^{[k]}(\bm{x}^\text{eq}))) \text{,}
\end{gather}
being $\bm{N}_k$ the \emph{Kalman gain}. Define $\tilde{\bm{\nu}}_k \triangleq \bm{\nu}_k - \hat{\bm{\nu}}_{k|k}$ as the estimation error, and let $\bm{P}^{\bm{\nu}}_{k|k} \triangleq \expect{\tilde{\bm{\nu}}_k\tilde{\bm{\nu}}_k^T}$, where $\expect{\cdot}$ denotes expectation. Then, the covariance propagations are computed from \eqref{eq:estimation_predictionmodelF} as \textred{$\bm{P}^{\bm{\nu}}_{k|k-1} = \bm{A}_{\bm{\nu}} \bm{P}^{\bm{\nu}}_{k-1|k-1} \bm{A}_{\bm{\nu}}^T + \bm{P}^{\bm{w}}$, and from the correction step as $\bm{P}^{\bm{\nu}}_{k|k} = (\eye{20} - \bm{N}_k \bm{H}^{[k]}_{\bm{\nu}}) \bm{P}^{\bm{\nu}}_{k|k-1} (\eye{20} - \bm{N}_k \bm{H}^{[k]}_{\bm{\nu}})^T + \bm{N}_k \bm{P}^{{\bm{v}}{[k]}} \bm{N}_k^T$}. %
The Kalman gain is \textred{obtained} such that the KF provides minimum variance estimation. \textred{The well-known solution of this optimization problem is the gain update equation}
\begin{equation}
\bm{N}_k = \bm{P}^{\bm{\nu}}_{k|k-1} (\bm{H}_{\bm{\nu}}^{[k]})^T (\bm{H}_{\bm{\nu}}^{[k]} \bm{P}^{\bm{\nu}}_{k|k-1} (\bm{H}_{\bm{\nu}}^{[k]})^T + \bm{P}^{\bm{v}{[k]}})^{-1} \text{.}
\end{equation}
\section{Control design} \label{sec:control}

This section presents the design of a state-feedback control strategy for trajectory tracking of the suspended load with stabilization of the tilt-rotor UAV. Since the load's position and orientation are represented by state variables, state-feedback strategies can directly steer the trajectory of the suspended load with respect to the inertial reference frame. Besides, the aircraft's behavior is implicit in the state-space equations, then stabilization of the system implies stabilization of the tilt-rotor UAV. \textred{Moreover, although the multi-body structure of the tilt-rotor UAV with suspended load system (see Section \ref{sec:modeling}) may not impose additional difficulties to the state estimation in comparison to other UAV configurations for load transportation, it generates challenges to the control design due to the dynamic couplings between the rigid bodies and the indirect input mapping from the tilting mechanism's torques to the load's pose \cite{Raffo2018}. Therefore, to cope with the resulting issues, the proposed control strategies are based on the whole-body dynamic equations developed in Section \ref{sec:modeling}, without requiring the formulation of cascade control structures.}

Since in future works the control algorithm will be implemented on an embedded system in the real aircraft, a discrete-time control strategy is designed. Therefore, %
based on discrete-time, augmented linearized error dynamics, a mixed $\mathcal{H}_2/\mathcal{H}_\infty$ controller with pole placement constraints is proposed. The controller features constant disturbances rejection, and achieves improved disturbance attenuation by mininizing the $\mathcal{H}_2$ norm of the closed-loop system while guaranteeing a specified upper-bound for its $\mathcal{H}_\infty$ norm. Furthermore, time response requirements are satisfied by imposing constraints in the pole placement process.

\subsection{Linearized parameter-varying error dynamics}  \label{sec:control_linearizedmodel}

The tilt-rotor UAV with suspended load is a mechanical system with more degrees of freedom than control inputs, therefore being characterized as an underactuated mechanical system. As it has four control inputs, only up to four degrees of freedom can be steered along a desired, arbitrary trajectory, while the remaining DOF can only be stabilized. Aiming path tracking control of the suspended load, the position $\bm{\xi} = [x \; y \; z]^T$ and yaw angle $\psi$ of the load are chosen to be regulated.

Define the auxiliary variable $\bm{\zeta}^\text{tr}(t) \triangleq [ \phi^\text{eq} \;\, \theta^\text{eq} \;\, \psi^\text{tr}(t) \;\, (\bm{\gamma}^\text{eq})^T \;\, \alpha_\text{R}^\text{eq} \;\, \alpha_\text{L}^\text{eq}]^T$. Assuming that $\psi^\text{tr}$ is constant for the desired trajectory, we have that $\bm{\zeta}^\text{tr}$ is also constant. Then, $\dot{\bm{\zeta}}^\text{tr} = \zeros{7}{1}$, and by defining $\bm{q}^\text{tr} \triangleq [ (\bm{\xi}^\text{tr})^T \;\, (\bm{\zeta}^\text{tr})^T ]^T$, we have that $\dot{\bm{q}}^\text{tr} = [ (\dot{\bm{\xi}}^\text{tr})^T \;\, \zeros{1}{7} ]^T$ and $\ddot{\bm{q}}^\text{tr} = [ (\ddot{\bm{\xi}}^\text{tr})^T \;\, \zeros{1}{7} ]^T$. Moreover, define $\bm{x}^\text{tr} \triangleq [(\bm{q}^\text{tr})^T \;\, (\dot{\bm{q}}^\text{tr})^T]^T$ and $\bm{u}^\text{tr} \triangleq [\infR^\text{tr} \;\, \infL^\text{tr} \;\, \intauaR^\text{tr} \;\, \intauaL^\text{tr} ]^T$. Evaluating \eqref{eq:estimation_statespacelinear} around $\bm{x}^\text{tr}$ and $\bm{u}^\text{tr}$, leads to the error dynamics $\delta \dot{\bm{x}} = \bm{A}_\text{c}(t) \delta \bm{x} + \bm{B}_\text{c} \delta \bm{u} + \bm{F}_\text{c} \bm{d}$, in which $\bm{B}_\text{c}$ and $\bm{F}_\text{c}$ are constant matrices, and $\bm{A}_\text{c}(t)$ is time-varying only due to $\bm{u}^\text{tr}(t)$ (see equation \eqref{eq:control_linearizationAc}).

For control design, this work assumes that the desired accelerations are not negligible, i.e., the time-varying matrix $\bm{A}_\text{c}(t)$ can not be approximated by $\bm{A}_\text{c}(t)|_{\bm{u}^\text{tr} = \bm{u}^\text{eq}}$. To take into account this fact in the control design approach, the desired accelerations are regarded as bounded uncertain parameters. Recalling the dynamics \eqref{eq:estimation_feasibletrajectory}, approximated values for $\bm{u}^\text{tr}(t)$ can be obtained by
\begin{equation} \label{eq:control_feedforwardcontinuous}
	\bm{u}^\text{tr}(t) = \bm{L}_\text{in}(\bm{q}^\text{tr})^+ \left[ \bm{M}(\bm{q}^\text{tr}) \ddot{\bm{q}}^\text{tr} + (\bm{C}(\bm{q}^\text{tr}, \dot{\bm{q}}^\text{tr}) + \bm{L_\text{fr}}) \dot{\bm{q}}^\text{tr} + \bm{g}(\bm{q}^\text{tr}) \right],
\end{equation}
where $\bm{L}_\text{in}(\bm{q}^\text{tr})^+$ denotes the left Moore-Penrose pseudo-inverse of $\bm{L}_\text{in}(\bm{q}^\text{tr})$. In the same lines of the discussion in Section \ref{sec:estimation_linearizedmodel}, it can be shown that \eqref{eq:control_feedforwardcontinuous} is not a function of $\bm{\xi}^\text{tr}$ and $\dot{\bm{\xi}}^\text{tr}$. Thus, by defining the vector of parameters $\bm{\sigma} \triangleq \ddot{\bm{\xi}}^\text{tr}$, we have that $\bm{u}^\text{tr}(t) \triangleq \bm{u}(\bm{\sigma})$, which is an affine function of $\bm{\sigma}$. Moreover, since the state-space equations \eqref{eq:modeling_statespacenonlinear} are affine in the inputs, the Jacobian \eqref{eq:control_linearizationAc} is also affine. Thus, by substituting \eqref{eq:control_feedforwardcontinuous} in \eqref{eq:control_linearizationAc}, yields the parameter-varying matrix $\bm{A}_\text{c}(\bm{\sigma})$, which is also affine in the parameters $\bm{\sigma}$.

Furthermore, to improve the trajectory tracking and provide rejection to constant disturbances, the state vector $\delta\bm{x}$ is augmented with integral actions, computed by integrating the error of the regulated degrees of freedom, yielding
\begin{equation} \label{eq:control_augmentedstatevector}
	\bm{\chi} \triangleq \begin{bmatrix} \delta \bm{x} \\ \int (\bm{\xi}-\bm{\xi}^\text{tr}) \\ \int (\psi - \psi^\text{tr}) \end{bmatrix} \in \setrealvec{24} \text{,}
\end{equation}
whose parameter-varying dynamics are given by
\begin{equation} \label{eq:control_augmentedcontinuousdynamics}
	\dot{\bm{\chi}} = \underbrace{\left[\begin{array}{cccccc|c|c}
	\multicolumn{7}{c|}{\bm{A}_\text{c}(\bm{\sigma})}                & \zeros{20}{4}                 \\ \hline 
	1 & 0 & 0 & 0 & 0 & 0 & \multirow{4}{*}{\zeros{4}{14}} & \multirow{4}{*}{\zeros{4}{4}} \\
	0 & 1 & 0 & 0 & 0 & 0 &                                & 	                           \\                      
	0 & 0 & 1 & 0 & 0 & 0 &                                &                               \\
	0 & 0 & 0 & 0 & 0 & 1 &                                & 
	\end{array}\right]}_{\tilde{\bm{A}}_\text{c}(\bm{\sigma})} \bm{\chi} + \underbrace{\begin{bmatrix} \bm{B}_\text{c} \\ \zeros{4}{4} \end{bmatrix}}_{\tilde{\bm{B}}_\text{c}} \delta\bm{u} + \underbrace{\begin{bmatrix} \bm{F}_\text{c} \\ \zeros{4}{3} \end{bmatrix}}_{\tilde{\bm{F}}_\text{c}} \bm{d} \text{.}
\end{equation}


Finally, since the control design will be performed in discrete-time, the linear parameter-varying (LPV) system \eqref{eq:control_augmentedcontinuousdynamics} is discretized through Euler approximation for the sampling time $T_s$, yielding the discrete-time, augmented LPV error dynamics
\begin{equation} \label{eq:control_discretizedaugmentederrordynamics}
\bm{\chi}_{k+1} = \Achi (\bm{\sigma}) \bm{\chi}_{k} + \Bchi \delta \bm{u}_{k} + \Fchi \bm{d}_{k} \text{,}
\end{equation}
with $\Achi(\bm{\sigma}) \triangleq \eye{24} + T_s \tilde{\bm{A}}_\text{c}(\bm{\sigma}) \in \setrealmat{24}{24}$, $\Bchi \triangleq T_s \tilde{\bm{B}}_\text{c} \in \setrealmat{24}{4}$ and $\Fchi \triangleq T_s \tilde{\bm{F}}_\text{c} \in \setrealmat{24}{3}$. Since $\tilde{\bm{A}}_\text{c}(\bm{\sigma})$ is affine in $\bm{\sigma}$, note that $\Achi(\bm{\sigma})$ is also affine in the parameters $\bm{\sigma}$. Thus, assuming bounded desired accelerations, we have that the LPV system \eqref{eq:control_discretizedaugmentederrordynamics} can be rewritten in a convex polytopic representation, by defining $\Achi(\bm{\sigma}) \triangleq \Achi(\tilde{\bm{\sigma}}) = \sum_{i=1}^{8} \tilde{\sigma}_i\Achi^i$, with $\Achi^i$ denoting the $i$-th vertex of $\Achi(\tilde{\bm{\sigma}})$, and $\sum_{i=1}^{8} \tilde{\sigma}_i = 1$. Resulting errors from linearization and discretization will be taken into account as unmodeled dynamics, and the controller will be assumed to be robust enough to deal with the subsequent effects.

\subsection{Discrete-time mixed $\mathcal{H}_2/\mathcal{H}_\infty$ control}

The present mixed $\mathcal{H}_2/\mathcal{H}_\infty$ control paradigm is an extension for discrete-time linear systems of the method proposed in \cite{Gahinet1996}. In order to design a discrete-time mixed $\mathcal{H}_2/\mathcal{H}_\infty$ controller for trajectory tracking of the suspended load with stabilization of the tilt-rotor UAV, consider the discrete-time uncertain linear system 
\begin{equation}
\begin{aligned}
\bm{\chi}_{k+1} & = \Achi(\bm{\sigma}) \bm{\chi}_{k} + \Bchi \delta \bm{u}_{k} + \Fchi \bm{d}_{k}\text{,}\\
\bm{z}_k^{(2)} & = \bm{H}_{\bm{z}} \bm{\chi}_k + \bm{D}_{\bm{zu}} \delta\bm{u}_k \text{,} \\
\bm{z}_k^{(\infty)} & = \bm{H}_{\bm{z}} \bm{\chi}_k + \bm{D}_{\bm{zu}} \delta\bm{u}_k + \bm{D}_{\bm{zd}} \bm{d}_k \text{,}
\end{aligned}
\end{equation}
where $\bm{z}_k^{(2)}, \bm{z}_k^{(\infty)} \in \setreal^{n_z}$ are cost variables, $\bm{H}_{\bm{z}} \in \setrealmat{n_z}{24}$, $\bm{D}_{\bm{zu}} \in \setrealmat{n_z}{4}$, and $\bm{D}_{\bm{zd}} \in \setrealmat{n_z}{3}$ are weighting matrices. Let $\bm{\Psi}^{(2)}_{\bm{d}{\bm{z}}}(\zdomain)$ and $\bm{\Psi}^{(\infty)}_{\bm{d}{\bm{z}}}(\zdomain)$ denote the discrete-time transfer matrices from $\bm{d}_k$ to $\bm{z}_k^{(2)}$ and $\bm{d}_k$ to $\bm{z}_k^{(\infty)}$, respectively, \textred{and $\| \bm{\Psi}^{(2)}_{{\bm{d}}\bm{z}} \|_2$ and $\|\bm{\Psi}^{(\infty)}_{{\bm{d}}{\bm{z}}} \|_\infty$ denote the corresponding $\mathcal{H}_2$ and $\mathcal{H}_\infty$ norms.
The objective is to design a state-feedback controller of the form $\delta \bm{u}_k = - \bm{K} \bm{\chi}$ that minimizes $\| \bm{\Psi}^{(2)}_{{\bm{d}}\bm{z}} \|_2^2 \triangleq \sum_{k=0}^{\infty} \trace {\bm{\Psi}^{(2)}_{\bm{d}\bm{z},k}(\bm{\Psi}^{(2)}_{\bm{d}\bm{z},k})^T}$ while guaranteeing a specified upper-bound for $\| \bm{\Psi}^{(\infty)}_{{\bm{d}}{\bm{z}}} \|_\infty$, where $\bm{\Psi}^{(2)}_{\bm{d}\bm{z},k} \triangleq \mathcal{Z}^{-1}\{\bm{\Psi}^{(2)}_{\bm{d}\bm{z}}(\zdomain)\}$, with $\mathcal{Z}$ denoting the z-transform, ensuring better transient response and disturbance attenuation for the closed-loop system. The gain matrix $\bm{K}$ that minimizes $\trace{\bm{\Omega}} > \| \bm{\Psi}^{(2)}_{{\bm{d}}\bm{z}} \|_2^2$ while guaranteeing a given bound $\tilde{\gamma} > \| \bm{\Psi}^{(\infty)}_{{\bm{d}}{\bm{z}}} \|_\infty^2$, is computed by} $\bm{K} = -\bm{Y}\bm{X}^{-1}$, where $\bm{Y}$ and $\bm{X}$ are obtained by solving the optimization problem \cite{Oliveira2002}
%
\begin{gather}
\underset{\bm{P},\bm{X},\bm{Y},\bm{\Omega}}{\min} ~ \text{trace}\{\bm{\Omega}\} \quad \text{subject to} \nonumber\\
\begin{bmatrix}
\bm{\Omega} & \bm{H}_{\bm{z}} \bm{X} + \bm{D}_{\bm{zu}} \bm{Y} \\
* & \bm{X} + \bm{X}^T - \bm{P}
\end{bmatrix} > 0 \text{,} \label{eq:lmiH2first}\\
\begin{bmatrix}
\bm{P} & \Achi^i \bm{X} + \bm{B}_{\bm{\chi}} \bm{Y} & \bm{F}_{\bm{\chi}} \\
* & \bm{X} + \bm{X}^T - \bm{P} & \zeros{24}{3} \\
* & * & \eye{3} 
\end{bmatrix} > 0 \text{,} \label{eq:lmiH2second} \\
\begin{bmatrix}
\bm{P} & \Achi^i \bm{X} + \bm{B}_{\bm{\chi}} \bm{Y} & \bm{F}_{\bm{\chi}} & \zeros{24}{n_z} \\
* & \bm{X} + \bm{X}^T - \bm{P} & \zeros{24}{3} & \bm{X}^T \bm{H}_{\bm{z}}^T + \bm{Y}^T \bm{D}_{\bm{zu}}^T \\
* & * & \eye{3} & \bm{D}_{\bm{zd}}^T \\
* & * & * & \tilde{\gamma} \eye{n_z} 
\end{bmatrix} > 0 \text{,} \label{eq:lmiHinf}
\end{gather}
with $i = 1,2,\dots,8$, $\bm{P} = \bm{P}^T > 0$, $\bm{X} > 0$ and $\bm{\Omega} = \bm{\Omega}^T$. 


In order to guarantee time response specifications for the closed-loop system, constraints in the form of Linear Matrix Inequality (LMI) regions are imposed on the pole placement performed by the controller. An LMI region is defined as a convex subset of the complex plane that can be expressed as \textred{$\mathbb{D} \triangleq \{ \zdomain \in \mathbb{C} : \bm{U} + \zdomain \bm{V} + \zdomain^* \bm{V}^T < 0\}$ \citep{Gahinet1996},} 
whose shape is defined by the matrices $\bm{U} = \bm{U}^T \in \setrealmat{n_\text{D}}{n_\text{D}}$ and $\bm{V} \in \setrealmat{n_\text{D}}{n_\text{D}}$. Such regions are symmetric with respect to the real axis, and the intersection between two of them is also an LMI region.

To ensure minimum and maximum settling times, maximum percentage overshoot, and also to avoid the ringing effect, three regions are of interest (see Figure \ref{fig:control_LMIregions}): $\mathbb{D}_1 \triangleq \{ \zdomain \in \mathbb{C} : \real{\zdomain} > \deps \geq 0\}$, $\mathbb{D}_2 \triangleq \{ \zdomain \in \mathbb{C} : 0 \leq |\zdomain| < \varpi \}$ and $\mathbb{D}_3 \triangleq \{ \zdomain \in \mathbb{C} : 0 \leq |\imag{\zdomain}| < \tau \}$. The eigenvalues of the closed-loop system matrix $\Achi(\bm{\sigma}) - \Bchi \bm{K}$ belong to $\bigcap_{j=1}^{3} \mathbb{D}_j$ if, and only if, there exists a symmetric matrix $\bm{T} > 0$ such that
\begin{gather}
	\bm{T} (\Achi^i)^T + \Achi^i \bm{T} + \bm{Y}^T\bm{B}_{\bm{\chi}}^T + \bm{B}_{\bm{\chi}}\bm{Y} - 2\varepsilon \bm{T} > 0 \text{,} \label{eq:lmiDstability1} \\
	\begin{bmatrix}
	-\varpi \bm{T} & \Achi^i \bm{T} + \bm{B}_{\bm{\chi}} \bm{Y} \\
	* & -\varpi \bm{T}
	\end{bmatrix} < 0 \text{,} \label{eq:lmiDstability2} \\
	\begin{bmatrix} -2 \tau \bm{T} ~& ~\bm{T} (\Achi^i)^T \!{-} \Achi^i \bm{T} + \bm{Y}^T \Bchi^T {-} \Bchi \bm{Y} \\ * & - 2 \tau \bm{T} \end{bmatrix} < 0 \text{,} \label{eq:lmiDstability3}
\end{gather}
from which $\bm{K} = -\bm{Y} \bm{T}^{-1}$. The LMI constraints \eqref{eq:lmiDstability1}--\eqref{eq:lmiDstability3} are considered in the control design along with \eqref{eq:lmiH2first}--\eqref{eq:lmiHinf} by imposing $\bm{X} = \bm{X}^T = \bm{T} > 0$.

\begin{figure}[!tb]
	\centering{
		\def\svgwidth{0.8\columnwidth}
		{\footnotesize\import{Figures_04Control/}{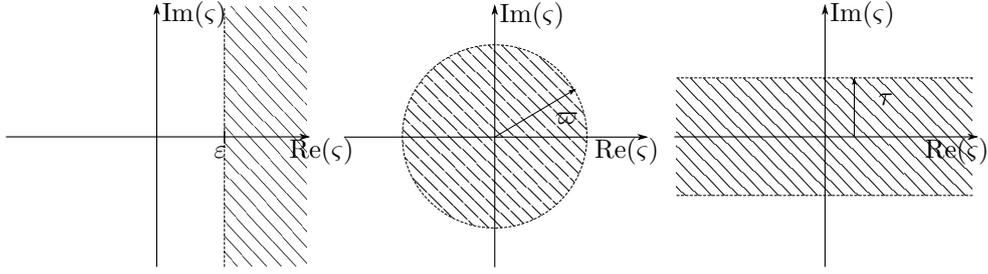}}
		\caption{LMI regions considered in this work.}\label{fig:control_LMIregions}}
\end{figure}
%


Since the control signals provided by the mixed $\mathcal{H}_2/\mathcal{H}_\infty$ controller are associated with the linearized dynamics \eqref{eq:control_discretizedaugmentederrordynamics}, to apply the control signals to the tilt-rotor UAV with suspended load, \textred{the feed-forward term \eqref{eq:control_feedforwardcontinuous} is computed at each $k$. 
 The control law is then given by
\begin{equation} \label{eq:control_law}
	\bm{u}_k = \delta \bm{u}_k +  \bm{u}^\text{tr}_k = - \bm{K} \bm{\chi}_k + \bm{L}_\text{in}(\bm{q}^\text{tr}_k)^+ \left[ \bm{M}(\bm{q}^\text{tr}_k) \ddot{\bm{q}}^\text{tr}_k + (\bm{C}(\bm{q}^\text{tr}_k, \dot{\bm{q}}^\text{tr}_k) + \bm{L_\text{fr}}) \dot{\bm{q}}^\text{tr}_k + \bm{g}(\bm{q}^\text{tr}_k) \right] \text{,}
\end{equation}
where $\bm{q}^\text{tr}_k$, $\dot{\bm{q}}^\text{tr}_k$ and $\ddot{\bm{q}}^\text{tr}_k$ are reference signals given at instant $k$.}
%
%
On the other hand, since $\bm{u}^\text{tr}_k$ is a least-squares solution to \eqref{eq:control_feedforwardcontinuous}, which assumes a scenario without disturbances, it is an exact solution to the dynamic equations \eqref{eq:eulerlagrangeCANONICAL} only if the desired trajectory satisfies \eqref{eq:estimation_feasibletrajectory}. Moreover, the control signal $\bm{u}^\text{tr}_k$ will be sustained for $T_s$ seconds. The subsequent errors are also considered as unmodeled dynamics, and the controller is assumed to be robust enough to deal with these effects.

The described control strategy rely on full information about the system states \eqref{eq:modeling_systemstates} in order to achieve path tracking of the suspended load. Considering the scenario described in Section \ref{sec:estimation}, the feedback connection in \eqref{eq:control_law} is performed using an estimated state vector $\hat{\bm{\chi}}_k$, defined according to
\begin{equation}
	\hat{\bm{\chi}} \triangleq \begin{bmatrix} \hat{\bm{x}} - \bm{x}^\text{tr} \\ \int (\hat{\bm{\xi}}-\bm{\xi}^\text{tr}) \\ \int (\hat{\psi} - \psi^\text{tr}) \end{bmatrix} \in \setrealvec{24} \text{,}
\end{equation}
\textred{where the estimated states $\hat{\bm{x}}_k$ are obtained from the center of the zonotope provided by Algorithm \ref{alg:estimation_ZSEtiltrotor}. Equilibrium values are added and subtracted to the estimated states and control signals, to adapt them for the control strategy and state estimation algorithm, respectively, as shown in Figure \ref{fig:control_blockdiagram}. In the case of the Kalman filter, the estimated state vector is obtained from \eqref{eq:estimation_LKFestimation}.}

\begin{figure}[!htb]
	\centering{
		\def\svgwidth{0.8\columnwidth}
		{\scriptsize\import{Figures_04Control/}{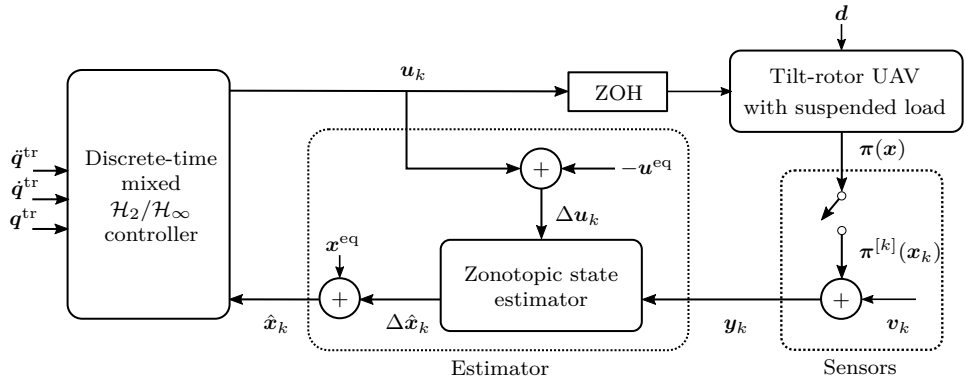}}
		\caption{Control structure.}\label{fig:control_blockdiagram}}
\end{figure}

\section{Numerical experiments} \label{sec:results}

This section evaluates the performance of the proposed control and state estimation strategies through experiments in the ProVANT simulation environment.

\subsection{Experiment description}

The ProVANT simulator\footnote{\url{https://github.com/Guiraffo/ProVANT-Simulator}} is a simulation environment for tilt-rotor UAVs, developed in the ProVANT project\footnote{\textred{The ProVANT project is a collaborative work involving the brazilian universities Federal University of Santa Catarina and Federal University of Minas Gerais, and the University of Sevilla, Spain. The objective is the development and research of convertible UAVs in the tilt-rotor configuration.}}, based on the robotic applications framework Robot Operating System (ROS) \cite{Quigley2009} and the open-source robot simulation environment Gazebo \cite{Koenig2004}. Based on Computer Aided Design (CAD) 3D models, the main purpose of the ProVANT simulator is the validation of control strategies designed for tilt-rotor UAVs, in a stage previous to experiments in the real aircraft. %
Figure \ref{fig:results_cadmodel} illustrates the CAD model of the tilt-rotor UAV with suspended load, as shown in the simulation environment.

\begin{figure}[!htbp]
	\centering{
		\includegraphics[width = 0.3\textwidth]{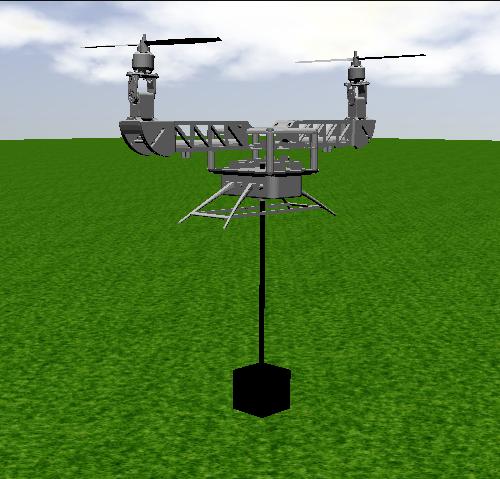}
		\caption{The tilt-rotor UAV with suspended load (CAD model), as shown in the ProVANT simulator.}\label{fig:results_cadmodel}}
\end{figure}

The experiment consists in performing trajectory tracking of the suspended load, with stabilization of the tilt-rotor UAV. The desired trajectory is composed of several connected paths, defined in Table \ref{tab:results_desiredtrajectory}. \textred{The initial position of the UAV is given by $x_\frB = 0$ m, $y_\frB = 0$ m and $z_\frB = 1.619$ m, the rope initial angles are given by $\gone = \gtwo = 15^\text{o}$, and the UAV initial orientation $\etaB$ and tilting angles $\aR$, $\aL$ are equal to zero.} This trajectory is proposed to evaluate the performance of the designed strategies in a scenario starting with vertical take-off in a spiral path, straight line following with rapid changing in direction, and vertical landing, with \textred{$\psi^\text{tr} = 0^\text{o}$.} Moreover, to evaluate the robustness against external disturbances of the proposed strategy, \textred{aerodynamic disturbance forces are applied to the suspended load during the experiment, resulting from environmental wind and drag. The disturbance forces are generated according to \cite{Oktay2013}
\begin{equation}
	\bm{d}_i^\frL = \frac{1}{2} \rho_\text{air} a_\text{S} |\bm{\kappa}_i^\frL| \bm{\kappa}_i^\frL,
\end{equation}
where $\bm{d}_i^\frL$ denotes the $i$-th component of $\bm{d}^\frL$, from which the disturbance force vector in $\frI$ is computed by $\bm{d} = \RIL \bm{d}^\frL$, with $\rho_\text{air}$ the air density, $a_\text{S}$ the equivalent flat plate area of the load, given by $\rho_\text{air} = 1.21$ kg/m$^3$ and $a_\text{S} = 0.01$ m$^2$, respectively, and $\bm{\kappa}^\frL = \bm{\kappa}_\text{W}^\frL - \bm{\kappa}_\text{L}^\frL$ the relative wind velocity expressed in $\frL$, with $\bm{\kappa}_\text{W}^\frL = (\RIL)^T \bm{\kappa}_\text{W}^\frI$ the environmental wind velocity expressed in $\frL$, and $\bm{\kappa}_\text{L}^\frL = (\RIL)^T \dxi$ the load velocity expressed in $\frL$. The profile of the environmental wind $\bm{\kappa}_\text{W}^\frI$ is shown in Figure \ref{fig:results_disturbancestrajectory}.}

\begin{table}[!htb]
	\centering
	\caption{Paths composing the reference trajectory.}
	\small
	\begin{tabular}{c c c c}
		Time (sec) & $x^\text{tr}(t)$ (m) & $y^\text{tr}(t)$ (m) & $z^\text{tr}(t)$ (m)\\ \hline
		$0\leq t <10$ & $0.01 t^2 \cos\left(\frac{\pi t}{4}\right)$ & $\sin\left(\frac{\pi t}{20}\right) \sin\left(\frac{\pi t}{4}\right)$ & $3.5 - 2.5 \cos\left(\frac{\pi t}{10}\right)$ \\
		$10\leq t <19$ & $-\frac{\pi}{4}(t-10)$ & $1$ & $6$ \\
		$19\leq t <20$ & $-\frac{9\pi}{4} - 0.5\sin\left(\frac{\pi}{2}(t-19)\right)$ & $1.5 - 0.5\cos\left(\frac{\pi}{2}(t-19)\right)$ & $6$ \\
		$20\leq t <29$ & $-\frac{9\pi}{4} - 0.5$ & $1.5 + \frac{\pi}{4}(t-20)$ & $6$ \\
		$29\leq t <30$ & $-\frac{9\pi}{4} - 0.5\cos\left(\frac{\pi}{2}(t-29)\right)$ & $1.5 {+} \frac{9\pi}{4} {+} 0.5\sin\left(\frac{\pi}{2}(t-29)\right)$ & $6$ \\
		$30\leq t <40$ & $-\frac{9\pi}{4} + \frac{\pi}{4}(t-30)$ & $2 + \frac{9\pi}{4}$ & $6$ \\
		$40\leq t $ & $-\frac{\pi}{80}t^2 + \frac{5\pi}{4}t - \frac{119\pi}{4}$ & $2 + \frac{9\pi}{4}$ & $3.5 {+} 2.5\cos\left(\frac{\pi}{10}(t-40)\right)$ \\		
		\hline
	\end{tabular}
	\label{tab:results_desiredtrajectory}
	\normalsize
\end{table}

\begin{figure}[!htb]
	\begin{footnotesize}
		\centering{
			\def\svgwidth{0.8\textwidth}
			\import{Figures_05Results/}{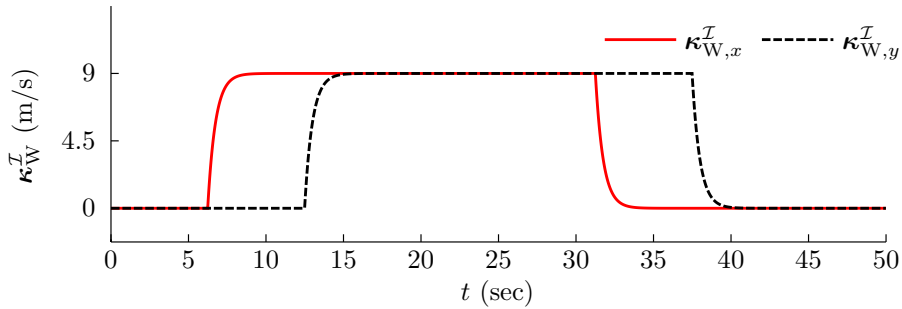}
			\caption{\textred{Profile of the enviromental wind disturbances applied to the load.}}\label{fig:results_disturbancestrajectory}}
	\end{footnotesize}
\end{figure}

Table \ref{tab:results_sensorparameters} shows the parameters of the sensors used in the experiment. The noise bounds of the GPS, barometer/IMU, and the servos' sensors were taken from the Novatel OEMStar GPS receiver, Xsens MTi-G, and Herkulex DRS-0101/DRS-0201 sensors datasheets, respectively. The noise bounds of the camera were chosen empirically. The assumptions on probability density functions were made for simulation purposes, being only the knowledge on the noise bounds used for the zonotopic state estimator design. For Gaussian distributions, `noise bound' means three times the standard deviation. The union of all sets $\mathbb{I}$ of sensors whose measurements are available at time instant $k$ yields the set $\mathbb{I}_k$ employed in both state estimation algorithms. \textred{Moreover, although not explicitly taken into account in the experiment, the presence of parameter uncertainties in the system can be expressed implicitly by the considered noise bounds.}

\begin{table}[!htb]
	\footnotesize
	\centering
	\caption{Parameters of the sensors.}
	\begin{tabular}{c c c c c}
		Sensor & $\mathbb{I}$ & Noise bound & Sampling time & PDF (for simulation) \\ \hline
		GPS & $\{1,2\}$ & $\pm\!\!$ $0.15\!$ m & $120\!$ ms & Gaussian \textred{(truncated)}\\ 
		Barometer & $\{3\}$ & $\pm\!\!$ $0.51\!$ m & $12\!$ ms & Gaussian \textred{(truncated)}\\ 
		\multirow{2}{*}{IMU} & $\{4,5,6\}$ & $\pm\!\!$ $2.618 {\cdot} 10^{-3}\!$ rad & \multirow{2}{*}{$12\!$ ms} & \multirow{2}{*}{Gaussian \textred{(truncated)}} \\
		& $\{7,8,9\}$ & $\pm\!\!$ $16.558 {\cdot} 10^{-3}\!$ rad/s & \\ 
		\multirow{2}{*}{Camera} & $\{10,11\}$ & $\pm\!\!$ $0.005\!$ m & \multirow{2}{*}{$24\!$ ms} & \multirow{2}{*}{Uniform} \\
		& $\{12\}$ & $\pm\!\!$ $0.02\!$ m & \\		                           	                           
		\multirow{2}{*}{Servos} & $\{13,14\}$ & $\pm\!\!$ $5.67 {\cdot} 10^{-3}\!$ rad & \multirow{2}{*}{$12\!$ ms} & \multirow{2}{*}{Uniform} \\
		& $\{15,16\}$ & $\pm\!\!$ $0.50772\!$ rad/s & \\		                           
		\hline
	\end{tabular} \normalsize
	\label{tab:results_sensorparameters}
\end{table}

\subsection{\textred{Model, estimator and control design parameters}}

Table \ref{tab:results_modelparameters} shows the model parameters of the tilt-rotor UAV with suspended load. Mass, inertia and displacement parameters were computed from the system's CAD 3D model, designed using the Solidworks\textsuperscript{\textregistered} software. The gravitational acceleration is assumed constant. The experimental parameters $k_\tau$ and $b$ are the same considered in \cite{Almeida2015b}, while $\lambdaR$ and $\lambdaL$ are given according to the direction of rotation of the UAV's propellers: the right one rotates counter-clockwise, and the left one rotates clockwise. The parameters related to viscous friction were chosen empirically.

\begin{table}[!htb]
	\centering
	\caption{\textred{Model parameters of the tilt-rotor UAV with suspended load.}}
	\begin{tabular}{c c}
		Parameter & Value \\ \hline
		$(\mL,m_1,m_2,m_3)$ & $(0.5,1.7068,0.08978,0.08978)$ Kg \\
		$(\dLA{1},\dAB{1})$ & $([ 0 \; 0 \; 0.5 ]^T,[ 0 \; 0 \; 0.119 ]^T)$ m\\
		$\dBC{1}$ & $[ -0.004321 \;\, 0.000601 \;\, -0.045113]^T$ m\\
		$(\dBA{2},\dBA{3})$ & $([  0 \;\, -0.275433 \;\, 0.056262 ]^T, [  0 \;\, 0.275433 \;\, 0.056261 ]^T)$ m\\		
		$(\dAC{2}{2},\dAC{3}{3})$ & $([  0 \; 0 \; 0.056472 ]^T, [  0 \; 0 \; 0.056482 ]^T)$ m\\		
		$\IL$ & $8.333 \cdot 10^{-6} \cdot \eye{3} $ Kg$\cdot$m$^2$ \\
		$\Ii{1}$ & $\begin{bmatrix} 4047.04 & 0.860582 & 9.65766 \\ * &  881.618 & -0.873079 \\ * & * & 4173.18 \end{bmatrix} \cdot 10^{-6}$ Kg${\cdot}$m$^2$ \\
		$\Ii{2}$ & $\begin{bmatrix} 335.737 & -1.33011 {\cdot} 10^{-15} & -2.85046 {\cdot} 10^{-15}\\ * & 335.737 & -6.81597 {\cdot} 10^{-16} \\ * & * & 641.59 \end{bmatrix} \cdot 10^{-6}$ Kg${\cdot}$m$^2$ \\
		$\Ii{3}$ & $\begin{bmatrix} 335.737 & -5.26914 {\cdot} 10^{-16} & 2.9351 {\cdot} 10^{-15} \\ * & 335.737 & -1.24077 {\cdot} 10^{-16} \\ * & * & 641.59 \end{bmatrix} \cdot 10^{-6}$ Kg${\cdot}$m$^2$ \\		
		$\hat{\bm{g}}$ & $[ 0\;\, 0 \;\, -9.81]^T$ m/s$^2$ \\
		$(k_\tau, b)$ &  $(1.7 {\cdot} 10^{-7}$ N$\cdot$m$\cdot$s$^2, 9.5 {\cdot} 10^{-6}$ N$\cdot$s$^2)$\\
		$(\lambdaR, \lambdaL, \beta, \mu_\gamma, \mu_\alpha)$ & $(1, \text{-}1, 5^\text{o}, 0.005\text{ N}{\cdot}\text{m/(rad/s)}, 0.005\text{ N}{\cdot}\text{m/(rad/s)})$\\
		\hline
	\end{tabular}
	\label{tab:results_modelparameters}
\end{table}

Using the presented data, and assuming $\bm{d} = \zeros{3}{1}$, the following equilibrium point was obtained for the nonlinear system \eqref{eq:modeling_statespacenonlinear}:

\begin{equation} \label{eq:results_equilibriumpoint}
	\begin{aligned}
		\bm{q}^{\text{eq}} = &~
		[ 0 \;\, 0 \;\, 0 \;\, 0 \;\, 0 \;\, 0 \;\, 0.00013170 \;\, 0.01396015 \;\, 0.01400528 \;\, 0.01380910]^T \text{,} \\
		\bm{u}^{\text{eq}} = &~
		[ 11.73225673 \;\, 11.76760246 \;\, 4.13886816{\cdot}10^{-7} \;\, 1.01209997{\cdot}10^{-5}]^T \text{.}
	\end{aligned}
\end{equation}


The mixed $\mathcal{H}_2/\mathcal{H}_\infty$ control design was performed using the Yalmip toolbox \citep{Lofberg2004} with the SDPT3 solver \citep{Toh1999}. The design parameters for the LMI regions are given by $\varepsilon = 0.55$, $\varpi = 0.994$ and $\tau = 0.3$. The Bryson's rule \citep{Johnson1987} was used as starting point to synthesize the weighting matrices of the mixed $\mathcal{H}_2/\mathcal{H}_\infty$ controller, which are given by
\begin{align}
	\bm{H}_{\bm{z}} & = \text{diag} \left(\frac{\sqrt{10}}{2},\frac{\sqrt{10}}{2},\frac{\sqrt{10}}{2},\frac{\sqrt{0.5}}{\pi/2},\frac{\sqrt{0.5}}{\pi/2},\frac{\sqrt{5}}{\pi},\frac{1}{\pi/2},\frac{1}{\pi/2},\frac{0.1}{\pi/2},\frac{0.1}{\pi/2},\frac{1}{2},\frac{1}{2},\frac{1}{2},\right.\\
	& \quad \quad \quad \left. \frac{1}{\pi/3},\frac{1}{\pi/3},\frac{1}{\pi/4},\frac{\sqrt{5}}{3\pi},\frac{\sqrt{5}}{3\pi},\frac{0.1}{3\pi},\frac{0.1}{3\pi},\sqrt{5},\sqrt{5},\sqrt{5},\sqrt{0.1}\right) \text{,} \nonumber\\
	\bm{D}_{\bm{zu}} & = \begin{bmatrix}
	\dfrac{\sqrt{750}}{30-\infR^\text{eq}} & 0 & 0 & 0 \\
	0 & \dfrac{\sqrt{750}}{30-\infL^\text{eq}} & 0 & 0 \\
	0 & 0 & \dfrac{\sqrt{5000}}{2-\intauaR^\text{eq}} & 0 \\
	\zeros{2}{1} & \zeros{2}{1} & \zeros{2}{1} & \zeros{2}{1} \\
	0 & 0 & 0 & \dfrac{\sqrt{5000}}{2-\intauaL^\text{eq}} \\
	\zeros{18}{1} & \zeros{18}{1} & \zeros{18}{1} & \zeros{18}{1}
	\end{bmatrix} \text{,} \quad
	\bm{D}_{\bm{zd}} = \begin{bmatrix}
	\zeros{10}{3} \\
	\eye{3} \\
	\bm{N} \\
	0.5{\cdot}\ones{1}{3}\\
	\bm{N} \\
	\zeros{2}{3} \\
	\eye{3} \\
	0.5{\cdot}\ones{1}{3}
\end{bmatrix} \text{,}
\end{align}
with $\bm{N} \triangleq \begin{bmatrix} 0 & 1 & 0 \\ 1 & 0 & 0 \end{bmatrix}$, $\infR^\text{eq}$, $\infL^\text{eq}$, $\intauaR^\text{eq}$ and $\intauaL^\text{eq}$ are equilibrium values from \eqref{eq:results_equilibriumpoint}, and $\bm{D}_{\bm{zd}}$ was adjusted by trial and error. The chosen upper bound for the $\mathcal{H}_\infty$ norm is given by $\|\bm{\Psi}^{(\infty)}_{\bm{d}\bm{z}}\|_\infty^2 < 81$, for which $\|\bm{\Psi}^{(2)}_{\bm{d}\bm{z}}\|_2^2 < 2.7643$. Moreover, due to infeasibility issues in solving the optimization problem for the uncertain system, when considering the whole range of desired accelerations for the load, these were assumed to be bounded by $\ddot{x}^\text{tr}(t) \in [-0.5,\; 0.5]$, $\ddot{y}^\text{tr}(t) \in [-0.5,\; 0.5]$, and $\ddot{z}^\text{tr}(t) \in [-0.3,\; 0.3]$.


For the zonotopic state estimator, the chosen initial zonotope $\bar{\mathbb{X}}_0$ is a box centered at the desired trajectory, given by $\bar{\mathbb{X}}_0 = [(\bm{\xi}^\text{tr}_0)^T \; \zeros{1}{20}]^T \oplus \bm{G}_{\bar{x}_0} \ubox^{23}$, with generator matrix \textred{$\bm{G}_{\bar{x}_0} = \text{diag} \left( 0.5 {\cdot} \ones{3}{1}, \frac{\pi}{4} {\cdot} \ones{7}{1}, \ones{13}{1} \right)$}.
%
To prevent its complexity from increasing indefinitely, the order of the estimated zonotope $\hat{\mathbb{X}}_k$ was limited to 75 times its dimension. Moreover, the zonotopes $\bar{\mathbb{W}}$ and $\bar{\mathbb{V}}$ were adjusted as $\bar{\mathbb{W}} =  \zeros{23}{1} \oplus \bm{G}_{\bar{w}} \ubox^{23}$ and $\bar{\mathbb{V}} = \bm{\pi}(\bm{x}^\text{eq}) \oplus \bm{G}_{\bar{v}} \ubox^{16}$, with generator matrices\footnote{The zonotope $\bar{\mathbb{V}}$ was chosen using the noise bounds from Table \ref{tab:results_sensorparameters} as starting point, then adjusted empirically in order to accommodate the unmodeled dynamics due to linearization.}
\begin{align}
	\textred{\bm{G}_{\bar{w}}} = & ~ \text{diag} (10^{-4} {\cdot} \ones{8}{1}, 1.5 {\cdot} 10^{-4}{\cdot} \ones{2}{1}, 10^{-4} {\cdot} \ones{3}{1}, 0.01 {\cdot} \ones{3}{1}, 0.05 {\cdot} \ones{2}{1}, 10^{-4} {\cdot} \ones{2}{1}, \\ & \quad \quad ~~ 0.01 {\cdot} \ones{3}{1}) \text{,} \nonumber\\
	\bm{G}_{\bar{v}} = & ~ \text{diag} (0.18 {\cdot} \ones{2}{1}, 0.612, 3.1416\ten{-3},3.1416\ten{-3}, 0.03, 19.872 {\cdot} 10^{-3}, 19.872 {\cdot} 10^{-3}, \\ & \quad \quad ~~ 0.24, 0.006 {\cdot} \ones{2}{1}, 0.06 ,  6.8067 {\cdot} 10^{-3} {\cdot} \ones{2}{1}, 0.6093 {\cdot} \ones{2}{1}) \text{.} \nonumber
\end{align}

\textred{For comparison purposes, the initial states of the Kalman filter are given by $\Delta \hat{\bm{x}}_{0|0} = [(\bm{\xi}^\text{tr}_0)^T \; \zeros{1}{20}]^T$, and the covariance matrices are diagonal, computed based on the generator matrices of the zonotopic state estimator, by regarding radius as three times the corresponding standard deviation, as $\bm{P}^{\bm{\nu}}_{0|0}(i,i) = (\bm{G}_{\bar{x}_0}(i,i)/3)^2$, $\bm{P}^{\bm{w}}(i,i) = (\bm{G}_{\bar{w}}(i,i)/3)^2$ and $\bm{P}^{\bm{v}}(i,i) = (\bm{G}_{\bar{v}}(i,i)/3)^2$, from which $\bm{P}^{\bm{v}[k]}$ is formed using all $i \in \setI_k$ (see Section \ref{sec:estimation_LKF})}.


\subsection{Experiment results and discussion}

The trajectories performed by the UAV and the load are shown in Figure \ref{fig:results_performedTrajectories}.\footnote{See accompanying video.} The path tracking of the load was performed with success, from take-off to landing, using the proposed zonotopic state estimator and the mixed $\mathcal{H}_2/\mathcal{H}_\infty$ controller. Moreover, despite temporary deviations from the desired trajectory, the disturbances affecting the load were rejected in steady-state, as shown by the evolution of the tracking error, depicted in Figure \ref{fig:results_trackingError}. Even under different sampling times and non-Gaussian measurement noise (see Table \ref{tab:results_sensorparameters}), the proposed zonotopic state estimator was capable of providing the system states to the mixed $\mathcal{H}_2/\mathcal{H}_\infty$ controller. 


\begin{figure}[!htb]
	\begin{scriptsize}
			\centering{
			\def\svgwidth{0.8\textwidth}
			\import{Figures_05Results/}{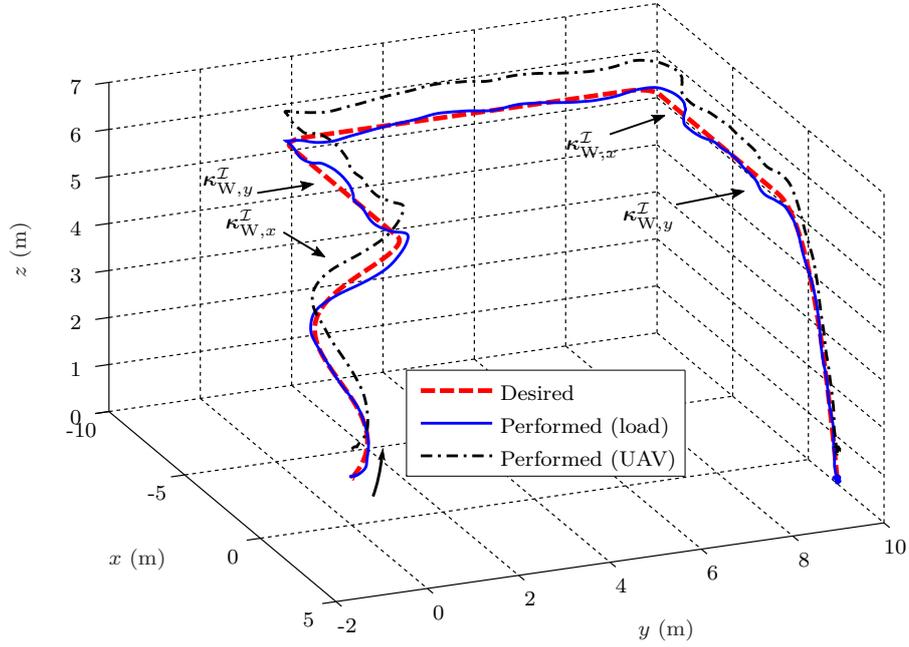}
		\caption{\textred{Trajectories performed by the UAV and the load using the ZSE in the ProVANT simulator.}}\label{fig:results_performedTrajectories}}
	\end{scriptsize}
\end{figure}

\begin{figure}[!htb]
	\begin{scriptsize}
		\centering{
			\def\svgwidth{0.8\textwidth}
			\import{Figures_05Results/}{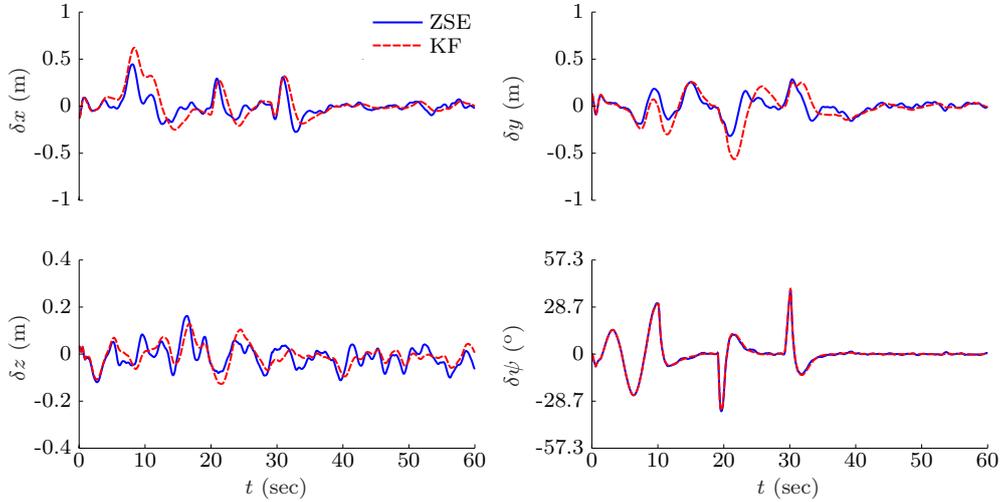}
			\caption{Tracking error of the regulated variables.}\label{fig:results_trackingError}}
	\end{scriptsize}
\end{figure}

Figure \ref{fig:results_remainingDOF} shows the time evolution of the remaining degrees of freedom of the system, which were kept stable as the trajectory was performed by the load. Through these results, one can conclude that the UAV remained stable as well, since the aircraft's behavior with respect to the inertial frame is described implicitly by these variables. Moreover, the designed mixed $\mathcal{H}_2/\mathcal{H}_\infty$ controller was able to stabilize the aircraft without the need of a cascade control structure. Figure \ref{fig:results_inputs} shows the actuator signals generated by the mixed $\mathcal{H}_2/\mathcal{H}_\infty$ control law. Despite the existing noise, the inertial properties of the aircraft actuators would straightforwardly attenuate such noise in a physical setup.

\begin{figure}[!htb]
	\begin{scriptsize}
		\centering{
			\def\svgwidth{0.85\textwidth}
			\import{Figures_05Results/}{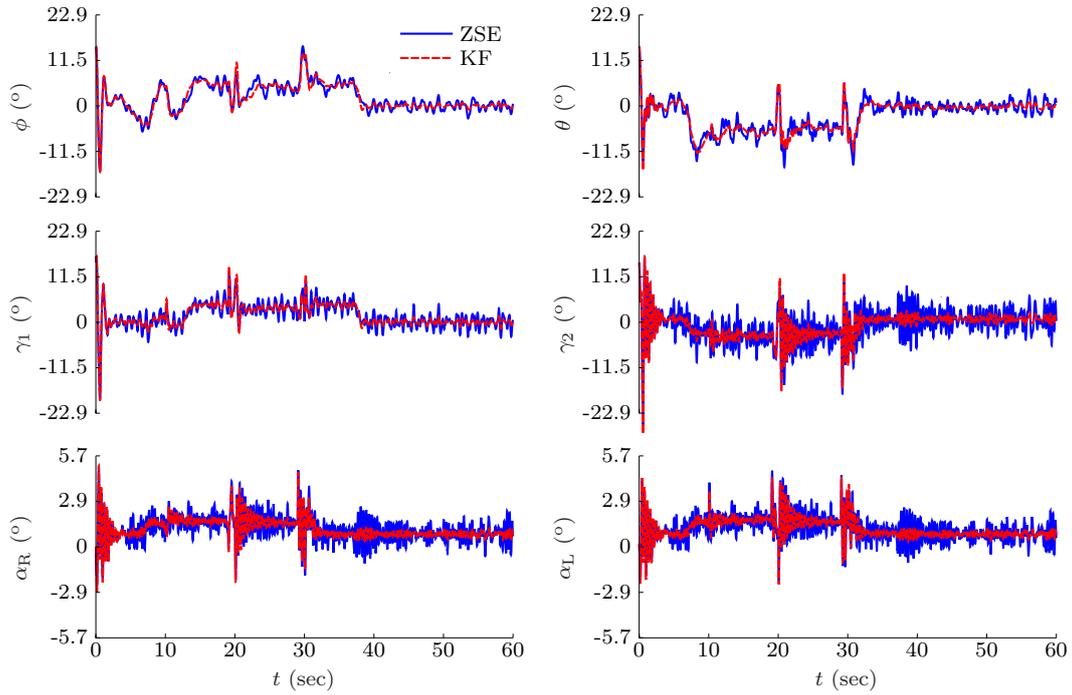}
			\caption{Time evolution of the remaining degrees of freedom.}\label{fig:results_remainingDOF}}
	\end{scriptsize}
\end{figure}

Figure \ref{fig:results_ZSE_estimationGenCoord} shows the estimation error of the generalized coordinates using the proposed zonotopic state estimator. As expected from Algorithm \ref{alg:estimation_ZSEtiltrotor}, the estimation error remained inside the associated confidence limits\footnote{These limits were obtained through the interval hull of the estimated zonotope \cite{Kuhn1998}.}, which implies that the real states remained inside the estimated zonotope during the experiment. Thus, the proposed zonotopic state estimator was able to provide the system states with consistency. Moreover, some patterns can be noted in the confidence limits, which were generated by the different sampling times of the available sensors. These also appear in the time evolution of the Frobenius norm of the generator matrix, depicted in Figure \ref{fig:results_estimationFrobNorm}. 

\begin{figure}[!htb]
	\begin{scriptsize}
		\centering{
			\def\svgwidth{0.8\textwidth}
			\import{Figures_05Results/}{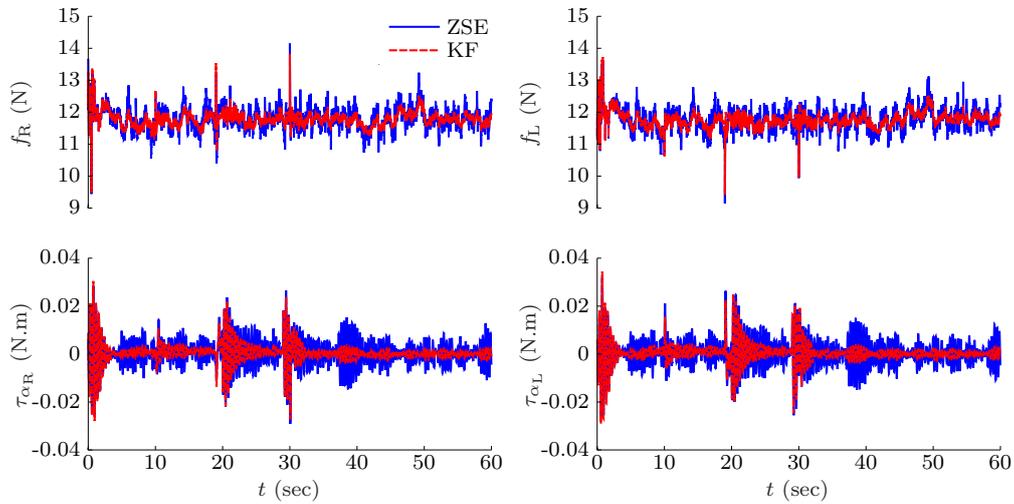}
			\caption{Control signals applied to the aircraft's actuators, generated by the mixed $\mathcal{H}_2/\mathcal{H}_\infty$ controller.}\label{fig:results_inputs}}
	\end{scriptsize}
\end{figure}

\begin{figure}[!htb]
	\begin{footnotesize}
		\centering{
			\def\svgwidth{0.75\textwidth}
			\import{Figures_05Results/}{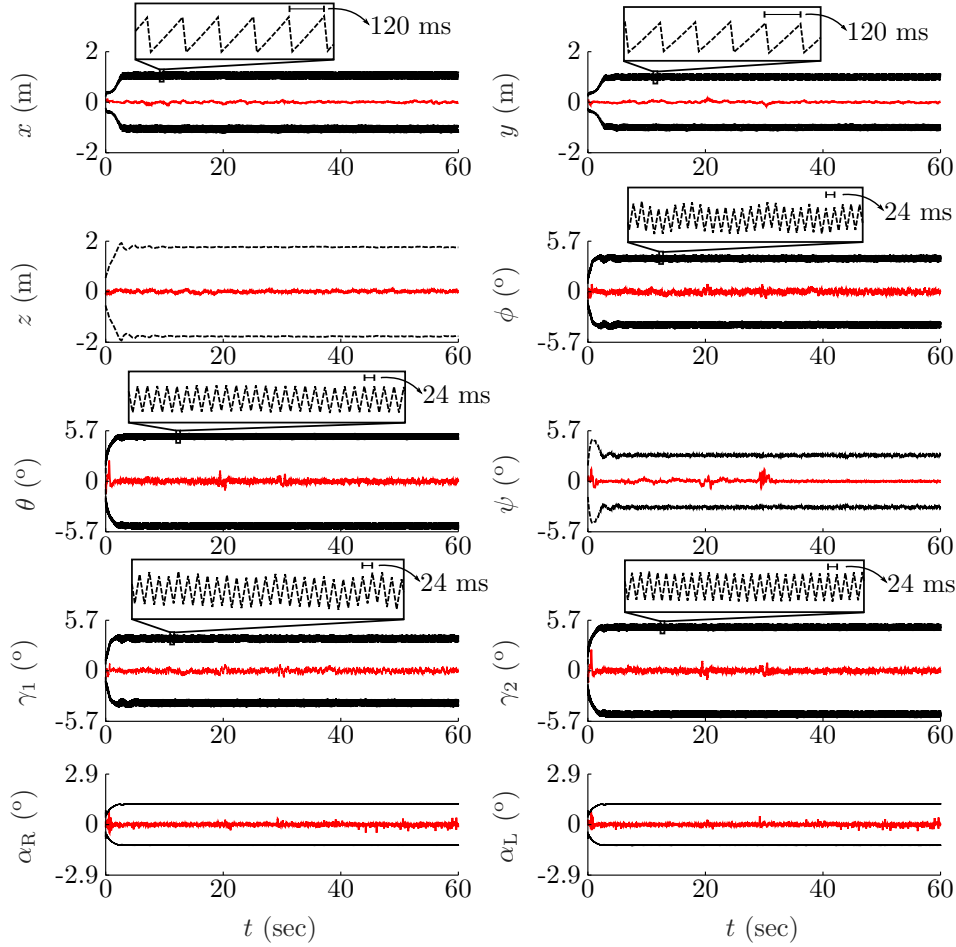}
			\caption{Estimation error of the generalized coordinates using the zonotopic state estimator. Solid lines denote estimation error, while dashed lines denote confidence limits.}\label{fig:results_ZSE_estimationGenCoord}}
	\end{footnotesize}
\end{figure}

\begin{figure}[!htb]
	\begin{footnotesize}
		\centering{
			\def\svgwidth{0.85\textwidth}
			\import{Figures_05Results/}{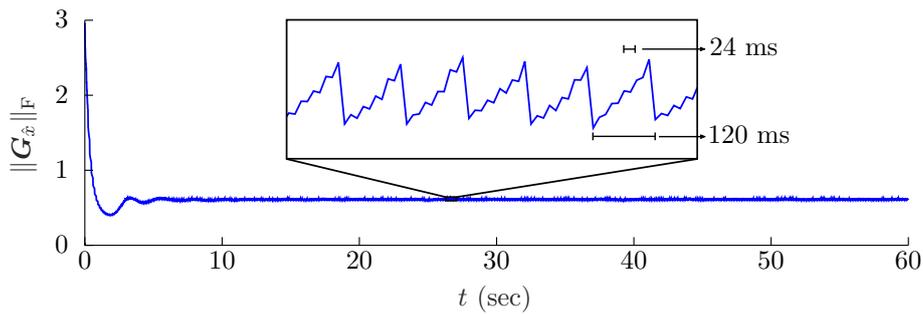}
			\caption{Frobenius norm of the estimated zonotope's generator matrix.}\label{fig:results_estimationFrobNorm}}
	\end{footnotesize}
\end{figure}

Regarding the results obtained using the Kalman filter, the associated tracking error is shown in Figure \ref{fig:results_trackingError}. Furthermore, the time evolution of the remaining degrees of freedom, as well as the associated control signals, are also shown in Figures \ref{fig:results_remainingDOF} and \ref{fig:results_inputs}, respectively. The estimation error is shown in Figure \ref{fig:results_LKF_estimationGenCoord}, in which the patterns due to different sampling times also appear. It is apparent that the time evolution of some variables were less noisy in comparison with the zonotopic state estimator experiment. This result is expected, since the Kalman filter provides minimum-variance estimation. 

\textred{For comparison purposes, the Root Mean Square Error (RMSE) was computed for the tracking error and estimation error of the regulated variables for both estimators, and is shown in Table \ref{tab:RMSE}. In terms of the RMSE, the zonotopic state estimator provided considerably better estimates for $x$, $y$, while the Kalman filter provided slightly better estimates for the altitude $z$ and yaw angle $\psi$. Nevertheless, despite the tighter confidence limits (see Figure \ref{fig:results_LKF_estimationGenCoord}), the Kalman filter was unable to estimate all the system states with consistency, depicted by the fact that the estimation error for most variables exceeded the confidence limits of the filter in several points during the experiment. Finally, note that the most critical deviations of the Kalman filter occurred during the initial transient due to the non-equilibrium initial conditions, and also during rapid changes in the trajectory tracking direction, which demonstrate its inability to cope with the nonlinearities of the system effectively.}

\begin{table}[!htb]
	\centering
	\caption{\textred{Root mean square error of the regulated variables.}}
	\begin{tabular}{c | c c c | c c c}
		\hline
		\multirow{2}{*}{Variable} & \multicolumn{3}{c}{Estimation error} & \multicolumn{3}{c}{Tracking error} \\ \cline{2-7}
		& ZSE & KF & ZSE/KF & ZSE & KF & ZSE/KF\\
		\hline
		$x$ & 0.0300 & 0.0679 & 44.15\% & 0.1064 & 0.1546 & 68.82\% \\
		$y$ & 0.0315 & 0.0697 & 45.13\% & 0.1011 & 0.1504 & 67.23\% \\
		$z$ & 0.0342 & 0.0289 & 118.27\% & 0.0502 & 0.0466 & 107.71\% \\
		$\psi$ & 0.0030 & 0.0027 & 112.07\% & 0.1547 & 0.1548 & 99.95\% \\
		\hline
	\end{tabular}
	\label{tab:RMSE}
\end{table}



\begin{figure}[!htb]
	\begin{footnotesize}
		\centering{
			\def\svgwidth{0.8\textwidth}
			\import{Figures_05Results/}{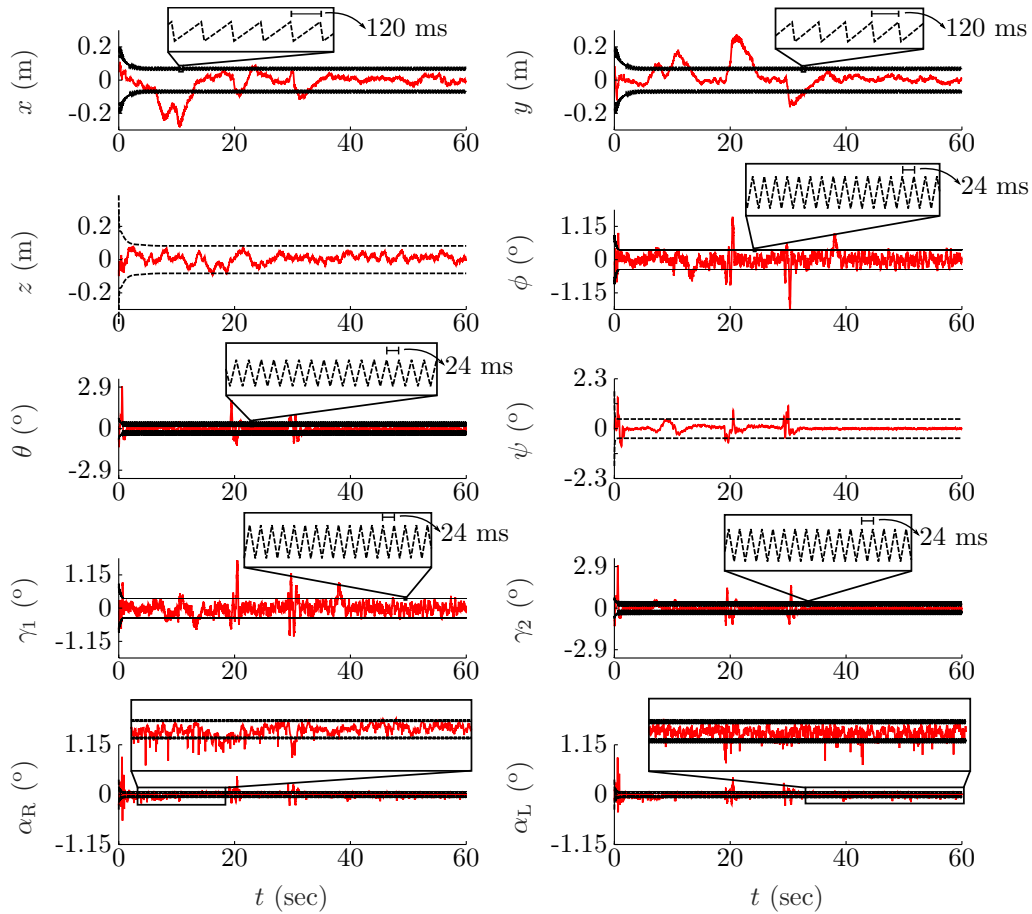}
			\caption{Estimation error of the generalized coordinates using the Kalman filter. Solid lines denote estimation error, while dashed lines denote confidence limits.}\label{fig:results_LKF_estimationGenCoord}}
	\end{footnotesize}
\end{figure}


\section{Conclusions and future work} \label{sec:conclusions}

This paper dealt with the problem of trajectory tracking of a suspended load using a tilt-rotor UAV as mobile platform. In order to solve the challenge, a detailed model of the system was developed from the perspective of the load, with a reduced number of assumptions in comparison to previous works. The position and orientation of the load were chosen as degrees of freedom of the system, yielding a nonlinear state-space representation with these variables among the system states, with the UAV's position and orientation being described with respect to the load. A discrete-time state-feedback mixed $\mathcal{H}_2/\mathcal{H}_\infty$ control strategy was designed for path tracking of the suspended load, with disturbance compensation and guaranteed transient response properties, with an enlarged domain of attraction, by considering the desired accelerations for the load in the control design through an uncertain linear parameter-varying framework.

Considering a recurrent scenario in which available information is provided by sensors embedded at the UAV, a set-membership zonotopic state estimator was designed, to provide the load's position and orientation, which was able to cope with different sampling times and unknown-but-bounded uncertainties. To demonstrate and compare the performance of the proposed state estimator, a Kalman filter was also designed. The performance of the proposed strategies were demonstrated through numerical experiments, performed in the ProVANT simulation environment, a platform based on the Gazebo simulator and on a CAD 3D model of the system. In contrast to the Kalman filter, the zonotopic state estimator was able to provide the system states with consistency. The path tracking was performed with success by the suspended load, from take-off to landing, using the designed mixed $\mathcal{H}_2/\mathcal{H}_\infty$ control strategy with the proposed control structure.

This work dealt with hover flights, with the tilt-rotor UAV in helicopter flight-mode. For future works, the inclusion of aerodynamic
surfaces into the proposed model is straightforward, and an enlarged flight envelope will be explored. Another future research step consists in validating the designed strategies in an experimental setup. 

\bibliography{thesis_bibliography}

\end{document}